\newtheorem{theorem}{Theorem}
\newtheorem{lemma}{Lemma}
\newtheorem{corollary}{Corollary}
\newtheorem{example}{Example}
\newcommand{\defeq}{\stackrel{\Delta}{=}}
\begin{document}
\title{Ergodic Capacity Analysis of Amplify-and-Forward MIMO Dual-Hop Systems}
\author{
\begin{minipage}{0.98\columnwidth}
\vspace*{1cm}
\begin{center}
\authorblockN{Shi Jin\authorrefmark{1}\authorrefmark{3},
              Matthew R. McKay\authorrefmark{2},
              Caijun Zhong\authorrefmark{1},
              and Kai-Kit Wong\authorrefmark{1}\\
\vspace*{0.5cm} \small{
\authorblockA{\authorrefmark{1}Adastral Park Research Campus, University College London, United
Kingdom}\\
\authorblockA{\authorrefmark{2}Dept.\ of Electronic and Computer Engineering, Hong Kong University of Science \& Technology, Hong Kong} \\
\authorblockA{\authorrefmark{3}National Mobile Communications Research Laboratory,
Southeast University, Nanjing, China} } }
\end{center}
\end{minipage}
} \IEEEaftertitletext{\vspace{-0.75\baselineskip}}
\maketitle \vspace*{1cm}
\begin{abstract}
This paper presents an analytical characterization of the ergodic capacity of amplify-and-forward (AF) MIMO
dual-hop relay channels, assuming that the channel state information is available at the
destination terminal only. In contrast to prior results, our expressions apply for arbitrary
numbers of antennas and arbitrary relay configurations. We derive an expression for the exact ergodic capacity,
simplified closed-form expressions for the high SNR regime, and tight closed-form upper and lower bounds.
These results are made possible to employing recent tools from finite-dimensional random matrix theory to
derive new closed-form expressions for various statistical properties of the equivalent AF MIMO dual-hop relay
channel, such as the distribution of an unordered eigenvalue and certain random determinant properties. Based on
the analytical capacity expressions, we investigate the impact of the system and channel characteristics, such
as the antenna configuration and the relay power gain. We also demonstrate a number of interesting relationships
between the dual-hop AF MIMO relay channel and conventional point-to-point MIMO channels in various asymptotic regimes. 
\end{abstract}
\begin{keywords}
Multiple-input multiple-output (MIMO), amplify-and-forward (AF),
ergodic capacity.
\end{keywords}

\vskip 2ex \vspace*{0ex}
\begin{tabular}{ll}
{Corresponding Author\;:} {Shi Jin}\\
{Adastral Park Research Campus, University College London}\\
{Martlesham Heath, IP5 3RE, United
Kingdom}  \\
{E-mail: shijin@adastral.ucl.ac.uk, jinshi@seu.edu.cn} \\
\end{tabular}

\newpage

\section{Introduction}\label{sec:introduction}

The relay channel, first introduced in \cite{Meulen71,Cover79}, has been considered in recent
years as a means to improve the coverage and reliability, and to reduce the
interference in wireless
networks\cite{Laneman03,Sendonaris03A,Sendonaris03B,Laneman04,Gastpar05,Gastpar02,Kramer05,Madsen05,Farhadi08}.
Generally speaking, there are three main types of relaying
protocols: decode-and-forward (DF), compress-and-forward (CF), and
amplify-and-forward (AF). Of these protocols, the AF approach is the simplest
scheme, in which case the sources transmit messages to
the relays, which then simply scale their received signals according
to a power constraint and forward the scaled signals onto the
destinations.

Point-to-point multiple-input multiple-output (MIMO) communication systems have also been receiving considerable
attention in the last decade due to their potential for providing linear capacity growth and significant performance improvements over conventional
single-input single-output (SISO) systems \cite{Telatar99,Foschini98}. Recently, the application of MIMO
techniques in conjunction with relaying protocols has become a topic
of increasing interest as a means of achieving further performance improvements in wireless networks \cite{Borade03,Wittneben03,Nabar04,Wang05,Borade07}

In this paper we investigate the ergodic capacity of AF MIMO
dual-hop systems. This problem has been recently considered in
various settings.  In \cite{Bolcskei06}, the ergodic capacity of AF
MIMO dual-hop systems was examined for a large numbers of relay antennas $K$,
and was shown to scale with $\log K $. Asymptotic ergodic capacity
results were also obtained in \cite{Wagner07} by means of the
replica method from statistical physics. In
\cite{Morgenshtern06,Morgenshtern07}, the asymptotic network
capacity was examined as the number of source/desination antennas $M$ and relay antennas $K$ grew
large with a fixed-ratio $ K/M \to \beta$ using tools
from large-dimensional random matrix theory.  It was demonstrated
that for $\beta  \to \infty$, the relay network behaved equivalently
to a point-to-point MIMO link. The results of
\cite{Morgenshtern06,Morgenshtern07} were further elaborated in
\cite{Yeh07} where a general asymptotic ergodic capacity formula was
presented for multi-level AF relay networks. Recently, the asymptotic
mean and variance of the mutual information in correlated Rayleigh fading was studied in \cite{Wagner08}. All of these prior
capacity results, however, were derived by employing \emph{asymptotic methods} (i.e.\ by letting the system dimensions grow to infinity).  To the best of our knowledge, there appear to be no analytical ergodic capacity results which apply for AF MIMO dual hop systems with arbitrary finite antenna and relaying configurations.

In this paper we derive new exact analytical results, simple closed-form high SNR expressions, and tight
closed-form upper and lower bounds on the ergodic capacity of AF
MIMO dual-hop systems.  In contrast to previous results, our expressions apply for any finite number of
MIMO antennas and for arbitrary numbers of relay antennas. The results are based heavily on the theory of finite-dimensional random matrices. In
particular, our exact ergodic capacity results are based on a new exact expression which we derive for the exact unordered eigenvalue distribution of a certain product of finite-dimensional random matrices, corresponding to the equivalent cascaded AF MIMO relay channel. In prior work \cite{Morgenshtern07}, an asymptotic expression was obtained for this unordered eigenvalue density. However, that asymptotic result, which serves as an approximation for finite-dimensional systems, was rather complicated and required the numerical computation of a certain fixed-point equation. Our result, in contrast, is a simple exact closed-form expression, involving only standard functions which can be easily and efficiently evaluated.  In addition to the unordered eigenvalue distribution, we also present a number of new random determinant properties (such as the expected characteristic polynomial) of the equivalent cascaded AF MIMO relay channel. These results are subsequently employed to derive simplified closed-form expressions for the ergodic capacity in the high SNR regime, as well as tight upper and lower bounds. Again, these random determinant properties are exact closed-form analytical results which apply for arbitrary antenna and relaying configurations, and are expressed in terms of standard functions which are easy to compute.  As a by-product of these derivations, we also present some new \emph{unified} expressions for the expected characteristic polynomial and expected log-determinant of semi-correlated Wishart and pseudo-Wishart random matrices.

Based on our analytical expressions, we investigate the effect of the different system  and channel parameters on the ergodic capacity.  For example, we show that when either the number of source, destination, or relay antennas, or the the relay gain grows large, the AF MIMO dual-hop capacity admits a simple interpretation in terms of the ergodic capacity of conventional single-hop single-user MIMO channels.  In the high SNR regime, we present simple closed-form expressions for the key performance parameters---the high SNR slope and the high SNR power offset---which reveal the intuitive result that the multiplexing gain is determined by the minimum of the number of antennas at the source, destination, and relay, whereas the power offset is a more intricate function which depends on all three. For example, we show that by adding more antennas at the destination, whilst keeping the number of source and destination antennas fixed, may lead to a significant improvement in the high SNR power offset; however the relative gain becomes less significant as the initial number of destination antennas is increased. Our analytical expressions also reveal the interesting result that the ergodic capacity of AF MIMO dual-hop channels is upper bounded by the capacity of a SISO additive white Gaussian noise (AWGN) channel.

The remainder of this paper is organized as follows. Section
\ref{sec:model} presents the AF MIMO dual-hop system model under consideration.
Section \ref{sec:prelimibaries} presents our new random matrix
theory contributions, which are subsequently used to derive the exact, high SNR, and
upper and lower bound expressions for the ergodic capacity in Sections
\ref{sec:application} and \ref{sec:applicationbounds}. Section \ref{sec:conclusion} summarizes the main
results of the paper. All of the main mathematical proofs have been
placed in the Appendices.


\section{System Model}\label{sec:model}

We employ the same AF MIMO dual-hop system model as in
\cite{Morgenshtern06,Morgenshtern07}. In particular, suppose that
there are $n_s$ source antennas, $n_r$ relay antennas and $n_d$
destination antennas, which we represent by the $3$-tuple $\left(
{n_s ,n_r ,n_d } \right)$. All terminals operate in half-duplex
mode, and as such communication occurs from source to relay and from
relay to destination in two separate time slots. It is assumed that
there is no direct communication link between the source and
destination, as sketched in Fig. \ref{fig:fig0}. The end-to-end
input-output relation of this channel is then given by
\begin{align}\label{eq:signalmodel}
{\bf{y}} = {\bf{H}}_2 {\bf{FH}}_1 {\bf{s}} + {\bf{H}}_2
{\bf{Fn}}_{n_r} + {\bf{n}}_{n_d}
\end{align}
where $ {\bf{s}}$ is the transmit symbol vector, $ {\bf{n}}_{n_r} $
and $ {\bf{n}}_{n_d} $ are the relay and destination noise vectors
respectively, $ {\bf{F}} =\sqrt {\alpha /\left( {n_r \left( {1 +
\rho } \right)} \right)} \mathbf{I}_{n_r}  $ ($\alpha$ corresponds to the overall power
gain of the relay terminal) is the forwarding matrix at the relay
terminal which simply forwards scaled versions of its received
signals, and $ {\bf{H}}_1  \in \mathcal{C}^{n_r  \times n_s}$ and $
{\bf{H}}_2  \in \mathcal{C}^{n_d  \times n_r}$ denote the channel
matrices of the first hop and the second hop respectively, where
their entries are assumed to be zero mean circular symmetric complex
Gaussian (ZMCSCG) random variables of unit variance. The input
symbols are chosen to be independent and identically distributed (i.i.d.) ZMCSCGs and the per antenna power is
assumed to be $ \rho /{n_s} $, i.e., $ E\left\{ {{\bf{ss}}^\dag }
\right\} = \left( {\rho /{n_s}} \right){\bf{I}}_{n_s} $. The
additive noise at the relay and destination are assumed to be white
in both space and time and are modeled as ZMCSCG with unit variance,
i.e., $ E\left\{ {{\bf{n}}_{n_r} {\bf{n}}_{n_r}^\dag  } \right\} =
{\bf{I}}_{n_r}$ and $ E\left\{ {{\bf{n}}_{n_d} {\bf{n}}_{n_d}^\dag }
\right\} = {\bf{I}}_{n_d}$.  We assume that the source and relay
have no channel state information (CSI), and that the destination
has perfect knowledge of both $\mathbf{H}_2$ and
$\mathbf{H}_2\mathbf{H}_1$.

\begin{figure}
\centering
\includegraphics[scale=0.5]{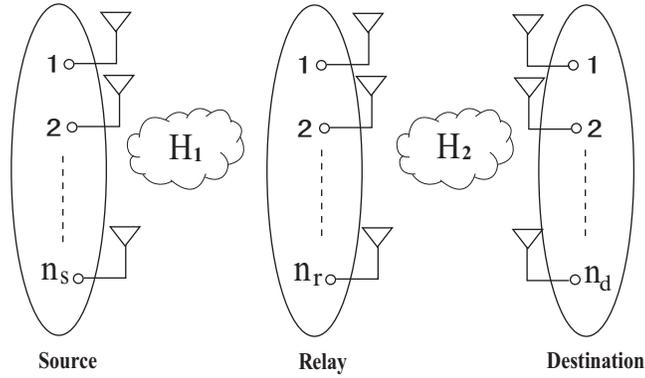}
\captionstyle{mystyle2}\caption{Schematic diagram of a MIMO dual-hop
system, where there is no direct link between source and
destination.} \label{fig:fig0}
\end{figure}

The ergodic capacity (in b/s/Hz) of the AF MIMO dual-hop
system described above can be written as \cite{Morgenshtern06,Morgenshtern07,Wagner07}
\begin{align}\label{eq:capa1}
C = \frac{1}{2}E\left\{ {\log_2 \det \left( {{\bf{I}} + {\bf{R}}_s
{\bf{R}}_n^{ - 1} } \right)} \right\}
\end{align}
where ${\bf{R}}_s$ and ${\bf{R}}_n$ are $n_d \times n_d$ matrices
given by
\begin{align}\label{eq:rs}
{\bf{R}}_s  = \frac{{\rho a}}{{n_s }}{\bf{H}}_2 {\bf{H}}_1
{\bf{H}}_1^\dag  {\bf{H}}_2^\dag
\end{align}
and
\begin{align}\label{eq:rn}
{\bf{R}}_n  = {\bf{I}}_{n_d}  + a{\bf{H}}_2 {\bf{H}}_2^\dag
\end{align}
respectively, with
\begin{align}  \label{eq:aDefn}
a = \frac{\alpha }{{n_r \left( {1 + \rho } \right)}}.
\end{align}
Using the identity
\begin{align} \label{eq:DetProp}
\det \left( {{\bf{I}} + {\bf{AB}}} \right) = \det \left( {{\bf{I}} +
{\bf{BA}}} \right),
\end{align}
(\ref{eq:capa1}) can be alternatively expressed as follows
\begin{align}\label{eq:ergodic1}
C\left( \rho  \right) = \frac{1}{2}E\left\{ {\log_2 \det \left(
{{\bf{I}}_{n_s}  + \frac{{\rho a}}{{n_s }}{\bf{H}}_1^\dag
{\bf{H}}_2^\dag {\bf{R}}_n ^{ - 1} {\bf{H}}_2 {\bf{H}}_1 } \right)}
\right\} \; .
\end{align}
Next, we utilize the singular value decomposition to write $
{\bf{H}}_2  = {\bf{U}}_2 {\bf{D}}_2 {\bf{V}}_2^\dag$, where
\begin{align}
{\bf{D}}_2  = {\rm diag}\left\{ {\lambda _1 , \ldots ,\lambda _{\min
\left( {n_d ,n_r } \right)} } \right\}
\end{align}
is an $n_d \times n_r$ diagonal matrix, with diagonal elements
pertaining to the increasing ordered singular values, and $
{\bf{U}}_2 \in \mathcal{C}^{n_d \times n_d} $ and ${\bf{V}}_2  \in
\mathcal{C}^{n_r \times n_r}$ are unitary matrices containing the
respective eigenvectors.  Since $\mathbf{H}_1$ is invariant under
left and right unitary transformation, the ergodic capacity in
(\ref{eq:ergodic1}) can be further simplified as
\begin{align}\label{eq:ergodic2}
C\left( \rho  \right) = \frac{1}{2}E\left\{ {\log_2 \det \left(
{{\bf{I}}_{n_r}  + \frac{{\rho a}}{{n_s }}{\bf{H}}_1^\dag {\bf{\Psi
H}}_1 } \right)} \right\}
\end{align}
where
\begin{align}\label{eq:psiexp}
{\bf{\Psi }} = \left\{ {\begin{array}{*{20}c}
   { {\rm diag}\left\{ {\frac{{\lambda _1^2 }}{{1 + a\lambda _1^2 }}, \ldots ,\frac{{\lambda _{n_r}^2 }}{{1 + a\lambda _{n_r}^2 }}} \right\},} & {n_r \le n_d},  \\
   { {\rm diag}\left\{ {\frac{{\lambda _1^2 }}{{1 + a\lambda _1^2 }}, \ldots ,\frac{{\lambda _{n_d}^2 }}{{1 + a\lambda _{n_d}^2 }},\underbrace {0, \ldots ,0}_{n_r - n_d}} \right\},} & {n_r > n_d.}  \\
\end{array}} \right.
\end{align}
It is then easily established that
\begin{align}\label{eq:ergcapa}
C\left( \rho  \right) = \frac{1} {2}E\left\{ {\log _2 \det \left(
{{\mathbf{I}}_{n_s }  + \frac{{\rho a}}{{n_s }}{\mathbf{\tilde
H}}_1^\dag  {\mathbf{L\tilde H}}_1 } \right)} \right\}
\end{align}
where ${\mathbf{\tilde H}}_1^\dag   \sim \mathcal{CN}_{n_s ,q}
\left( {{\mathbf{0}},{\mathbf{I}}_{n_s }  \otimes {\mathbf{I}}_q }
\right)$, with $ q = \min \left( {n_d ,n_r } \right)$, and
\begin{align}\label{eq:Ldefinition}
{\mathbf{L}} = {\rm diag}\left\{ {\lambda _i^2 /\left( {1 + a\lambda
_i^2 } \right)} \right\}_{i = 1}^q  \; .
\end{align}
Equivalently, we can now write
\begin{align}\label{eq:ergodic3}
C\left( \rho  \right) = \frac{ s}{2}\int_0^\infty  {\log_2 \left( {1
+ \frac{{\rho a}}{n_s}\lambda } \right)f_\lambda  \left( \lambda
\right)d\lambda }
\end{align}
where $ s = \min \left( {n_s ,q} \right)$, $\lambda$ denotes an
unordered eigenvalue of the random matrix ${\mathbf{\tilde
H}}_1^\dag  {\mathbf{L\tilde H}}_1 $, and $f_\lambda  \left( \cdot
\right)$ denotes the corresponding probability density function (p.d.f.). Although the distribution
of $\lambda$ has been well-studied in the asymptotic antenna regime
\cite{Morgenshtern06,Morgenshtern07}, currently there are no exact
closed-form expressions for $f_\lambda(\cdot)$ which apply for
arbitrary finite-antenna systems.

\section{New Random Matrix Theory Results}\label{sec:prelimibaries}

In this section, we derive a new exact closed-form expression for
the unordered eigenvalue distribution $f_{\lambda}(\cdot)$ of the
random matrix ${\mathbf{\tilde H}}_1^\dag {\mathbf{L\tilde H}}_1 $.
We also present a number of other key results, such as random
determinant properties, which will prove useful in subsequent
derivations. It is convenient to define the following notation:
$\alpha _i  = \lambda _i^2$, $\beta _i = \lambda _i^2 /\left( {1 +
a\lambda _i^2 } \right)$ $ ( i = 1, \ldots ,q )$, and $ p = \max
\left( {n_d,n_r } \right) $.


To derive the unordered eigenvalue distribution $f_\lambda(\cdot)$,
we first need to establish some key preliminary results, as given
below.


\begin{lemma}\label{unorderedpdf}
The marginal p.d.f.\ of an unordered eigenvalue $\lambda$ of
${\mathbf{\tilde H}}_1^\dag  {\mathbf{L\tilde H}}_1 $, conditioned
on $\mathbf{L}$, is given by
\begin{align}
f_{\left. \lambda  \right|{\bf{L}}} \left( \lambda  \right) =
\frac{1}{{s \prod_{i<j}^q ( \beta_j - \beta_i ) }} {\sum\limits_{l =
1}^q \sum\limits_{k = q - s + 1}^q  {\frac{{\lambda ^{n_s  + k - q -
1} e^{ - \lambda /\beta _l } \beta _l^{q - n_s  - 1} }}{{\Gamma
\left( {n_s - q + k} \right)}} {D}_{l,k} } }
\end{align}
where $ {D}_{l,k}  $ is the $\left( {l,k} \right)$th cofactor of a
$q \times q$ matrix ${{\bf{D}}}$ whose $\left( {m,n} \right)$th
entry is
\begin{align}
\left\{ {{\bf{D}}} \right\}_{m,n}  = \beta _m^{n - 1}.
\end{align}
\end{lemma}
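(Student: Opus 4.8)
The plan is to reduce the statement to a marginalization of the joint eigenvalue density of a correlated (pseudo-)Wishart matrix, and to carry out that marginalization by an Andr\'eief-type integration followed by a cofactor expansion.

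First I would note that $\tilde{\mathbf H}_1^\dagger\mathbf L\tilde{\mathbf H}_1$ (an $n_s\times n_s$ matrix) and $\mathbf L^{1/2}\tilde{\mathbf H}_1\tilde{\mathbf H}_1^\dagger\mathbf L^{1/2}$ (a $q\times q$ Hermitian positive semidefinite matrix) have the same nonzero eigenvalues. Since the rows of $\tilde{\mathbf H}_1^\dagger$ are i.i.d.\ $\mathcal{CN}(\mathbf 0,\mathbf I_q)$, conditioned on $\mathbf L=\mathrm{diag}(\beta_1,\dots,\beta_q)$ the $s=\min(n_s,q)$ nonzero eigenvalues of $\tilde{\mathbf H}_1^\dagger\mathbf L\tilde{\mathbf H}_1$ are distributed as the nonzero eigenvalues of a complex correlated Wishart matrix with $n_s$ degrees of freedom and covariance $\mathbf L$ --- a full-rank central Wishart when $n_s\ge q$ and a singular (pseudo-)Wishart when $n_s<q$. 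In either regime the joint p.d.f.\ of the ordered nonzero eigenvalues $\phi_1,\dots,\phi_s$ is available in closed form and has the determinantal structure
\begin{align}\nonumber
f_{\phi\mid\mathbf L}(\phi_1,\dots,\phi_s)\;=\;\frac{c}{\prod_{i<j}^{q}(\beta_j-\beta_i)}\;\Bigl(\textstyle\prod_{i<j}^{s}(\phi_j-\phi_i)\Bigr)\;\det\bigl[\,\mathbf{\Theta}\,\bigr],
\end{align}
where $\prod_{i<j}^{s}(\phi_j-\phi_i)$ is the Vandermonde in the eigenvalues, $\mathbf{\Theta}$ is a $q\times q$ matrix built from $s$ ``exponential'' rows with entries $\propto\phi_i^{\,\max(n_s-q,0)}e^{-\phi_i/\beta_j}$ and $q-s$ ``monomial'' rows with entries $\beta_j^{\,r}$ (the monomial rows appear only when $n_s<q$), and $c$ is an explicit constant collecting the Wishart normalization $\prod_i\beta_i^{-n_s}$ and the relevant gamma functions. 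This is precisely the classical Khatri-type density when $n_s\ge q$, and its singular-Wishart counterpart when $n_s<q$.

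Next, the required unordered density $f_{\lambda\mid\mathbf L}$ is, by definition, $\tfrac1s$ times the sum over the $s$ order-statistic marginals of $f_{\phi\mid\mathbf L}$; the factor $\tfrac1s$ in the statement is exactly this normalization, and writing that sum out --- with the ordering constraints removed by the antisymmetry of the determinants --- expresses $f_{\lambda\mid\mathbf L}(\lambda)$ as $\tfrac1s$ times an integral of $f_{\phi\mid\mathbf L}(\lambda,\phi_2,\dots,\phi_s)$ over $(0,\infty)^{s-1}$. To evaluate that integral, write $\prod_{i<j}^{s}(\phi_j-\phi_i)$ as a determinant and expand $\det[\mathbf\Theta]$ by a Laplace expansion along its exponential rows; fixing one variable at $\lambda$ and expanding the two determinants along the row and column it occupies isolates, for each choice of a row index $l\in\{1,\dots,q\}$ and a power index $k$, a single scalar factor $\lambda^{\,n_s+k-q-1}e^{-\lambda/\beta_l}$ multiplying a product of two determinants in the remaining $s-1$ variables. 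Integrating those $s-1$ variables out via Andr\'eief's (Cauchy--Binet/Gram) identity replaces each entry by an elementary moment $\int_0^\infty\phi^{\,r}e^{-\phi/\beta_j}\,d\phi=\Gamma(r+1)\,\beta_j^{\,r+1}$. Re-assembling the rows and columns not touched by $\lambda$ then reproduces exactly the $(l,k)$-cofactor of the matrix $\mathbf D$ with $\{\mathbf D\}_{m,n}=\beta_m^{\,n-1}$; the residual powers of $\beta_l$ combine with the $\prod_i\beta_i^{-n_s}$ inside $c$ to give $\beta_l^{\,q-n_s-1}$, and the residual gamma factors collect into $1/\Gamma(n_s-q+k)$. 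Summing over $l$ and over the surviving power indices $k$ yields the claimed double sum; the overall constant is then fixed either by retracing the Wishart normalizing constant or, more cheaply, by requiring $\int_0^\infty f_{\lambda\mid\mathbf L}(\lambda)\,d\lambda=1$.

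The main obstacle is the combinatorial bookkeeping in this last step: controlling the signs, identifying which rows and columns of $\mathbf D$ survive once $\lambda$ has been extracted, and --- most delicately --- showing that the surviving power index $k$ ranges exactly over $q-s+1,\dots,q$, and that all of this goes through \emph{uniformly} in the Wishart case $n_s\ge q$ (where $\mathbf{\Theta}$ has no monomial rows and $s=q$) and the pseudo-Wishart case $n_s<q$ (where $\mathbf{\Theta}$ has $q-n_s$ monomial rows and $s=n_s$). The other ingredients --- the reduction in the first step and the gamma integrals --- are entirely routine.
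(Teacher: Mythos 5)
Your proposal is correct and follows essentially the same route as the paper: for the $q \ge n_s$ (pseudo-Wishart) case the paper starts from the joint unordered-eigenvalue density of \cite{Smith03}, expands the two determinants along the row and column containing the retained eigenvalue, integrates the remaining variables term-by-term via \cite[Lemma 2]{Shin06} (an Andr\'eief-type identity), and reassembles the result into cofactors of $\mathbf D$, exactly as you describe. The only difference is that for $q<n_s$ the paper short-circuits the marginalization by quoting the already-marginalized density of \cite{Alfano04} and merely re-expressing its cofactors, whereas you run the same Andr\'eief machinery uniformly through both cases; that works too, it just re-derives what the citation supplies.
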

\begin{proof}
See Appendix \ref{sec:Proof_unorderedcond}.
\end{proof}
This lemma presents a new expression for the unordered eigenvalue
distribution of a complex semi-correlated central Wishart matrix. In
prior work \cite{Alfano04}, two separate alternative expressions for
this p.d.f.\ were obtained for the specific scenarios $n_s \leq q$
and $n_s > q$ respectively; the latter case\footnote{For this case
($n_s > q$), the random matrix ${\mathbf{\tilde H}}_1^\dag
{\mathbf{L\tilde H}}_1 $ has reduced rank and the corresponding
distribution, conditioned on $\mathbf{L}$, is commonly referred to
as \emph{pseudo}-Wishart \cite{Mallik03}.} being a complicated expression in terms
of determinants with entries depending on the inverse of a certain
Vandermonde matrix. Here, \textit{Lemma \ref{unorderedpdf}} presents
a simpler and more computationally-efficient \emph{unified}
expression, which applies for arbitrary $n_s$ and $q$.


To remove the conditioning on $\mathbf{L}$ in \emph{Lemma
\ref{unorderedpdf}}, it is necessary to establish a closed-form
expression for the joint p.d.f.\ of $\beta _1 , \cdots ,\beta _q$.
We will also require the p.d.f.\ of an arbitrarily selected $\beta
\in \{ \beta _1 , \cdots ,\beta _q \}$. These results are given in
the following lemma.
\begin{lemma}\label{jointpdf_beta}
The joint p.d.f. of $ \{0 \leq \beta _1  < \cdots < \beta _q \leq
1/a \}$ is given by
\begin{align} \label{eq:jointPDFg}
f(\beta_1, \ldots, \beta_q) = \mathcal{K}  \prod_{i<j}^q ( \beta_j -
\beta_i )^2 \prod_{i=1}^{q} \frac{\beta_i^{p - q}
e^{-\frac{\beta_i}{1-a \beta_i}} }{ (1- a \beta_i)^{p + q} }
\end{align}
where
\begin{align}
\mathcal{K} = \left( \prod\nolimits_{i = 1}^q {\Gamma \left( {q - i
+ 1} \right)\Gamma \left( {p - i + 1} \right)} \right)^{-1} \; .
\end{align}
The p.d.f. of an unordered (randomly-selected) $\beta \in \{\beta
_1, \cdots , \beta _q \}$ is given by
\begin{align} \label{eq:Unordered_Beta}
f\left( \beta  \right) = \frac{1}{q}\sum\limits_{i = 0}^{q - 1}
{\sum\limits_{j = 0}^i {\sum\limits_{l = 0}^{2j}
{\frac{{\mathcal{A}\left( {i,j,l,p,q } \right)\beta ^{p - q + l}
}}{{\left( {1 - a\beta } \right)^{p - q + l + 2} }}\exp \left( { -
\frac{\beta }{{1 - a\beta }}} \right)} } }
\end{align}
where
\begin{align}
\mathcal{A}\left( {i,j,l,\kappa _1,\kappa _2 } \right) =
\frac{\left( { - 1} \right)^l \binom{2i-2j}{i-j} \binom{2j + 2\kappa
_1 - 2\kappa _2}{2j - l} ( 2j )!  }{{2^{2i - l} \, ({\kappa _1 -
\kappa _2 + j} )! \, j! }} \; .
%
%
\end{align}
\end{lemma}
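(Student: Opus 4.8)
\emph{Proof strategy.} The plan is to derive both statements from standard facts about the complex central Wishart ensemble, using the single change of variables $\beta_i=\lambda_i^2/(1+a\lambda_i^2)$ together with (for the unordered density) a binomial re-expansion of a squared Laguerre polynomial.

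First I would observe that $\alpha_1<\cdots<\alpha_q$, with $\alpha_i=\lambda_i^2$ and $q=\min(n_d,n_r)$, are exactly the $q$ nonzero eigenvalues of the complex central Wishart matrix $\mathbf{H}_2\mathbf{H}_2^\dag$ (when $n_d\le n_r$) or $\mathbf{H}_2^\dag\mathbf{H}_2$ (when $n_d>n_r$). In either case their ordered joint p.d.f.\ is the classical expression \cite{Telatar99}
\begin{equation}
f(\alpha_1,\ldots,\alpha_q)=\mathcal{K}\prod_{i<j}^{q}(\alpha_j-\alpha_i)^2\prod_{i=1}^{q}\alpha_i^{p-q}e^{-\alpha_i},\qquad 0<\alpha_1<\cdots<\alpha_q,
\end{equation}
with $p=\max(n_d,n_r)$ and $\mathcal{K}$ as in the lemma. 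The map $\beta_i=\alpha_i/(1+a\alpha_i)\Leftrightarrow\alpha_i=\beta_i/(1-a\beta_i)$ is strictly increasing from $(0,\infty)$ onto $(0,1/a)$, so it preserves the ordering and I may transform the density directly. Using
\begin{equation}
\alpha_j-\alpha_i=\frac{\beta_j-\beta_i}{(1-a\beta_i)(1-a\beta_j)},\qquad \frac{d\alpha_i}{d\beta_i}=\frac{1}{(1-a\beta_i)^2},
\end{equation}
and collecting, for each index $i$, the powers of $(1-a\beta_i)$ contributed to the denominator by the squared Vandermonde factor ($2(q-1)$), by $\alpha_i^{p-q}$ ($p-q$), and by the Jacobian ($2$), the total exponent is $2(q-1)+(p-q)+2=p+q$; meanwhile $e^{-\alpha_i}$ becomes $e^{-\beta_i/(1-a\beta_i)}$ and $\mathcal{K}$ is unchanged. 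This reproduces \eqref{eq:jointPDFg} term by term.

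For the unordered density I would start instead from the well-known closed form of the \emph{unordered} eigenvalue p.d.f.\ of the same Wishart matrix (see, e.g., \cite{Alfano04}),
\begin{equation}
f_\alpha(\alpha)=\frac{1}{q}\sum_{k=0}^{q-1}\frac{k!}{(k+p-q)!}\bigl[L_k^{(p-q)}(\alpha)\bigr]^2\alpha^{p-q}e^{-\alpha},
\end{equation}
where $L_k^{(\nu)}$ denotes the associated Laguerre polynomial, and push it through the same substitution, giving $f(\beta)=f_\alpha\bigl(\beta/(1-a\beta)\bigr)(1-a\beta)^{-2}$. What remains is algebraic: substitute $x=\beta/(1-a\beta)$ into $\bigl[L_k^{(p-q)}(x)\bigr]^2$, absorb the factors $(1-a\beta)^{-(p-q)}$ and $(1-a\beta)^{-2}$ into the denominator, and reorganize. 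Writing $L_k^{(\nu)}(x)=\sum_{m}\binom{k+\nu}{k-m}(-x)^m/m!$ and squaring yields a double sum; re-indexing by the total degree $l$ of $x$ and evaluating the resulting inner binomial convolution (by the Chu--Vandermonde identity) collapses one index and produces the central binomial coefficient $\binom{2i-2j}{i-j}$ and the power of two $2^{2i-l}$. After relabelling the Laguerre degree as $i$, what survives is precisely the triple sum $\sum_{i=0}^{q-1}\sum_{j=0}^{i}\sum_{l=0}^{2j}$ with coefficient $\mathcal{A}(i,j,l,p,q)$, and the prefactor $1/q$ is inherited; this is \eqref{eq:Unordered_Beta}. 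As a sanity check I would verify the $q=1$ case, where only the $i=j=l=0$ term contributes and yields $\beta^{p-q}e^{-\beta/(1-a\beta)}/[(p-q)!\,(1-a\beta)^{p-q+2}]$, and confirm $\int_0^{1/a}f(\beta)\,d\beta=1$.

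The routine change of variables handles the joint density; the real obstacle is the combinatorial reduction in the unordered case --- determining which binomial sum telescopes, tracking the powers of $(1-a\beta)$ without error, and pinning down the exact form of $\mathcal{A}$, including the summation limits $0\le j\le i\le q-1$ and $0\le l\le 2j$. That is where essentially all the work lies.
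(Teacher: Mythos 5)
Your treatment of the joint density is exactly the paper's proof: start from the classical ordered Wishart eigenvalue p.d.f., apply the monotone map $\alpha_i=\beta_i/(1-a\beta_i)$ with Jacobian $\prod_i(1-a\beta_i)^{-2}$, and collect the exponent $2(q-1)+(p-q)+2=p+q$; nothing to add there. For the unordered density the paper also proceeds by transforming a known Wishart result, but it short-circuits the combinatorics by citing the already-expanded triple-sum form of the unordered Wishart eigenvalue p.d.f.\ (\cite[Eq.~42]{Shin03}), so that only the substitution $\beta=\lambda/(1+a\lambda)$ and the factor $(1-a\beta)^{-(p-q+l)-2}$ remain to be checked.

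The one place where your argument as written would fail is the mechanism you propose for producing $\mathcal{A}(i,j,l,p,q)$. Directly squaring $L_i^{(p-q)}(x)=\sum_m\binom{i+p-q}{i-m}(-x)^m/m!$ and grouping by total degree $l$ gives, for each $l$, an inner sum of a \emph{product of three} binomial coefficients (a ${}_3F_2(1)$-type sum), which Chu--Vandermonde does not collapse; and if it did collapse an index you would be left with a double sum over $(i,l)$, not the triple sum over $(i,j,l)$ with $l\le 2j\le 2i$ that appears in \eqref{eq:Unordered_Beta}. The extra index $j$, the central binomial coefficient $\binom{2i-2j}{i-j}$, and the factor $2^{2i-l}$ come instead from the Laguerre \emph{linearization} formula, which writes $\bigl[L_i^{(\nu)}(x)\bigr]^2$ as a sum over $j$ of terms proportional to $\binom{2i-2j}{i-j}4^{-i}L_{2j}^{(2\nu)}(2x)$; expanding each $L_{2j}^{(2\nu)}(2x)$ in powers of $x$ then supplies the index $l$, the sign $(-1)^l$, the binomial $\binom{2j+2(p-q)}{2j-l}$, and the $2^l$ that combines with $4^{-i}$ to give $2^{l-2i}$. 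So either invoke that linearization identity explicitly, or simply quote the pre-expanded unordered Wishart p.d.f.\ as the paper does; with that repair the rest of your argument (the change of variables and the sanity checks) goes through.
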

\begin{proof}
See Appendix \ref{sec:Proof_Jointpdf_beta}.
\end{proof}

Having established the results in \emph{Lemma \ref{unorderedpdf}}
and \emph{Lemma \ref{jointpdf_beta}}, we are now ready to derive the
desired unconditional unordered eigenvalue distribution
$f_\lambda(\cdot)$, as given below.

\begin{theorem}\label{final_unorderedpdf}
The marginal p.d.f.\ of an unordered eigenvalue $\lambda$ of
${\mathbf{\tilde H}}_1^\dag  {\mathbf{L\tilde H}}_1 $ is given by
\begin{align}\label{eq:unorderpdffinal}
f_\lambda  \left( \lambda  \right) &= \frac{{2e^{ - \lambda
a}\mathcal{K}}}{s}\sum\limits_{l = 1}^q\sum\limits_{k = q - s + 1}^q
{ {\sum\limits_{i = 0}^{q + n_s - l} } } \frac{ \binom{q + n_s -
l}{i}  {a^{q + n_s  - l - i} }}{{\Gamma \left( {n_s  - q + k}
\right)}}  \lambda ^{\left( {2n_s + 2k + p - q - i - 3} \right)/2}
K_{p + q - i - 1} \left( {2\sqrt \lambda  } \right)
 G_{l,k}
\end{align}
where $K_v \left( \cdot \right)$ is the modified Bessel function of
the second kind and $ G_{l,k} $ is the $\left( {l,k} \right)$th
cofactor of a $q \times q$ matrix $\bf{G}$ whose $\left( {m,n}
\right)$th entry is
\begin{align}
\left\{ {{\bf{G}}} \right\}_{m,n}  = a^{q - p - m - n + 1}\Gamma
\left( {p - q + m + n - 1} \right)U\left( {p - q + m + n - 1,p +
q,1/a} \right) \;
\end{align}
with $ U\left( { \cdot , \cdot , \cdot } \right)$ denoting the
confluent hypergeometric function of the second kind \cite[Eq.
9.211.4]{Gradshteyn00}.
\end{theorem}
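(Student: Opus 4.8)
The plan is to remove the conditioning in \emph{Lemma~\ref{unorderedpdf}} by averaging $f_{\lambda\mid\mathbf{L}}(\lambda)$ against the joint density of $\beta_1<\cdots<\beta_q$ from \emph{Lemma~\ref{jointpdf_beta}}, i.e.\ to evaluate $f_\lambda(\lambda)=\int_{0\le\beta_1<\cdots<\beta_q\le1/a}f_{\lambda\mid\mathbf{L}}(\lambda)\,f(\beta_1,\ldots,\beta_q)\,d\beta_1\cdots d\beta_q$. The first simplification is that the Vandermonde factor $\prod_{i<j}(\beta_j-\beta_i)$ in the denominator of \emph{Lemma~\ref{unorderedpdf}} cancels one of the two copies of that factor in $\prod_{i<j}(\beta_j-\beta_i)^2$ appearing in \emph{Lemma~\ref{jointpdf_beta}}, leaving a single Vandermonde determinant $\prod_{i<j}(\beta_j-\beta_i)=\det[\beta_m^{\,n-1}]_{m,n=1}^q=\det\mathbf{D}$. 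For each fixed $k$ the integrand then contains the sum $\sum_{l=1}^q e^{-\lambda/\beta_l}\beta_l^{\,q-n_s-1}D_{l,k}$, which I would recognise as a Laplace cofactor expansion: it is exactly the determinant of the matrix obtained from $\mathbf{D}$ by replacing its $k$-th column with the vector $\big(e^{-\lambda/\beta_m}\beta_m^{\,q-n_s-1}\big)_{m=1}^q$.

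The core step is Andr\'eief's identity. The integrand is a product of two $q\times q$ determinants $\det[\phi_n(\beta_m)]\,\det[\psi_n(\beta_m)]$ times the symmetric weight $\prod_i w(\beta_i)$, where $w(x)=x^{p-q}e^{-x/(1-ax)}(1-ax)^{-(p+q)}$, $\phi_n(x)=x^{n-1}$ for all $n$, $\psi_n(x)=x^{n-1}$ for $n\ne k$, and $\psi_k(x)=e^{-\lambda/x}x^{q-n_s-1}$. Hence Andr\'eief's identity collapses the integral over $\{0\le\beta_1<\cdots<\beta_q\le1/a\}$ to a single $q\times q$ determinant with $(m,n)$ entry $\int_0^{1/a}\phi_m(x)\psi_n(x)w(x)\,dx$; every entry outside column $k$ is a $\lambda$-independent moment integral, and only column $k$ carries the dependence on $\lambda$. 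Expanding this determinant along column $k$ produces $\sum_{l=1}^q M_{l,k}\,G_{l,k}$, where $M_{l,k}=\int_0^{1/a}x^{l+p-n_s-2}e^{-\lambda/x}e^{-x/(1-ax)}(1-ax)^{-(p+q)}\,dx$ and $G_{l,k}$ is the $(l,k)$ cofactor of the moment matrix, which will be identified with $\mathbf{G}$ once the moment integrals are evaluated.

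What remains is the evaluation of the two one-dimensional integral families, both handled by the substitution $t=x/(1-ax)$ (so $1/x=1/t+a$ and $x:0\to1/a$ maps to $t:0\to\infty$). The moment integral becomes $\{\mathbf{G}\}_{m,n}=\int_0^\infty t^{p-q+m+n-2}(1+at)^{2q-m-n}e^{-t}\,dt$, which matches $a^{q-p-m-n+1}\Gamma(p-q+m+n-1)\,U(p-q+m+n-1,p+q,1/a)$ through the integral representation $U(\alpha,\gamma,z)=\Gamma(\alpha)^{-1}\int_0^\infty e^{-zt}t^{\alpha-1}(1+t)^{\gamma-\alpha-1}\,dt$; this requires $p-q+m+n-1>0$, which holds since $p=\max(n_d,n_r)\ge\min(n_d,n_r)=q$ forces $p-q+m+n-1\ge1$. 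For $M_{l,k}$ the same change of variable factors out $e^{-\lambda a}$ and leaves $e^{-\lambda a}\int_0^\infty t^{l+p-n_s-2}(1+at)^{q+n_s-l}e^{-\lambda/t-t}\,dt$; expanding $(1+at)^{q+n_s-l}$ by the (finite, since $q+n_s-l\ge n_s\ge1$) binomial theorem and applying $\int_0^\infty t^{\nu-1}e^{-\lambda/t-t}\,dt=2\lambda^{\nu/2}K_\nu(2\sqrt\lambda)$ term by term yields a finite Bessel-$K$ sum. Reinstating the prefactors $\mathcal{K}/s$ and $\lambda^{n_s+k-q-1}/\Gamma(n_s-q+k)$ and re-indexing the binomial sum by $i\mapsto q+n_s-l-i$ (which carries the Bessel order $l+p-n_s-1+i$ to $p+q-i-1$ and adjusts the power of $\lambda$ to $(2n_s+2k+p-q-i-3)/2$) reproduces (\ref{eq:unorderpdffinal}).

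The part I expect to be most delicate is the bookkeeping around the Andr\'eief step: correctly casting the cofactor sum as a replaced-column determinant, verifying that the symmetry of the integrand legitimises the passage between the ordered-simplex and full-cube integrals, and then tracking the signs and index shifts through the column-$k$ Laplace expansion so that the resulting cofactors coincide exactly with $G_{l,k}$ and the moment matrix with $\mathbf{G}$. Everything downstream is the routine, if lengthy, evaluation of standard integrals and the final re-indexing.
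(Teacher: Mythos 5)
Your proposal follows essentially the same route as the paper's own proof: the paper likewise rewrites the cofactor sum of Lemma~\ref{unorderedpdf} as a replaced-column determinant, averages against the joint density of Lemma~\ref{jointpdf_beta}, collapses the ordered-simplex integral of the product of determinants via the Andr\'eief-type identity (cited there as \cite[Lemma 2]{Shin06}), evaluates the resulting scalar integrals with the substitution $t=x/(1-ax)$ to obtain the $U$-function entries and the binomially expanded Bessel-$K$ column, and finishes with the Laplace expansion along column $k$. Your index bookkeeping (the map $i\mapsto q+n_s-l-i$ sending the Bessel order to $p+q-i-1$ and the $\lambda$-exponent to $(2n_s+2k+p-q-i-3)/2$) matches the paper's computation, so the proposal is correct and not materially different.
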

\begin{proof}
See Appendix \ref{sec:Proof_finalunorderedpdf}.
\end{proof}
We note that an asymptotic expression for $f_{\lambda}(\cdot)$ has
been considered previously in \cite{Morgenshtern07}, based on
large-dimensional random matrix theory. However, that asymptotic
p.d.f.\ result, which serves as an approximation for
finite-dimensional systems, is not in closed-form, requiring the
numerical computation of a certain fixed-point equation. Indeed, to
further facilitate computation of the asymptotic eigenvalue p.d.f.\
in \cite{Morgenshtern07}, an algorithmic approach with certain
heuristic elements was also presented. Our result in \emph{Theorem
\ref{final_unorderedpdf}}, in contrast, gives the \emph{exact}
eigenvalue p.d.f.\ which applies for arbitrary finite system
dimensions, and is presented in a simple closed-form involving only
standard functions which can be easily and efficiently evaluated. In
the following section, this result will be employed to evaluate the
ergodic capacity of AF MIMO dual-hop channels.


\begin{corollary}\label{pdf_special1}
For the special case $\left( {1,1,1} \right)$, the unordered
eigenvalue p.d.f.\ (\ref{eq:unorderpdffinal}) reduces to
\begin{align}
f_\lambda ^{\left( {1,1,1} \right)}(\lambda)  = 2e^{ -
\frac{{\lambda \alpha }}{{1 + \rho }}} \left[ {\left( {\frac{\alpha
}{{1 + \rho }}} \right)\sqrt \lambda  K_1 \left( {2\sqrt \lambda  }
\right) + K_0 \left( {2\sqrt \lambda  } \right)} \right] \; .
\end{align}
\end{corollary}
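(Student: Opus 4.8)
The plan is to obtain \emph{Corollary \ref{pdf_special1}} by direct specialization of the master formula (\ref{eq:unorderpdffinal}) in \emph{Theorem \ref{final_unorderedpdf}} to the antenna triple $(n_s,n_r,n_d)=(1,1,1)$. First I would record the induced parameter values: $q=\min(n_d,n_r)=1$, $p=\max(n_d,n_r)=1$, $s=\min(n_s,q)=1$, $n_s=1$, and, from (\ref{eq:aDefn}), $a=\alpha/(1+\rho)$ so that $e^{-\lambda a}=e^{-\lambda\alpha/(1+\rho)}$. With $q=p=1$ the normalizing constant collapses to $\mathcal{K}=\bigl(\Gamma(1)\Gamma(1)\bigr)^{-1}=1$, and the $q\times q$ matrix $\mathbf{G}$ reduces to a single scalar entry whose only cofactor is $G_{1,1}=1$ under the usual convention that the determinant of the empty matrix is one. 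Likewise $\Gamma(n_s-q+k)=\Gamma(1)=1$ for the sole admissible index $k$.

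Next I would resolve the three nested sums. The outer sums over $l$ and $k$ degenerate to the single term $l=1$, $k=1$ (since $k$ runs from $q-s+1=1$ to $q=1$), while the inner sum over $i$ runs from $0$ to $q+n_s-l=1$ and hence contributes exactly two terms. For $i=0$ one has $\binom{1}{0}=1$, the power of $a$ equal to $a^{q+n_s-l-i}=a^{1}=\alpha/(1+\rho)$, the exponent of $\lambda$ equal to $(2n_s+2k+p-q-i-3)/2=\tfrac12$, and the Bessel order $p+q-i-1=1$, yielding the contribution $\tfrac{\alpha}{1+\rho}\sqrt\lambda\,K_1(2\sqrt\lambda)$. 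For $i=1$ one has $\binom{1}{1}=1$, the power of $a$ equal to $a^{0}=1$, the exponent of $\lambda$ equal to $0$, and the Bessel order $0$, yielding $K_0(2\sqrt\lambda)$. Multiplying the sum of these two contributions by the prefactor $2e^{-\lambda a}\mathcal{K}/s=2e^{-\lambda\alpha/(1+\rho)}$ reproduces precisely the bracketed expression in the statement.

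I do not anticipate any genuine obstacle, as the result is a pure substitution exercise; the only points needing a moment's care are the degenerate evaluations $\mathcal{K}=1$ and the $1\times1$ cofactor $G_{1,1}=1$, together with verifying that the half-integer exponents of $\lambda$ come out as claimed. As an optional independent check I would confirm that $\int_0^\infty f_\lambda^{(1,1,1)}(\lambda)\,d\lambda=1$ using a standard Laplace-type integral of the modified Bessel function of the second kind (e.g.\ from \cite{Gradshteyn00}), which would provide reassurance that the specialization has been executed correctly.
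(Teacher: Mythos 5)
Your specialization is correct: with $(n_s,n_r,n_d)=(1,1,1)$ one has $q=p=s=1$, $\mathcal{K}=1$, $G_{1,1}=1$ (cofactor of a $1\times1$ matrix), and the two terms $i=0,1$ of the inner sum give exactly $a\sqrt{\lambda}\,K_1(2\sqrt{\lambda})$ and $K_0(2\sqrt{\lambda})$ with $a=\alpha/(1+\rho)$, matching the stated p.d.f. The paper omits this proof as "straightforward," and your direct substitution is precisely the intended argument.
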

\begin{proof}
The proof is straightforward and is omitted.
\end{proof}
We note that this special case has also been derived previously in
\cite{Hasna04}.

\begin{corollary}\label{final_unorderedproduct}
Let ${\bf{\tilde L}} = {\rm diag}\left\{ {\lambda _i^2 } \right\}_{i
= 1}^q $. Then, the marginal p.d.f.\ of an unordered eigenvalue
$\lambda$ of ${\mathbf{\tilde H}}_1^\dag  {\mathbf{\tilde L \tilde
H}}_1 $ is given by
\begin{align}\label{eq:unorderedproduct}
\tilde f_\lambda  \left( \lambda  \right) =
\frac{{2\mathcal{K}}}{s}\sum\limits_{l = 1}^q\sum\limits_{k = q - s
+ 1}^q { {\frac{{\lambda ^{\left( {n_s  + 2k + p + l - 2q - 3}
\right)/2} }}{{\Gamma \left( {n_s  - q + k} \right)}}} } K_{p - n_s
+ l - 1} \left( {2\sqrt \lambda  } \right) {{\bar G}}_{l,k}
\end{align}
where ${{{\bar G}}_{l,k} }$ is the $\left( {l,k} \right)$th cofactor
of a $q \times q$ matrix ${\bf{\bar G}}$ whose $\left( {m,n}
\right)$th entry is
\begin{align} \label{eq:GBar}
\left\{ {{\bf{\bar G}} } \right\}_{m,n}  = \Gamma \left( {p - q + m + n - 1} \right).
\end{align}
\end{corollary}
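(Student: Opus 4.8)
The plan is to obtain Corollary~\ref{final_unorderedproduct} as the $a\to0^{+}$ (vanishing relay-gain) degeneration of Theorem~\ref{final_unorderedpdf}. Note that $\mathbf{\tilde{L}}=\mathrm{diag}\{\lambda_i^{2}\}_{i=1}^{q}$ is the entrywise limit of $\mathbf{L}=\mathrm{diag}\{\lambda_i^{2}/(1+a\lambda_i^{2})\}_{i=1}^{q}$ as $a\to0^{+}$, so $\mathbf{\tilde{H}}_1^{\dag}\mathbf{L}\mathbf{\tilde{H}}_1\to\mathbf{\tilde{H}}_1^{\dag}\mathbf{\tilde{L}}\mathbf{\tilde{H}}_1$ for every realization, and hence an unordered eigenvalue of the former converges to one of the latter. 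I would therefore let $a\to0$ in (\ref{eq:unorderpdffinal}) and check that the limit is (\ref{eq:unorderedproduct}). Three things need verifying: (i)~$e^{-\lambda a}\to1$; (ii)~in the inner sum over $i$ in (\ref{eq:unorderpdffinal}) the $i$-th summand carries the nonnegative power $a^{\,q+n_s-l-i}$ (and $q+n_s-l\ge n_s\ge1$), so only the single term $i=q+n_s-l$ survives the limit, producing precisely the $\lambda$-exponent $(n_s+2k+p+l-2q-3)/2$ and the Bessel order $K_{p-n_s+l-1}(2\sqrt{\lambda})$ of (\ref{eq:unorderedproduct}); and (iii)~$G_{l,k}\to\bar{G}_{l,k}$. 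For (iii) I would invoke the standard large-argument behaviour $U(\mu,\nu,z)=z^{-\mu}(1+O(1/z))$ with $z=1/a$ and $\mu=p-q+m+n-1\ge1$, which gives $\{\mathbf{G}\}_{m,n}=a^{\,q-p-m-n+1}\Gamma(p-q+m+n-1)\,U(p-q+m+n-1,p+q,1/a)\to\Gamma(p-q+m+n-1)=\{\mathbf{\bar{G}}\}_{m,n}$, so the cofactors converge by continuity of the determinant.

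To avoid having to justify the interchange of the $a\to0$ limit with the density (a routine but slightly delicate point, valid because the laws converge weakly and the pointwise limit of the densities is itself a density), I would instead re-derive the formula directly, repeating the proof of Theorem~\ref{final_unorderedpdf} with $a=0$ from the outset. First apply Lemma~\ref{unorderedpdf} with $\mathbf{L}=\mathbf{\tilde{L}}=\mathrm{diag}\{\alpha_i\}_{i=1}^{q}$, $\alpha_i=\lambda_i^{2}$ (the lemma holds for an arbitrary diagonal $\mathbf{L}$), which gives $f_{\left.\lambda\right|\mathbf{\tilde{L}}}(\lambda)$ with every $\beta_i$ replaced by $\alpha_i$. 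Since the $\alpha_i$ are the $q$ eigenvalues of the complex central Wishart matrix $\mathbf{H}_2\mathbf{H}_2^{\dag}$ (if $n_d\le n_r$) or $\mathbf{H}_2^{\dag}\mathbf{H}_2$ (if $n_d>n_r$), their ordered joint density is
\[
f(\alpha_1,\dots,\alpha_q)=\mathcal{K}\prod_{i<j}^{q}(\alpha_j-\alpha_i)^{2}\prod_{i=1}^{q}\alpha_i^{\,p-q}e^{-\alpha_i},\qquad 0<\alpha_1<\cdots<\alpha_q,
\]
with the same $\mathcal{K}$ as in Lemma~\ref{jointpdf_beta}; this is just (\ref{eq:jointPDFg}) evaluated at $a=0$. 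Then $\tilde{f}_\lambda(\lambda)$ is obtained by averaging $f_{\left.\lambda\right|\mathbf{\tilde{L}}}(\lambda)$ against this density.

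The remaining computation is identical in structure to the proof of Theorem~\ref{final_unorderedpdf}: one power of the Vandermonde $\prod_{i<j}(\alpha_j-\alpha_i)$ cancels the denominator from Lemma~\ref{unorderedpdf}, the surviving Vandermonde is written as $\det[\alpha_m^{\,n-1}]$, and the $q$-fold integral over the ordered region is reduced, by the same cofactor-expansion (Andr\'{e}ief-type) manipulation, to a single cofactor $\bar{G}_{l,k}$ of a $q\times q$ matrix. The only substantive change is that the one-dimensional $\beta$-integrals that ran over $[0,1/a]$ in Theorem~\ref{final_unorderedpdf} (and produced the functions $U(\cdot,\cdot,1/a)$) now run over $[0,\infty)$ and become elementary: the integral coupled to $\lambda$ is $\int_0^{\infty}t^{\nu-1}e^{-t-\lambda/t}\,dt=2\lambda^{\nu/2}K_{\nu}(2\sqrt{\lambda})$~\cite{Gradshteyn00} with $\nu=p-n_s+l-1$, which together with the prefactor $\lambda^{\,n_s+k-q-1}$ in Lemma~\ref{unorderedpdf} yields the Bessel function and $\lambda$-power of (\ref{eq:unorderedproduct}), while the remaining $q-1$ integrals are plain Gamma integrals $\int_0^{\infty}t^{\,p-q+m+n-2}e^{-t}\,dt=\Gamma(p-q+m+n-1)$, assembling exactly into the entries $\{\mathbf{\bar{G}}\}_{m,n}$ of (\ref{eq:GBar}). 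I expect the main obstacle to be bookkeeping rather than analysis: one must correctly match the index $l$ that selects the $\lambda$-coupled integration variable to the appropriate row and column of $\mathbf{\bar{G}}$, so that the Bessel order comes out as $p-n_s+l-1$ and precisely the cofactor $\bar{G}_{l,k}$ is recovered. Since this step mirrors the corresponding one in the proof of Theorem~\ref{final_unorderedpdf}, it can be taken over essentially verbatim; the genuinely new input --- that $U(\cdot,\cdot,1/a)$ degenerates to a Gamma function while the Bessel-type integral is unchanged --- is trivial, and the $a\to0$ limit of the first paragraph serves as an independent consistency check.
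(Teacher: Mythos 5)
Your proposal is correct and its first paragraph is exactly the paper's own (one-line) proof --- taking $a\to 0$ in (\ref{eq:unorderpdffinal}), with the surviving term $i=q+n_s-l$ and the asymptotics $U(\mu,\nu,1/a)\sim a^{\mu}$ correctly identified so that $G_{l,k}\to\bar{G}_{l,k}$. The additional direct re-derivation at $a=0$ is a sound (and slightly more rigorous) alternative, but it is not needed beyond what the paper does.
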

\begin{proof}
The result is obtained by taking the limit as $a \to 0$ in
(\ref{eq:unorderpdffinal}).
\end{proof}
This result will be used to study the capacity of AF MIMO dual-hop
channels in the high SNR regime. It is also worth noting that
(\ref{eq:unorderedproduct}) can be applied to the ergodic capacity
analysis of Rayleigh-product MIMO channels \cite{Yang07,Jin08}.

Fig. \ref{fig:fig3} compares the analytical result presented in
\textit{Theorem \ref{final_unorderedpdf}} with Monte Carlo
simulations. We plot the p.d.f. of the unordered eigenvalue $\lambda$
with system configuration $(2, 3, 4)$. The simulated p.d.f. curve is
based on 100,000 channel realizations. The figure shows that
the analytical result is in agreement with the simulations.

Fig. \ref{fig:fig4} shows the analytical result presented in
\textit{Theorem \ref{final_unorderedpdf}} and \textit{Corollary
\ref{final_unorderedproduct}}. The curves corresponding to $\rho = 0{~}
{\rm dB}$, $\rho = 10{~} {\rm dB}$, and $\rho = 20{~} {\rm dB}$ are generated using
(\ref{eq:unorderedproduct}) while the ``Rayleigh Product" curve is
generated using (\ref{eq:unorderedproduct}). We can see that the
exact unordered eigenvalue distribution converges to the
unordered eigenvalue distribution of the Rayleigh product channel as $a \to \infty$, as expected.

Fig. \ref{fig:fig12} compares our exact unordered eigenvalue distribution, based on (\ref{eq:unorderpdffinal}), with the corresponding asymptotic eigenvalue distribution presented in \cite{Morgenshtern07}, for the random matrix ${{\bf{\tilde H}}_1^\dag {\bf{ L\tilde H}}_1
}/(n_sn_r)$ with different system configurations. We use the same
simulation parameters as in \cite[Fig. 5 (a)]{Morgenshtern07}, setting
$a = 1/ n_r$ and $n_r / n_s = 1 / 2$. We clearly see the convergence of the exact and asymptotic p.d.f.s as the numbers of antennas become large (eg.\ the $(16,8,16)$ scenario), however when the systems dimensions are not so large (eg.\ the $(2,1,2)$ and $(4,2,4$) scenarios), the asymptotic eigenvalue p.d.f.\ exhibits noticeable inaccuracies with respect to our new exact result in (\ref{eq:unorderpdffinal}).


\begin{figure}
\centering
\includegraphics[scale=0.8]{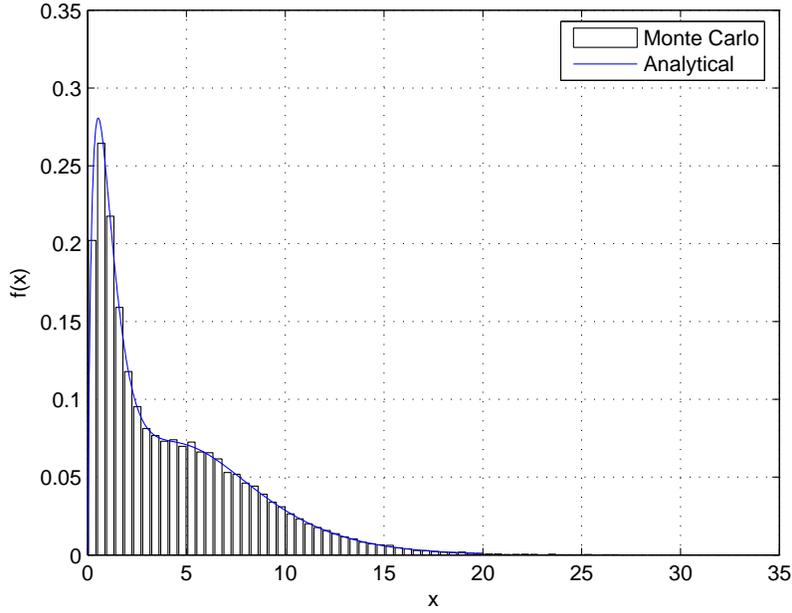}
\captionstyle{mystyle2}\caption{Comparison of the analytical and Monte Carlo-simulated unordered eigenvalue
p.d.f.\ of ${\mathbf{\tilde H}}_1^\dag
{\mathbf{L\tilde H}}_1 $.  Results are shown for $(2, 3, 4)$ system configuration, with
$\alpha = 2$.} \label{fig:fig3}
\end{figure}

\begin{figure}
\centering
\includegraphics[scale=0.8]{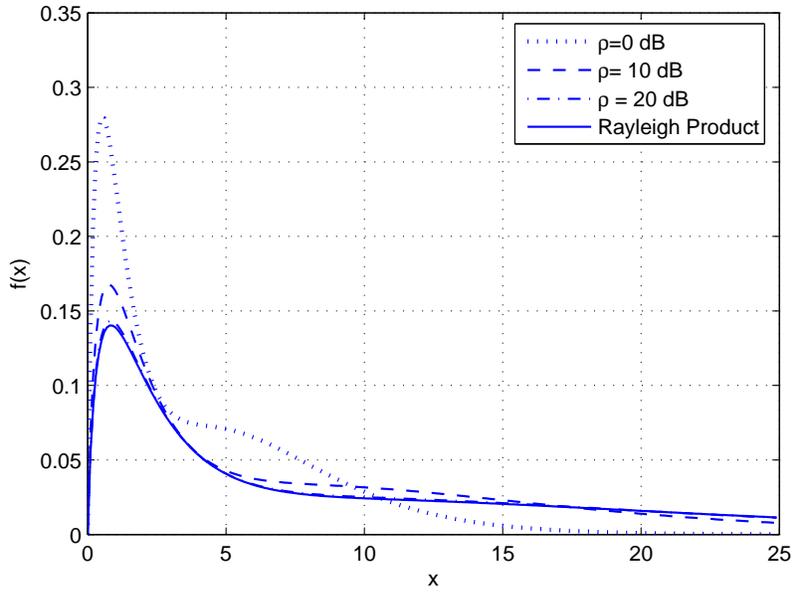}
\captionstyle{mystyle2}\caption{Comparison of the analytical unordered eigenvalue p.d.f.\ of ${{\bf{\tilde H}}_1^\dag {\bf{ L\tilde
H}}_1 }$ and ${{\bf{\tilde H}}_1^\dag {\bf{ \tilde L\tilde H}}_1 }$ for different
$\rho$.  Results are shown for a
$(2, 3, 4)$ system configuration, with $\alpha = 2$.}
\label{fig:fig4}
\end{figure}

\begin{figure}
\centering
\includegraphics[scale=0.8]{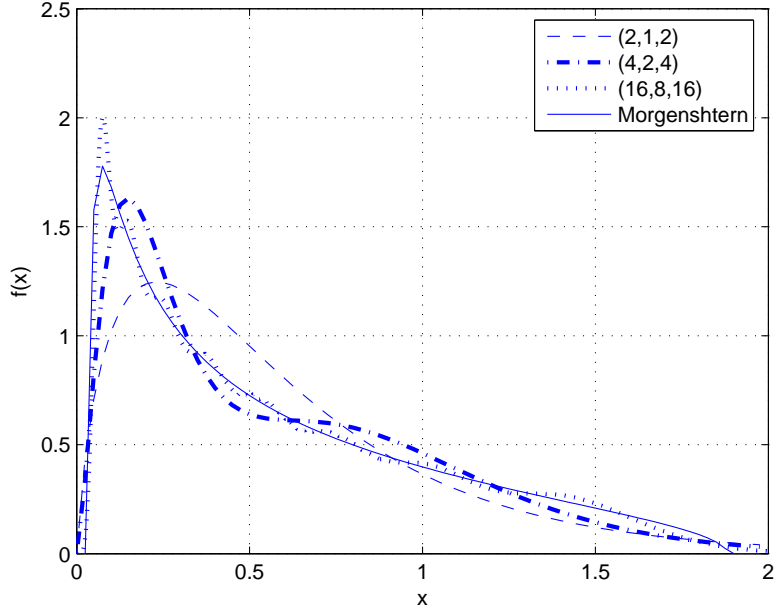}
\captionstyle{mystyle2}\caption{Comparison of the analytical unordered eigenvalue p.d.f.\ of ${{\bf{\tilde H}}_1^\dag {\bf{ L\tilde
H}}_1 }/(n_sn_r)$ for different system configurations.  Results are shown for $ a =
1/n_r$ and $n_r / n_s = 1 / 2$.}
\label{fig:fig12}
\end{figure}

The following theorems present new closed-form random determinant
properties, involving the random matrix ${\mathbf{\tilde H}}_1^\dag
{\mathbf{ L \tilde H}}_1 $.  These results will be applied to
derive tight bounds on the ergodic capacity.


\begin{lemma}\label{expdet}
The expected determinant of ${\bf{I}}_{n_s }  + \left( {\rho a/n_s }
\right){\bf{\tilde H}}_1^\dag  {\bf{L\tilde H}}_1$, conditioned on
$\mathbf{L}$, is given by
\begin{align}\label{eq:expdet}
E\left\{ {\left. {\det \left( {{\bf{I}}_{n_s }  + \frac{{\rho
a}}{{n_s }}{\bf{\tilde H}}_1^\dag  {\bf{L\tilde H}}_1 } \right)}
\right|{\bf{L}}} \right\} = \frac{{\det \left( {\bf{\Delta }}
\right)}}{ {\prod\nolimits_{i < j}^q {\left( {\beta _j  - \beta _i }
\right)} }}
\end{align}
where $ {\bf{\Delta }}$ is a $q \times q$ matrix with entries\footnote{When $q < n_s$, $\left\{ {\bf{\Delta }} \right\}_{m,n} = {\beta _m^{n - 1} \left( {1 + \frac{{\rho a}}{{n_s }}\beta _m \left( {n_s  - q + n} \right)} \right)}$.}
\begin{align}
\left\{ {\bf{\Delta }} \right\}_{m,n}  = \left\{
{\begin{array}{*{20}c}
   {\beta _m^{n - 1} }, & {n \le q - n_s, }  \\
   {\beta _m^{n - 1} \left( {1 + \frac{{\rho a}}{{n_s }}\beta _m \left( {n_s  - q + n} \right)} \right)}, & {n > q - n_s. }  \\
\end{array}} \right. \; \;
\end{align}
\end{lemma}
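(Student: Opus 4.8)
\emph{Proof proposal.} The plan is to avoid the eigenvalue density altogether and instead reduce the conditional expected determinant to a (weighted) sum of elementary symmetric polynomials in $\beta_1,\dots,\beta_q$ via the expected determinant of complex Wishart matrices, and then recognise that weighted sum as $\det(\mathbf{\Delta})/\prod_{i<j}^q(\beta_j-\beta_i)$. First I would apply \eqref{eq:DetProp} to swap dimensions: conditioned on $\mathbf{L}$, and with $\theta \defeq \rho a/n_s$ and $\mathbf{S}\defeq\tilde{\mathbf{H}}_1\tilde{\mathbf{H}}_1^\dag$ (a $q\times q$ complex central Wishart matrix with $n_s$ degrees of freedom and identity covariance),
\begin{align}
\det\!\left(\mathbf{I}_{n_s}+\theta\,\tilde{\mathbf{H}}_1^\dag\mathbf{L}\tilde{\mathbf{H}}_1\right)=\det\!\left(\mathbf{I}_q+\theta\,\mathbf{L}\mathbf{S}\right),
\end{align}
which holds for all relative sizes of $n_s$ and $q$. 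Next I would use the principal-minor expansion $\det(\mathbf{I}_q+\mathbf{M})=\sum_{k=0}^q\sum_{|\mathcal{S}|=k}\det(\mathbf{M}_{\mathcal{S}})$ (sum over $k$-subsets $\mathcal{S}$ of $\{1,\dots,q\}$, $\mathbf{M}_{\mathcal{S}}$ the corresponding principal submatrix). Since $\mathbf{L}=\mathrm{diag}(\beta_1,\dots,\beta_q)$, one has $\det\big((\theta\mathbf{L}\mathbf{S})_{\mathcal{S}}\big)=\theta^{k}\big(\prod_{i\in\mathcal{S}}\beta_i\big)\det(\mathbf{S}_{\mathcal{S}})$, and each $\mathbf{S}_{\mathcal{S}}$ is itself a $k\times k$ complex Wishart matrix with $n_s$ degrees of freedom (a principal submatrix of a white Wishart is again white Wishart), so $E\{\det \mathbf{S}_{\mathcal{S}}\}=n_s!/(n_s-k)!$ for $k\le n_s$ and $0$ otherwise. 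Using $\sum_{|\mathcal{S}|=k}\prod_{i\in\mathcal{S}}\beta_i=e_k(\beta_1,\dots,\beta_q)$, the $k$-th elementary symmetric polynomial, this gives the compact intermediate formula
\begin{align}\label{eq:plan_interm}
E\!\left\{\det\!\left(\mathbf{I}_{n_s}+\tfrac{\rho a}{n_s}\tilde{\mathbf{H}}_1^\dag\mathbf{L}\tilde{\mathbf{H}}_1\right)\Big|\mathbf{L}\right\}=\sum_{k=0}^{\min(q,n_s)}\left(\frac{\rho a}{n_s}\right)^{k}\frac{n_s!}{(n_s-k)!}\,e_k(\beta_1,\dots,\beta_q).
\end{align}

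It then remains to identify the right-hand side of \eqref{eq:plan_interm} with $\det(\mathbf{\Delta})/\prod_{i<j}^q(\beta_j-\beta_i)$. Writing $\mathbf{v}_j$ for the column vector $(\beta_1^j,\dots,\beta_q^j)$, the $n$th column of $\mathbf{\Delta}$ is $\mathbf{v}_{n-1}$ when $n\le q-n_s$ and $\mathbf{v}_{n-1}+\tfrac{\rho a}{n_s}(n_s-q+n)\mathbf{v}_n$ when $n>q-n_s$. I would expand $\det(\mathbf{\Delta})$ by multilinearity over the latter columns: a term survives only if the multiset of power indices it produces consists of $q$ distinct values, and a short argument (any ``gap'' in the set of columns on which the $\mathbf{v}_n$-option is taken forces a repeated column) shows this happens exactly when the $\mathbf{v}_n$-option is taken on a top segment $\{q-t+1,\dots,q\}$ for some $0\le t\le n_s$. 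The corresponding surviving term equals $\big(\prod_{n=q-t+1}^{q}\tfrac{\rho a}{n_s}(n_s-q+n)\big)\det[\mathbf{v}_{\mu_1},\dots,\mathbf{v}_{\mu_q}]$ with exponents $(\mu_1,\dots,\mu_q)=(0,1,\dots,q)\setminus\{q-t\}$ in increasing order; the generalised Vandermonde $\det[\mathbf{v}_{\mu_j}]$ equals $s_{(1^{t})}(\beta_1,\dots,\beta_q)\prod_{i<j}^q(\beta_j-\beta_i)=e_t(\beta_1,\dots,\beta_q)\prod_{i<j}^q(\beta_j-\beta_i)$, and $\prod_{n=q-t+1}^{q}(n_s-q+n)=n_s!/(n_s-t)!$. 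Summing over $t$ reproduces $\prod_{i<j}^q(\beta_j-\beta_i)$ times the right-hand side of \eqref{eq:plan_interm}, which proves the lemma; note this single identity covers both cases of $\mathbf{\Delta}$ uniformly, according to whether $q-n_s$ is positive (main statement) or non-positive (the footnote case).

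The main obstacle is the last step, the determinantal identity: the reduction in \eqref{eq:plan_interm} is routine once one uses the dimension swap, the principal-minor expansion, and the classical white-Wishart moment $E\{\det\}=n_s!/(n_s-k)!$, but one must carefully verify that only the ``top-segment'' choices survive the multilinear expansion and correctly evaluate the resulting generalised Vandermonde determinants. As an alternative I could instead reach the $\det(\mathbf{\Delta})/\prod_{i<j}^q(\beta_j-\beta_i)$ form directly by writing down the joint eigenvalue density of $\tilde{\mathbf{H}}_1^\dag\mathbf{L}\tilde{\mathbf{H}}_1$ conditioned on $\mathbf{L}$ (semi-correlated Wishart for $q\le n_s$, pseudo-Wishart for $q>n_s$) and integrating $\prod_i(1+\tfrac{\rho a}{n_s}\mu_i)$ against it using the Andr\'eief/Cauchy--Binet identity; that route produces the determinant form immediately but requires handling the two ranges of $q$ versus $n_s$ separately, which is presumably the origin of the two cases in the statement. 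I would present the first route, since it is self-contained and yields a manifestly unified answer.
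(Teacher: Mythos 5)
Your proof is correct, and it takes a genuinely different route from the one in the paper. The paper splits into the cases $q < n_s$ and $q \ge n_s$, writes down the joint (conditional) eigenvalue density of ${\bf{L\tilde H}}_1 {\bf{\tilde H}}_1^\dag$ (resp.\ ${\bf{\tilde H}}_1^\dag {\bf{L\tilde H}}_1$) from \cite{Chiani03} and \cite{Smith03}, and integrates $\prod_i (1+\tfrac{\rho a}{n_s}\gamma_i)$ against it using the Andr\'eief-type identities of \cite[Corollary 2]{Chiani03} and \cite[Lemma 2]{Shin06}; the determinant form $\det(\mathbf{\Delta})/\prod_{i<j}(\beta_j-\beta_i)$ then appears directly after factoring common terms out of the resulting determinants. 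This is essentially the ``alternative route'' you sketch at the end. Your main route instead bypasses eigenvalue densities entirely: the dimension swap via (\ref{eq:DetProp}), the principal-minor expansion, and the classical moment $E\{\det \mathbf{S}_{\mathcal{S}}\}=n_s!/(n_s-k)!$ of a white complex Wishart give the symmetric-function form $\sum_k (\rho a/n_s)^k \frac{n_s!}{(n_s-k)!} e_k(\beta_1,\dots,\beta_q)$ in one stroke for all relative sizes of $q$ and $n_s$, and the remaining work is the purely algebraic identification of this sum with $\det(\mathbf{\Delta})/\prod_{i<j}(\beta_j-\beta_i)$. I checked the delicate points of that identification: the only surviving choices in the multilinear expansion are indeed the top segments $T=\{q-t+1,\dots,q\}$ (any $n\in T$ with $n+1\notin T$, $n<q$, forces exponents $n$ and $n$ in adjacent columns), the coefficient $\prod_{n=q-t+1}^{q}(n_s-q+n)=n_s!/(n_s-t)!$ is right, and the generalised Vandermonde with exponent set $\{0,\dots,q\}\setminus\{q-t\}$ does equal $e_t(\beta)\prod_{i<j}(\beta_j-\beta_i)$ (Schur polynomial $s_{(1^t)}=e_t$). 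What your approach buys is a unified, self-contained argument covering both cases of $\mathbf{\Delta}$ simultaneously, plus an intermediate formula of independent interest that connects this lemma to the characteristic-polynomial expansions of \cite{Shin03,Zhang05} mentioned in the text; what the paper's approach buys is that the determinant-over-Vandermonde form drops out automatically, with the combinatorial bookkeeping absorbed into the cited integration lemmas.
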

\begin{proof}
See Appendix \ref{sec:Proof_expdet}.
\end{proof}
This theorem presents a new expression for the expected
characteristic polynomial of a complex semi-correlated central
Wishart matrix. In prior work \cite{Shin03,Zhang05},  alternative
expressions were obtained via a different approach (i.e.\ by
exploiting a classical characteristic polynomial expansion for the
determinant).  Those results, however, involved summations over
subsets of numbers, with each term involving determinants of
partitioned matrices.  In contrast, our result in \textit{Lemma
\ref{unorderedpdf}} is more computationally-efficient, involving
only a single determinant with simple entries. Moreover, it is more
amenable to the further analysis in this paper, leading to the
following important theorem.


\begin{theorem}\label{expdetNew}
The unconditional expected determinant of ${\bf{I}}_{n_s } + \left(
{\rho a/n_s } \right){\bf{\tilde H}}_1^\dag  {\bf{L\tilde H}}_1$ is
given by
\begin{align}
E\left\{ \det \left( {{\bf{I}}_{n_s }  + \frac{{\rho a}}{{n_s
}}{\bf{\tilde H}}_1^\dag  {\bf{L\tilde H}}_1 } \right) \right\} =
\mathcal{K}\det \left( {\bf{\bar {\Xi} }} \right)
\end{align}
where $ {\bf{\bar {\Xi} }}$ is a $q \times q$ matrix with entries
\begin{align} \label{eq:XiBar}
\left\{ {\bf{\bar {\Xi} }} \right\}_{m,n}  = \left\{
{\begin{array}{*{20}c}
   a^{1 - \tau } \vartheta_{\tau-1}( a ), & {n \le q - n_s }  \\
   a^{1 - \tau } \left( \vartheta_{\tau-1}( a ) + \frac{\rho }{{n_s }}  \left( {n_s  - q + n } \right)
\vartheta_{\tau}( a ) \right), & {n > q - n_s }  \\
\end{array}} \right.
\end{align}
with $ \tau  = p - q + m + n $, and
\begin{align} \label{eq:VarThetaDefn}
\vartheta_\tau( a ) = \Gamma \left( {\tau } \right)U\left( {\tau,p +
q,1/a} \right) \; .
\end{align}
%


\end{theorem}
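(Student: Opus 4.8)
The plan is to remove the conditioning in \emph{Lemma \ref{expdet}} by averaging over $\mathbf{L}=\mathrm{diag}(\beta_1,\ldots,\beta_q)$ with the joint density (\ref{eq:jointPDFg}) of \emph{Lemma \ref{jointpdf_beta}}, and then to evaluate the resulting $q$-fold integral in closed form by means of a classical integral identity for products of determinants. Writing $w(\beta)=\beta^{p-q}(1-a\beta)^{-(p+q)}e^{-\beta/(1-a\beta)}$ and combining the tower property with (\ref{eq:expdet}) and (\ref{eq:jointPDFg}),
\begin{align}
E\left\{ \det\left( \mathbf{I}_{n_s} + \tfrac{\rho a}{n_s}\tilde{\mathbf{H}}_1^\dag \mathbf{L}\tilde{\mathbf{H}}_1 \right)\right\}=\mathcal{K}\int_{0\le\beta_1<\cdots<\beta_q\le 1/a}\frac{\det(\mathbf{\Delta})\prod_{i<j}^q(\beta_j-\beta_i)^2}{\prod_{i<j}^q(\beta_j-\beta_i)}\prod_{i=1}^q w(\beta_i)\,d\beta_1\cdots d\beta_q .
\end{align}
One of the two Vandermonde factors in the numerator cancels the denominator, leaving the single Vandermonde $\prod_{i<j}^q(\beta_j-\beta_i)=\det[\beta_j^{i-1}]_{i,j=1}^q$. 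Hence the integrand is $\mathcal{K}$ times the product of two $q\times q$ determinants — $\det(\mathbf{\Delta})=\det[\phi_n(\beta_m)]_{m,n=1}^q$, where $\phi_n$ denotes the $n$th-column functional form in \emph{Lemma \ref{expdet}} (so $\phi_n(\beta)=\beta^{n-1}$ for $n\le q-n_s$ and $\phi_n(\beta)=\beta^{n-1}(1+\tfrac{\rho a}{n_s}\beta(n_s-q+n))$ for $n>q-n_s$, with the footnoted form when $q<n_s$), and the Vandermonde determinant — multiplied by the symmetric product weight $\prod_i w(\beta_i)$.

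Next I would apply the Andr\'eief (Cauchy--Binet) identity. A product of two determinants is invariant under a simultaneous permutation of $\beta_1,\ldots,\beta_q$, so the integrand is symmetric and integration over the ordered simplex equals $1/q!$ times integration over $[0,1/a]^q$; Andr\'eief's identity then collapses the latter to $q!\det(\mathbf{\bar{\Xi}})$, the $q!$ cancelling. This gives $E\{\cdots\}=\mathcal{K}\det(\mathbf{\bar{\Xi}})$ with, after a relabelling that does not change the determinant,
\begin{align}
\{\mathbf{\bar{\Xi}}\}_{m,n}=\int_0^{1/a}\phi_n(\beta)\,\beta^{m-1}\,w(\beta)\,d\beta ,
\end{align}
whose integrand has leading term $\beta^{\tau-2}$ with $\tau=p-q+m+n$. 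It then remains to evaluate the two elementary integrals that appear. Substituting $\beta=t/(1+at)$ (so that $\beta/(1-a\beta)=t$ and $d\beta=(1+at)^{-2}\,dt$) and then $u=at$ turns $\int_0^{1/a}\beta^{\tau-2}(1-a\beta)^{-(p+q)}e^{-\beta/(1-a\beta)}\,d\beta$ into $a^{1-\tau}\int_0^\infty u^{\tau-2}(1+u)^{p+q-\tau}e^{-u/a}\,du$, which by the integral representation $\Gamma(s)U(s,b,z)=\int_0^\infty e^{-zt}t^{s-1}(1+t)^{b-s-1}\,dt$ (with $s=\tau-1$, $b=p+q$, $z=1/a$) equals $a^{1-\tau}\vartheta_{\tau-1}(a)$ in the notation of (\ref{eq:VarThetaDefn}); the companion integral with $\beta^{\tau-1}$ in place of $\beta^{\tau-2}$ evaluates to $a^{-\tau}\vartheta_{\tau}(a)$. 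Substituting these back, $\{\mathbf{\bar{\Xi}}\}_{m,n}$ becomes $a^{1-\tau}\vartheta_{\tau-1}(a)$ for $n\le q-n_s$ and $a^{1-\tau}(\vartheta_{\tau-1}(a)+\tfrac{\rho}{n_s}(n_s-q+n)\vartheta_{\tau}(a))$ for $n>q-n_s$, which is exactly (\ref{eq:XiBar}).

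The Vandermonde cancellation and the Andr\'eief reduction are routine; the hard part will be the closed-form evaluation of the one-dimensional integrals — in particular spotting the change of variables $\beta\mapsto\beta/(1-a\beta)$ that clears the nonlinear factor $1/(1-a\beta)$ from the exponent, recognising the resulting expression as the integral representation of the confluent hypergeometric function $U(\cdot,\cdot,\cdot)$, and keeping track of the shifted parameter $\tau+1$ in the $\rho$-dependent term. Convergence of the integrals near $\beta=1/a$ is immediate from the factor $e^{-\beta/(1-a\beta)}$, and the hypothesis $\Re(\tau-1)>0$ required by the $U$-representation holds because $p\ge q$ forces $\tau=p-q+m+n\ge 2$.
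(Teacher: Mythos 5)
Your proposal is correct and follows essentially the same route as the paper: the paper's one-line proof combines \emph{Lemma \ref{expdet}} with \cite[Lemma 2]{Shin06} (the Andr\'eief/Cauchy--Binet reduction you describe) and the integral evaluation (\ref{eq:ykij1}) (your substitution $t=\beta/(1-a\beta)$ followed by the $U$-function representation). Your write-up is simply a fully expanded version of that argument, and the computations — the Vandermonde cancellation, the entrywise integrals yielding $a^{1-\tau}\vartheta_{\tau-1}(a)$ and $a^{-\tau}\vartheta_{\tau}(a)$, and the two-case column structure — all check out.
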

\begin{proof}
Utilizing \textit{Lemma \ref{expdet}}, \cite[Lemma 2]{Shin06} and
(\ref{eq:ykij1}) yields the desired result.
\end{proof}

\begin{lemma}\label{lnexpdet}
Let
\begin{align}
{\bf{\Phi }} = \left\{ {\begin{array}{*{20}c}
   {{\bf{\tilde H}}_1^\dag  {\bf{L\tilde H}}_1 }, & {q \ge n_s, }  \\
   {{\bf{L\tilde H}}_1 {\bf{\tilde H}}_1^\dag  }, & {q < n_s. }  \\
\end{array}} \right. \;
\end{align}
The expected log-determinant of ${\bf{\Phi }}$, conditioned on
$\mathbf{L}$, is given by
\begin{align}\label{eq:lnexpdet}
E\left\{ {\left. {\ln \det \left( {\bf{\Phi }} \right)}
\right|{\bf{L}}} \right\} =
 \sum\limits_{k = 1}^s {\psi
\left( {n_s-s +k} \right)} + \frac{\sum \limits_{k = q - s + 1}^q
{\det \left( {{\bf{Y}}_k } \right)}}{{\prod\nolimits_{i < j}^q
{\left( {\beta _j  - \beta _i } \right)} }}
\end{align}
where $\psi \left( \cdot \right)$ is the digamma
function\cite{Gradshteyn00}, and
%
$ {{\bf{Y}}_k } $ is a $q \times q$ matrix
with entries
\begin{align}
\left\{ {{\bf{Y}}_k } \right\}_{m,n}  = \left\{
{\begin{array}{*{20}c}
   {\beta _m^{n - 1} }, & {n \ne k,}  \\
   {\beta _m^{n - 1} \ln \beta _m }, & {n = k.}  \\
\end{array}} \right.
\end{align}
When $q = s$, (\ref{eq:lnexpdet}) reduces to
\begin{align}\label{eq:lndetsp1}
E\left\{ {\left. {\ln \det \left( {\bf{\Phi }} \right)}
\right|{\bf{L}}} \right\} = \sum\limits_{k = 1}^s {\psi \left( {n_s
- s + k} \right)}  + \ln \det \left( {\bf{L}} \right) \; .
\end{align}

\end{lemma}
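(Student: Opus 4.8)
The plan is to reduce the conditional expected log-determinant to a one-dimensional integral against the conditional unordered eigenvalue density of \textit{Lemma \ref{unorderedpdf}}, evaluate an elementary log-weighted gamma integral, and then recognize the remaining sums over $l$ as Laplace (cofactor) expansions of determinants.

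First I would observe that, by construction, $\mathbf{\Phi}$ is a nonsingular $s\times s$ matrix whose eigenvalues are exactly the $s=\min(n_s,q)$ nonzero eigenvalues of $\tilde{\mathbf{H}}_1^\dagger\mathbf{L}\tilde{\mathbf{H}}_1$: when $q\ge n_s$ it equals $\tilde{\mathbf{H}}_1^\dagger\mathbf{L}\tilde{\mathbf{H}}_1$ itself, which is generically full rank $n_s=s$; when $q<n_s$ the switch to $\mathbf{L}\tilde{\mathbf{H}}_1\tilde{\mathbf{H}}_1^\dagger$ removes the spurious zero eigenvalues while preserving the nonzero spectrum. Hence $\ln\det\mathbf{\Phi}=\sum_{i=1}^s\ln\lambda_i$, and by exchangeability of the eigenvalues,
\begin{align}
E\left\{\left.\ln\det\mathbf{\Phi}\right|\mathbf{L}\right\}=s\int_0^\infty \ln\lambda\; f_{\left.\lambda\right|\mathbf{L}}(\lambda)\,d\lambda \; .
\end{align}
Substituting the expression from \textit{Lemma \ref{unorderedpdf}} and interchanging summation and integration, the contribution of each $(l,k)$ term reduces to the standard integral \cite[Eq.\ 4.352.1]{Gradshteyn00}
\begin{align}
\int_0^\infty \lambda^{n_s-q+k-1}e^{-\lambda/\beta_l}\ln\lambda\,d\lambda=\Gamma(n_s-q+k)\,\beta_l^{n_s-q+k}\bigl(\psi(n_s-q+k)+\ln\beta_l\bigr) \; ,
\end{align}
which converges since $n_s-q+k\ge n_s-s+1\ge1$ throughout the range $k=q-s+1,\ldots,q$.

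Next I would collect terms. With this integral, the $s$ in front cancels the $1/s$ normalization, the gamma functions cancel, and the powers of $\beta_l$ combine (with the $\beta_l^{q-n_s-1}$ prefactor in \textit{Lemma \ref{unorderedpdf}}) to $\beta_l^{k-1}$, leaving $\frac{1}{\prod_{i<j}^q(\beta_j-\beta_i)}\sum_{k=q-s+1}^q\sum_{l=1}^q\beta_l^{k-1}\bigl(\psi(n_s-q+k)+\ln\beta_l\bigr)D_{l,k}$. In the $\psi(\cdot)$ part, $\sum_{l=1}^q\beta_l^{k-1}D_{l,k}$ is precisely the Laplace expansion of $\det\mathbf{D}$ along its $k$-th column, i.e.\ the Vandermonde determinant $\prod_{i<j}^q(\beta_j-\beta_i)$; this cancels the prefactor and, after reindexing $k\mapsto k-(q-s)$, leaves $\sum_{k=1}^s\psi(n_s-s+k)$. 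In the $\ln\beta_l$ part, $\sum_{l=1}^q\beta_l^{k-1}\ln\beta_l\,D_{l,k}$ is the Laplace expansion along the $k$-th column of the matrix obtained from $\mathbf{D}$ by replacing that column with $(\beta_m^{k-1}\ln\beta_m)_{m=1}^q$, which is exactly $\mathbf{Y}_k$; this yields $\sum_{k=q-s+1}^q\det\mathbf{Y}_k$ divided by $\prod_{i<j}^q(\beta_j-\beta_i)$, establishing (\ref{eq:lnexpdet}).

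Finally, for the special case $q=s$ (equivalently $n_s\ge q$, so the $k$-sum runs over all $k=1,\ldots,q$), I would prove the reduction $\sum_{k=1}^q\det\mathbf{Y}_k=\det\mathbf{D}\cdot\ln\det\mathbf{L}$ by a differentiation trick: set $g(\epsilon)\defeq\det\bigl[\beta_m^{\,n-1+\epsilon}\bigr]_{m,n=1}^q$; factoring $\beta_m^\epsilon$ out of the $m$-th row gives $g(\epsilon)=\bigl(\prod_{m=1}^q\beta_m\bigr)^\epsilon\det\mathbf{D}$, so $g'(0)=\det\mathbf{D}\,\ln\prod_{m=1}^q\beta_m=\det\mathbf{D}\,\ln\det\mathbf{L}$, while differentiating the determinant column by column gives $g'(0)=\sum_{k=1}^q\det\mathbf{Y}_k$. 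Substituting into (\ref{eq:lnexpdet}) then yields (\ref{eq:lndetsp1}). I anticipate no serious obstacle here; the only points needing care are the rank/shape bookkeeping that guarantees $\det\mathbf{\Phi}$ collects precisely the $s$ nonzero eigenvalues in both regimes of $q$ versus $n_s$, and the (routine) convergence of the log-weighted integral.
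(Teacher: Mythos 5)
Your proposal is correct, but it follows a genuinely different route from the paper. The paper never touches the marginal density: for each of the two cases $q<n_s$ and $q\ge n_s$ separately, it forms the conditional moment generating function $\mathcal{M}(t\,|\,\mathbf{L})=E\{\det(\mathbf{\Phi})^t\,|\,\mathbf{L}\}$, evaluates the resulting multi-dimensional eigenvalue integral in closed form via \cite[Cor.~2]{Chiani03} and \cite[Lemma~2]{Shin06}, and then differentiates at $t=0$, after which some determinant manipulations (multi-linearity, factoring out the $\Gamma$'s) produce the unified $\mathbf{Y}_k$ form; the $q=s$ reduction is done by expanding $\sum_k\det(\mathbf{Y}_k)$ as a sum over permutations. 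You instead exploit \textit{Lemma \ref{unorderedpdf}}, which is proved earlier and is legitimately available (no circularity), write $E\{\ln\det\mathbf{\Phi}\,|\,\mathbf{L}\}=s\int_0^\infty\ln\lambda\, f_{\lambda|\mathbf{L}}(\lambda)\,d\lambda$, and reduce everything to the one-dimensional log-weighted Gamma integral \cite[Eq.~4.352.1]{Gradshteyn00}; your bookkeeping of the exponents ($\beta_l^{q-n_s-1}\beta_l^{n_s-q+k}=\beta_l^{k-1}$), the cofactor resummations, the reindexing $k\mapsto k-(q-s)$, and the convergence check $n_s-q+k\ge n_s-s+1\ge 1$ are all right, and both regimes of $q$ versus $n_s$ are handled at once because Lemma~\ref{unorderedpdf} is already unified. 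Your $\epsilon$-differentiation of $\det\bigl[\beta_m^{\,n-1+\epsilon}\bigr]$ for the $q=s$ case is a cleaner derivation of $\sum_k\det(\mathbf{Y}_k)=\ln\det(\mathbf{L})\prod_{i<j}^q(\beta_j-\beta_i)$ than the paper's permutation expansion. The trade-off: the paper's m.g.f.\ route is self-contained (independent of Lemma~\ref{unorderedpdf}) and delivers the full m.g.f.\ as a by-product, which would be needed for higher moments; your route is shorter and more transparent, at the cost of depending on the earlier marginal-density lemma.
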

\begin{proof}
See Appendix \ref{sec:Proof_lnexpdet}.
\end{proof}
We note that the above expected natural logarithm of the determinant
for $q \ge n_s$ has been investigated in \cite{Lozano05}, where the
derived expression is rather complicated, involving summations of
determinants whose elements are in terms of the inverse of a certain
Vandermonde matrix. We also note the $q < n_s$ and $q = n_s=s$ cases
have been considered in \cite{Zhang05,Grant02}. Our result, in
contrast, gives a simple \textit{unified} expression which embodies
all of these cases. Moreover, based on \textit{Lemma
\ref{lnexpdet}}, we obtain the following important theorem.

\begin{theorem}\label{lnexpdetNew}
The unconditional expected log-determinant of ${\bf{\Phi }}$ is
given by
\begin{align}\label{eq:lnexpdetNew1}
E\left\{ { {\ln \det \left( {\bf{\Phi }} \right)}   } \right\} =
\sum\limits_{k = 1}^s {\psi \left( {n_s -s + k } \right)} +
\mathcal{K}\sum\limits_{k = q - s  + 1}^q {\det \left( {{\bf{W}}_k }
\right)}
\end{align}
where ${{\bf{W}}_k}$ is a $q \times q$ matrix with entries
\begin{align} \label{eq:WkDefn}
\left\{ {{\bf{W}}_k } \right\}_{m,n}  = \left\{
{\begin{array}{*{20}c}
a^{1 - \tau } \vartheta_{\tau-1}( a ), & {n \ne k}  \\
\varsigma_{m+n}(a), & {n = k}  \\
\end{array}} \right.
\end{align}
where
$\tau$ and $\vartheta_{\tau-1}(\cdot)$ are defined as in
(\ref{eq:VarThetaDefn}), and
\begin{align}
\varsigma_{t}(a) = \sum\limits_{i = 0}^{2q - t} {a^{2q - t - i} \Gamma \left( {p + q - i - 1} \right) \binom{2q-t}{i}
} \left( {\psi \left( {p + q - i - 1} \right) - \sum\limits_{l = 0}^{p + q - i - 2} {g_l \left( {\frac{1}{a}} \right)} } \right)
\end{align}
where 
$g_l (\cdot)$ denotes the auxiliary function
\begin{align} \label{eq:gDefn}
g_l (x) = e^x E_{l+1} (x) \;
\end{align}
with $E_{l+1} \left( \cdot \right)$ denoting the exponential
integral function of order $l+1$.

When $q = s$, (\ref{eq:lnexpdetNew1}) reduces to
\begin{align}\label{eq:lnexpdetNew2}
E\left\{ { {\ln \det \left( {\bf{\Phi }} \right)}   } \right\} &=
\sum\limits_{k = 1}^s {\psi \left( {n_s  - s + k} \right)}
+\sum\limits_{i = 0}^{q - 1} {\sum\limits_{j = 0}^i {\sum\limits_{l
= 0}^{2j} {\sum\limits_{k = 0}^{2q - l - 2} {  { \binom{2q - l -
2}{k}
%
%
\mathcal{A}\left( {i,j,l,p,q} \right)} } } } } \nonumber\\
& \hspace*{1cm} { \times a^{2q - l - 2 - k} \Gamma \left( {p + q - k
- 1} \right) \biggl( {\psi \left( {p + q - k - 1} \right) -
\sum\limits_{m = 0}^{p + q - k - 2} {g_{m} \left( 1/a \right)} }
\biggr)} \; .
\end{align}
\end{theorem}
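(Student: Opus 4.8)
The plan is to remove the conditioning on $\mathbf{L}$ in \emph{Lemma \ref{lnexpdet}} by averaging over the ordered variables $\beta_1,\ldots,\beta_q$, whose joint density is supplied by \emph{Lemma \ref{jointpdf_beta}}. The first term $\sum_{k=1}^s\psi(n_s-s+k)$ in (\ref{eq:lnexpdet}) is deterministic, so only the ratio $\big(\sum_{k=q-s+1}^q\det(\mathbf{Y}_k)\big)/\prod_{i<j}^q(\beta_j-\beta_i)$ must be integrated. Multiplying by the joint density, the single Vandermonde factor in the denominator cancels one of the two Vandermonde factors in the numerator, leaving
\begin{align}
E\{\ln\det(\mathbf{\Phi})\} = \sum_{k=1}^s\psi(n_s-s+k) + \mathcal{K}\sum_{k=q-s+1}^q \int_{0\le\beta_1<\cdots<\beta_q\le 1/a} \det(\mathbf{Y}_k)\prod_{i<j}^q(\beta_j-\beta_i)\prod_{i=1}^q \omega(\beta_i)\,d\beta_i ,
\end{align}
with $\omega(\beta)=\beta^{p-q}e^{-\beta/(1-a\beta)}(1-a\beta)^{-(p+q)}$.

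Next I would write $\prod_{i<j}^q(\beta_j-\beta_i)=\det[\beta_m^{n-1}]_{m,n=1}^q$ and apply the same Andr\'eief-type determinantal integration identity used in the proof of \emph{Theorem \ref{expdetNew}} (cf.\ \cite[Lemma 2]{Shin06}): integrating the product of the two $q\times q$ determinants against the symmetric weight $\prod_i\omega(\beta_i)$ collapses it to a single $q\times q$ determinant with scalar entries (since the integrand is symmetric, passing from the ordered simplex to $[0,1/a]^q$ exactly cancels the factorial produced by the identity, and no spurious constant appears). This yields $E\{\ln\det(\mathbf{\Phi})\}=\sum_{k=1}^s\psi(n_s-s+k)+\mathcal{K}\sum_{k=q-s+1}^q\det(\mathbf{W}_k)$, where the $(m,n)$ entry of $\mathbf{W}_k$ equals $\int_0^{1/a}\beta^{p-q+m+n-2}e^{-\beta/(1-a\beta)}(1-a\beta)^{-(p+q)}\,d\beta$ for $n\ne k$, and the same integral carrying an extra factor $\ln\beta$ when $n=k$.

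It then remains to evaluate these two integrals. Writing $\tau=p-q+m+n$ and substituting $u=\beta/(1-a\beta)$ (so $\beta=u/(1+au)$), the first integral becomes $a^{1-\tau}\int_0^\infty u^{\tau-2}(1+au)^{p+q-\tau}e^{-u}\,du$, which is exactly the standard integral representation of $\Gamma(\tau-1)U(\tau-1,p+q,1/a)$ and hence equals $a^{1-\tau}\vartheta_{\tau-1}(a)$ with $\vartheta$ as in (\ref{eq:VarThetaDefn}); this identifies the $n\ne k$ entries of $\mathbf{W}_k$. For the $n=k$ column the same substitution introduces the factor $\ln\beta=\ln u-\ln(1+au)$. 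The essential point is that $p+q-\tau=2q-(m+n)$ is a non-negative integer, so $(1+au)^{p+q-\tau}$ is a genuine polynomial which I expand by the binomial theorem: the $\ln u$ pieces integrate to terms of the form $\Gamma(p+q-i-1)\psi(p+q-i-1)$, while the $\ln(1+au)$ pieces reduce---by integration by parts, the standard reduction of $\int_0^\infty u^{N}e^{-u}\ln(1+au)\,du$ to exponential integrals, and the recurrence $E_{l+1}(z)=\tfrac{1}{l}\big(e^{-z}-zE_l(z)\big)$---to the auxiliary functions $g_l(1/a)=e^{1/a}E_{l+1}(1/a)$ of (\ref{eq:gDefn}). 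Collecting terms reproduces $\varsigma_{m+n}(a)$ and hence (\ref{eq:lnexpdetNew1}).

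For the special case $q=s$ it is cleanest to argue directly: by (\ref{eq:lndetsp1}), $E\{\ln\det(\mathbf{\Phi})\,|\,\mathbf{L}\}=\sum_{k=1}^s\psi(n_s-s+k)+\ln\det(\mathbf{L})$ with $\ln\det(\mathbf{L})=\sum_{i=1}^q\ln\beta_i$. Averaging and using $E\{\sum_{i=1}^q\ln\beta_i\}=q\,E\{\ln\beta\}$, where $\beta$ has the unordered marginal density (\ref{eq:Unordered_Beta}) --- which already carries the coefficients $\mathcal{A}(i,j,l,p,q)$ --- reduces the task to the single integral $\int_0^{1/a}\ln\beta\;\beta^{p-q+l}(1-a\beta)^{-(p-q+l+2)}e^{-\beta/(1-a\beta)}\,d\beta$, evaluated by the identical expand-and-integrate argument; this yields (\ref{eq:lnexpdetNew2}). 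Throughout, the only genuinely non-routine step is this $\ln\beta$-weighted integral: everything else is a Vandermonde cancellation followed by a known determinantal identity, whereas extracting the exponential-integral contributions and pinning down the summation limit $p+q-i-2$ appearing in $\varsigma_t$ requires care.
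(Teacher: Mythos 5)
Your proposal is correct and follows essentially the same route as the paper: unconditioning \emph{Lemma \ref{lnexpdet}} via \emph{Lemma \ref{jointpdf_beta}} with the Vandermonde cancellation, collapsing the ordered integral of the two determinants to a single determinant via the identity of \cite[Lemma 2]{Shin06}, evaluating the $n\ne k$ entries through the substitution $t=u/(1-au)$ and the integral representation of $U(\cdot,\cdot,1/a)$, handling the $n=k$ entries by binomial expansion of $(1+at)^{2q-m-n}$ with the $\ln t$ and $\ln(1+at)$ pieces yielding the $\psi$ and $g_l(1/a)$ terms, and treating $q=s$ directly from (\ref{eq:lndetsp1}) with the unordered marginal density (\ref{eq:Unordered_Beta}). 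No gaps.
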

\begin{proof}
See Appendix \ref{sec:Proof_lnexpdetNew}.
\end{proof}

\section{Ergodic Capacity Analysis}\label{sec:application}


In this section we present new analytical expressions for the
ergodic capacity of AF MIMO dual-hop systems.

\subsection{Exact Expression for Ergodic Capacity}

%

Substituting (\ref{eq:unorderpdffinal}) into (\ref{eq:ergodic3}) we obtain
\begin{align}\label{eq:blah}
C\left( \rho  \right) &= \mathcal{K} \sum\limits_{l = 1}^q
\sum\limits_{k = q - s + 1}^q {{\sum\limits_{i = 0}^{q + n_s - l}}}
\frac{ \binom{q + n_s - l}{i} {a^{q + n_s  - l - i} } }{{\Gamma
\left( {n_s - q + k} \right)}} {{{G}}_{l,k} } \mathcal{J}_{i,k}
\end{align}
where
\begin{align} \label{eq:intExpression}
\mathcal{J}_{i,k} &= \int_0^\infty  \log_2 \left( {1 + \frac{{\rho
a}}{n_s}\lambda } \right) e^{ - \lambda a} \lambda^{\left( {2n_s +
2k + p - q - i - 3} \right)/2} K_{p + q - i - 1} \left( {2\sqrt
\lambda } \right) d\lambda  \; .
\end{align}
The integral in (\ref{eq:intExpression}) can be evaluated either
numerically, or can be expressed as an infinite series involving
Meijer-G functions.  These results are confirmed in Fig.
\ref{fig:fig11}, where we compare the exact analytical capacity of
AF MIMO dual-hop systems, based on (\ref{eq:blah}) and
(\ref{eq:intExpression}), with Monte-Carlo simulated curves for two
different antenna and relay configurations.  In both cases, there is
exact agreement between the analysis and simulations, as expected.

\subsubsection{Analogies with Single-Hop MIMO Ergodic Capacity}
\label{sec:Analogy}

Let $C^{\rm SH-MIMO}(n_s, n_d, \rho)$ denote the ergodic capacity of
a conventional single-hop i.i.d.\ Rayleigh fading MIMO channel
matrix $\mathbf{H} \in \mathcal{C}^{n_d \times n_s}$, with $n_s$
transmit and $n_d$ receive antennas, and average SNR $\rho$; i.e.\
\begin{align}
C^{\rm SH-MIMO}(n_s, n_d, \rho) = E \left\{ \log_2 \det \left(
\mathbf{I}_{n_d} + \frac{\rho}{n_s} \mathbf{H} \mathbf{H}^\dagger
\right) \right\} \; .
\end{align}
Here, we demonstrate four particular cases for which the AF MIMO
dual-hop channel relates directly to single-hop MIMO channels, in
terms of ergodic capacity.



\begin{itemize}


\item
As the number of relay antennas grows large, i.e.\ $n_r \to \infty$,
the ergodic capacity of AF MIMO dual-hop systems becomes
\begin{align}\label{eq:exact_sp1}
\lim_{n_r \to \infty}  {C\left( \rho  \right)}  = \frac{1}{2} C^{\rm
SH-MIMO}\left(n_s, n_d, \frac{ \rho \alpha }{1 + \rho + \alpha}
\right) \; .
\end{align}

A proof is presented in Appendix \ref{sec:Proof_exactsp1}. Note that
a similar phenomenon has been derived in \cite{Bolcskei06}, for the
special case $n_s = n_d$. Here, (\ref{eq:exact_sp1}) generalizes
that result for arbitrary source and destination antenna
configurations.

\item As the number of source antennas grows large, i.e.\ $n_s \to
\infty$, the ergodic capacity of AF MIMO dual-hop systems becomes
\begin{align}\label{eq:exact_sp2}
\lim_{n_s \to \infty}  {C\left( \rho  \right)}  = \frac{1}{2} C^{\rm
SH-MIMO}\left(n_r, n_d, \alpha \right) - \frac{1}{2} C^{\rm
SH-MIMO}\left(n_r, n_d, \frac{ \alpha }{1 + \rho } \right) \; .
\end{align}
%

A proof is presented in Appendix \ref{sec:Proof_exactsp2}.
Interestingly, we see that as $\rho$ grows large, the right-most
term in (\ref{eq:exact_sp2}) disappears, and the AF MIMO dual-hop
capacity becomes equivalent to one half of the ergodic capacity of a
single-hop MIMO channel with $n_r$ transmit antennas, $n_d$ receive
antennas, and average SNR $\alpha$.

\item As the number of destination antennas grows large, i.e.\ $n_d \to
\infty$, the ergodic capacity of AF MIMO dual-hop systems becomes
\begin{align}\label{eq:exact_sp4}
\lim_{n_d \to \infty}  {C\left( \rho  \right)}  = \frac{1}{2} C^{\rm
SH-MIMO}\left(n_s, n_r, \rho
\right) \; .
\end{align}
The result is trivially obtained by directly taking $\lambda _i^2 \to
\infty$ in (\ref{eq:ergcapa}). We see that the AF MIMO dual-hop
capacity becomes equivalent to one half of the ergodic capacity of a
single-hop MIMO channel with $n_s$ transmit antennas, $n_r$ receive
antennas, and average SNR $\rho$.

\item As the power gain of the relay grows large, i.e.\ $\alpha \to
\infty$, the ergodic capacity of AF MIMO dual-hop systems becomes
\begin{align}\label{eq:exact_sp3}
\lim_{\alpha \to \infty}  {C\left( \rho  \right)}  = \frac{1}{2}
C^{\rm SH-MIMO}\left(n_s, q, \rho \right)  \; .
\end{align}
The result is trivially obtained by directly taking $\alpha \to
\infty$ in (\ref{eq:ergcapa}).
Thus we see the interesting result that even as the relay power gain
becomes very large, the capacity of AF MIMO dual-hop channels remains
bounded, and in fact becomes equivalent to one half of the ergodic
capacity of a single-hop MIMO channel with $n_s$ transmit antennas,
$q = \min(n_r, n_d)$ receive antennas, and average SNR $\rho$.

\end{itemize}

We note that for each of the cases (\ref{eq:exact_sp1})--(\ref{eq:exact_sp3}), closed-form expressions
can be obtained by directly invoking known results from the
single-hop MIMO capacity literature (eg.\ see \cite{Shin03}).



In order to obtain further simplified closed-form results, it is
useful to investigate the ergodic capacity in the high SNR regime.
This is presented in the subsection below.

\subsection{High SNR Capacity Analysis}\label{sec:lowcapaanalysis}

For the high SNR regime, we consider two important scenarios;
namely, one where the source and relay powers grow large
proportionately, and one where the source power grows large but the
relay power is kept fixed.

\subsubsection{Large Source Power, Large Relay Power}

Here we have $\alpha  \to \infty $, $\rho \to \infty$, with $\alpha
/\rho = \beta$, for some fixed $\beta$. Then $\rho a \to
\frac{\alpha }{{n_r }}$ and $a \to \beta /n_r $, and the ergodic
capacity at high SNR reduces to
\begin{align}\label{eq:ergodiccapabeta}
\left. {C\left( \rho  \right)} \right|_{\alpha ,\rho  \to \infty
,\alpha /\rho  = \beta }  = \frac{1}{2}E\left\{ {\log _2 \det \left(
{{\bf{I}}_{n_s }  + \frac{{\rho \beta }}{{n_s n_r }}{\bf{\tilde
H}}_1^\dag  {\bf{\bar L \tilde H}}_1 } \right)} \right\}
\end{align}
where ${\bf{\bar L}} = {\rm diag}\left\{ {\lambda _i^2 /\left( {1 +
\left( {\beta /n_r } \right)\lambda _i^2 } \right)} \right\}_{i =
1}^q $. We can express (\ref{eq:ergodiccapabeta}) in the general
form \cite{Lozano05}
\begin{align}\label{eq:highsnrcapa}
\left. {C\left( \rho  \right)} \right|_{\alpha ,\rho  \to \infty,
,\alpha /\rho  = \beta  } =
  S_\infty  \left( {\frac{{\left. {\rho }
\right|_{{\rm dB}} }}{{3{\rm dB}}} - \mathcal{L}_\infty  } \right) +
o\left( 1 \right)
\end{align}
where $ 3{\rm dB} = 10\log _{10} (2) $.  Here, the two key
parameters are $S_\infty$, which denotes the high-SNR slope in
bits/s/Hz/(3 ${\rm dB}$) given by
\begin{align}\label{eq:highsnrslope}
S_\infty   = \mathop {\lim }\limits_{\alpha,\rho  \to \infty }
\frac{\left. {C\left( \rho  \right)} \right|_{\alpha ,\rho  \to
\infty, \alpha /\rho = \beta  } }{{\log _2 (\rho) }}
\end{align}
and $\mathcal{L}_\infty$, which represents the high-SNR power offset
in 3 ${\rm dB}$ units given by
\begin{align}\label{eq:highsnroffset}
\mathcal{L}_\infty   = \mathop {\lim }\limits_{\alpha, \rho  \to
\infty } \left( {\log _2 (\rho)  - \frac{\left. {C\left( \rho
\right)} \right|_{\alpha ,\rho  \to \infty, \alpha /\rho = \beta  }
}{{S_\infty }}} \right) .
\end{align}
From (\ref{eq:ergodiccapabeta}), we can evaluate
$\mathcal{S}_\infty$ and $\mathcal{L}_\infty$ in closed-form as
follows.
\begin{theorem}\label{exact_sp3}
For the case $\alpha  \to \infty $, $\rho \to \infty$, with $\alpha
/\rho = \beta$, the high-SNR slope and high-SNR power offset of AF
MIMO dual-hop systems are given by
\begin{align}\label{eq:slope}
S_\infty = \frac{s}{2} {~~} {\rm bit/s/Hz/}(3 {\rm dB})
\end{align}
and\footnote{Note that here we explicitly indicate the dependence of
the high SNR power offset on $n_s$, $n_r$, and $n_d$.}
\begin{align}\label{eq:offset}
\mathcal{L}_\infty (n_s, n_r, n_d)   = \log _2 \left( {\frac{{n_s
n_r }}{\beta }} \right) - \frac{1}{{s\ln 2}}\left[ {\sum\limits_{k =
1}^s {\psi \left( {n_s  + k - s} \right)}  +
\mathcal{K}\sum\limits_{k = q - s + 1}^q {\det \left( {{\bf{\bar
W}}_k } \right)} } \right]
\end{align}
respectively, where ${{\bf{\bar W}}_k}$ is a $q \times q$ matrix
with entries
\begin{align}
\left\{ {{\bf{\bar W}}_k } \right\}_{m,n}  = \left\{
{\begin{array}{*{20}c}
   \left( {\frac{\beta }{{n_r }}} \right)^{1 - \tau
} \vartheta_{\tau-1}\left( \frac{\beta}{n_r} \right), & {n \ne k,}  \\
\varsigma_{m+n}\left( \frac{\beta}{n_r} \right), & {n = k.}  \\
%
\end{array}} \right. \;
\end{align}
For the case $q = s$ (i.e.\ corresponding to $\min(n_s, n_r, n_d) =
n_d$ or $\min(n_s, n_r, n_d) = n_r$), the high SNR power offset
(\ref{eq:offset}) admits the alternative form
\begin{align}\label{eq:poweroffset_sp0}
{\mathcal{L}_{\infty}  } (n_s, n_r, n_d)  &= \log _2 \left(
{\frac{{n_s n_r }}{\beta }} \right) - \frac{1}{{s\ln 2}}\left[
{\sum\limits_{k = 1}^s {\psi \left( {n_s  - s + k} \right)} +
\sum\limits_{i = 0}^{q - 1} {\sum\limits_{j = 0}^i {\sum\limits_{l =
0}^{2j} {\sum\limits_{k = 0}^{2q - l - 2} \binom{2q - l - 2}{k}
%
} } } } \right.\nonumber\\
& \hspace*{-1cm} \times\mathcal{A}\left( {i,j,l,p,q} \right)\left. {
\left( {\frac{\beta }{{n_r }}} \right)^{2q - l - 2 - k} \Gamma
\left( {p + q - k - 1} \right)\left( {\psi \left( {p + q - k - 1}
\right) - \sum\limits_{m = 0}^{p + q - k - 2} {g_m \left(
{\frac{{n_r }}{\beta }} \right)} } \right)} \right] \; .
\end{align}
\end{theorem}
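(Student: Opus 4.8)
The plan is to derive the affine high-SNR expansion of the mutual information in (\ref{eq:ergodiccapabeta}) and then read $S_\infty$ and $\mathcal{L}_\infty$ off the definitions (\ref{eq:highsnrslope})--(\ref{eq:highsnroffset}). The structural observation that makes this work is that, under the scaling $\alpha,\rho\to\infty$ with $\alpha/\rho=\beta$ held fixed, the matrix ${\bf{\bar L}}={\rm diag}\{\lambda_i^2/(1+(\beta/n_r)\lambda_i^2)\}_{i=1}^q$ appearing in (\ref{eq:ergodiccapabeta}) does \emph{not} depend on $\rho$, and it has exactly the structure of the matrix ${\bf{L}}$ in (\ref{eq:Ldefinition}) with the substitution $a\mapsto\beta/n_r$. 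Consequently the conditional and unconditional random-determinant results of Section \ref{sec:prelimibaries} carry over verbatim under this substitution.

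First I would reduce the log-determinant to a full-rank one. Write $c=\rho\beta/(n_sn_r)$, so that $c\to\infty$. If $q\ge n_s$, then $s=n_s$ and ${\bf{\tilde H}}_1^\dag{\bf{\bar L \tilde H}}_1$ is the $n_s\times n_s$ matrix ${\bf{\Phi}}$ of \emph{Lemma \ref{lnexpdet}} (with ${\bf{L}}\mapsto{\bf{\bar L}}$), of full rank $n_s$ almost surely. If $q<n_s$, then $s=q$ and the product is rank-deficient, so I would invoke the identity (\ref{eq:DetProp}) with ${\bf{A}}=c\,{\bf{\tilde H}}_1^\dag$ and ${\bf{B}}={\bf{\bar L \tilde H}}_1$ to obtain $\det({\bf{I}}_{n_s}+c\,{\bf{\tilde H}}_1^\dag{\bf{\bar L \tilde H}}_1)=\det({\bf{I}}_q+c\,{\bf{\bar L \tilde H}}_1{\bf{\tilde H}}_1^\dag)$, and now ${\bf{\bar L \tilde H}}_1{\bf{\tilde H}}_1^\dag$ is the $q\times q$ matrix ${\bf{\Phi}}$ of \emph{Lemma \ref{lnexpdet}} in the regime $q<n_s$, of full rank $q$ almost surely. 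In both cases the argument of the determinant is ${\bf{I}}_s+c\,{\bf{\Phi}}$ with ${\bf{\Phi}}$ full rank, whence
\[
\log_2\det\!\left({\bf{I}}_s+c\,{\bf{\Phi}}\right)=s\log_2 c+\log_2\det{\bf{\Phi}}+\log_2\det\!\left({\bf{I}}_s+(c\,{\bf{\Phi}})^{-1}\right),
\]
where the final term is nonnegative, decreasing in $c$, and tends to $0$ almost surely as $\rho\to\infty$.

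Taking expectations in (\ref{eq:ergodiccapabeta}) and using $\log_2 c=\log_2\rho-\log_2(n_sn_r/\beta)$ then gives
\[
\left.C(\rho)\right|_{\alpha,\rho\to\infty,\,\alpha/\rho=\beta}=\frac{s}{2}\left(\log_2\rho-\log_2\frac{n_sn_r}{\beta}+\frac{1}{s\ln 2}E\{\ln\det{\bf{\Phi}}\}\right)+\frac{1}{2}E\!\left\{\log_2\det\!\left({\bf{I}}_s+(c\,{\bf{\Phi}})^{-1}\right)\right\}.
\]
The remaining expectation vanishes as $\rho\to\infty$ by monotone convergence; this is licit because the $c=c_0$ term is integrable, which follows from the finiteness of $E\{\ln\det({\bf{I}}_s+c_0{\bf{\Phi}})\}$ (a finite mutual information) together with the finiteness of $E\{\ln\det{\bf{\Phi}}\}$ furnished by \emph{Theorem \ref{lnexpdetNew}} --- alternatively one may appeal to the general single-hop high-SNR framework of \cite{Lozano05}. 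Matching the surviving terms against (\ref{eq:highsnrcapa}) and the definitions (\ref{eq:highsnrslope})--(\ref{eq:highsnroffset}) immediately gives $S_\infty=s/2$, i.e.\ the multiplexing gain equals ${\rm rank}({\bf{\tilde H}}_1^\dag{\bf{\bar L \tilde H}}_1)=\min(n_s,q)=s$ halved by the half-duplex pre-log, and $\mathcal{L}_\infty(n_s,n_r,n_d)=\log_2(n_sn_r/\beta)-(s\ln 2)^{-1}E\{\ln\det{\bf{\Phi}}\}$. Substituting \emph{Theorem \ref{lnexpdetNew}}, evaluated at $a=\beta/n_r$, for $E\{\ln\det{\bf{\Phi}}\}$ produces (\ref{eq:offset}) with ${\bf{\bar W}}_k$ obtained from the entries (\ref{eq:WkDefn}) under $a\mapsto\beta/n_r$; and when $q=s$ --- which, since $q=\min(n_d,n_r)$ and $s=\min(n_s,q)$, is exactly the case $\min(n_s,n_r,n_d)\in\{n_d,n_r\}$ --- the reduced form (\ref{eq:lnexpdetNew2}) yields the alternative expression (\ref{eq:poweroffset_sp0}).

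The main obstacle is the rank-deficient regime $q<n_s$ in the reduction step: one must route the singular product ${\bf{\tilde H}}_1^\dag{\bf{\bar L \tilde H}}_1$ through (\ref{eq:DetProp}) so that the full-rank matrix ${\bf{\Phi}}$ of \emph{Lemma \ref{lnexpdet}}, and not the singular one, appears in the affine expansion, and then justify (by monotone convergence, using finiteness of $E\{\ln\det{\bf{\Phi}}\}$ from \emph{Theorem \ref{lnexpdetNew}}) that the residual determinant term is integrable so that the pointwise expansion survives under expectation. Once this is in place, the remainder is the routine bookkeeping of specialising \emph{Theorem \ref{lnexpdetNew}} to $a=\beta/n_r$.
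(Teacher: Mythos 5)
Your proposal is correct and follows essentially the same route as the paper: split on $q<n_s$ versus $q\ge n_s$, use the identity (\ref{eq:DetProp}) to route the rank-deficient case through the full-rank matrix ${\bf{\Phi}}$ of \emph{Lemma \ref{lnexpdet}}, extract the affine expansion $s\log_2 c+\log_2\det{\bf{\Phi}}+o(1)$, and evaluate $E\{\ln\det{\bf{\Phi}}\}$ via \emph{Theorem \ref{lnexpdetNew}} with $a=\beta/n_r$ (and via (\ref{eq:lnexpdetNew2}) for the $q=s$ form (\ref{eq:poweroffset_sp0})). Your explicit justification of the vanishing residual term $E\{\log_2\det({\bf{I}}_s+(c{\bf{\Phi}})^{-1})\}$ by monotone convergence is a detail the paper's proof glosses over, but it does not change the argument.
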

\begin{proof}
See Appendix \ref{sec:Proof_exactsp3}.
\end{proof}
Interestingly, we see that the high SNR slope depends only on the
minimum system dimension, i.e.\ $s = \min (n_s, n_r, n_d)$, whereas
the high SNR power offset is a much more intricate function of
$n_s$, $n_r$, and $n_d$.  Fig. \ref{fig:fig11} depicts the
analytical high SNR capacity approximations for AF MIMO dual-hop
systems, based on (\ref{eq:slope}) and (\ref{eq:offset}). These
approximations are seen to converge to their respective exact
capacity curves for quite moderate SNR levels (eg. $< 20 \rm dB$).

It is important to note that \emph{Theorem \ref{exact_sp3}} presents
an exact characterization of the key high SNR ergodic capacity
parameters, $\mathcal{S}_\infty$ and $\mathcal{L}_\infty(\cdot)$,
for arbitrary numbers of antennas at the source, relay, and
destination terminals.  We now examine some particularizations of
\emph{Theorem \ref{exact_sp3}}, in which these expressions reduce to
simple forms.

%


\begin{corollary}
Let $n_r = 1$.  Then $\mathcal{S}_\infty = 1/2$, and
$\mathcal{L}_\infty (\cdot)$ reduces to
\begin{align}
 {\mathcal{L}_{\infty}  } (n_s, 1, n_d)  = \log _2 \left(
{\frac{{n_s }}{\beta }} \right) - \frac{1}{{\ln 2}}\left[ {\psi
\left( {n_s } \right) + \psi \left( n_d \right) - \sum\limits_{m =
0}^{n_d - 1} {g_m \left( {\frac{{1 }}{\beta }} \right)} } \right] \;
.
\end{align}
\end{corollary}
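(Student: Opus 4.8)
The plan is to obtain this corollary as a direct specialization of \emph{Theorem \ref{exact_sp3}} to the antenna triple $(n_s, 1, n_d)$, with no new machinery required. First I would record the consequences of setting $n_r = 1$: since $n_d \ge 1$ and $n_s \ge 1$ we get $q = \min(n_d, n_r) = 1$, $p = \max(n_d, n_r) = n_d$, and $s = \min(n_s, q) = 1$, so in particular $q = s$. The high-SNR slope is then immediate from (\ref{eq:slope}): $S_\infty = s/2 = 1/2$. Because $q = s$, the power offset is governed by the alternative form (\ref{eq:poweroffset_sp0}) rather than the determinantal expression (\ref{eq:offset}), which is the convenient starting point.

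Next I would track how the sums in (\ref{eq:poweroffset_sp0}) collapse when $q = 1$. The outer index $i$ runs over $0 \le i \le q-1 = 0$, forcing $i = 0$; then $j$ runs over $0 \le j \le i = 0$; then $l$ over $0 \le l \le 2j = 0$; and finally $k$ over $0 \le k \le 2q - l - 2 = 0$. Hence the quadruple sum degenerates to the single term $i = j = l = k = 0$. In parallel, the leading sum $\sum_{k=1}^{s}\psi(n_s - s + k)$ reduces to $\psi(n_s)$, and the prefactor $\log_2(n_s n_r/\beta)$ becomes $\log_2(n_s/\beta)$, already matching the first two pieces of the claimed formula.

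It then remains only to evaluate the surviving summand. With $i = j = l = k = 0$, $p = n_d$, $q = 1$, $n_r = 1$ one reads off $\binom{2q-l-2}{k} = \binom{0}{0} = 1$, $(\beta/n_r)^{2q-l-2-k} = \beta^{0} = 1$, $\Gamma(p+q-k-1) = \Gamma(n_d) = (n_d-1)!$, and from the definition of $\mathcal{A}(\cdot)$ in \emph{Lemma \ref{jointpdf_beta}}, $\mathcal{A}(0,0,0,n_d,1) = 1/(n_d-1)!$. The two factorials cancel; the inner exponential-integral sum $\sum_{m=0}^{p+q-k-2} g_m(n_r/\beta)$ becomes $\sum_{m=0}^{n_d-1} g_m(1/\beta)$; and the bracketed factor $\psi(p+q-k-1) - \sum_m g_m(\cdot)$ becomes $\psi(n_d) - \sum_{m=0}^{n_d-1} g_m(1/\beta)$. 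Substituting these into (\ref{eq:poweroffset_sp0}), using $s\ln 2 = \ln 2$, and assembling the three surviving contributions gives precisely $\mathcal{L}_\infty(n_s, 1, n_d) = \log_2(n_s/\beta) - (1/\ln 2)[\psi(n_s) + \psi(n_d) - \sum_{m=0}^{n_d-1} g_m(1/\beta)]$. There is no genuine obstacle in this argument; the only point that warrants a line of care is verifying that the binomial/Gamma factors from $\mathcal{A}$ and from the summand cancel exactly rather than leaving a stray constant, which a short check confirms.
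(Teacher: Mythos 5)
Your proposal is correct and is exactly the intended argument: the paper states this corollary without a proof, treating it as a direct particularization of \emph{Theorem \ref{exact_sp3}}, and your specialization (using the $q=s$ form (\ref{eq:poweroffset_sp0}) with $q=s=1$, $p=n_d$, so the quadruple sum collapses to the single term $i=j=l=k=0$ and $\mathcal{A}(0,0,0,n_d,1)\,\Gamma(n_d)=1$) reproduces the stated formula. All the index ranges and cancellations you check are right, so nothing is missing.
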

Note that, as $n_s$ grows large, $\psi \left( {n_s } \right) = \ln
n_s + o(1)$ \cite[Eq. 6.3.18.]{Abramowitz74}, where the $o(1)$ term
disappears as $n_s \to \infty$, and as such we have
\begin{align}
\lim_{n_s \to \infty} {\mathcal{L}_{\infty}  } (n_s, 1, n_d) = \log
_2 \left( {\frac{1}{\beta }} \right) - \frac{1}{{\ln 2}}\left[ {\psi
\left( n_d \right) - \sum\limits_{m = 0}^{n_d - 1} {g_m \left(
{\frac{{1}}{\beta }} \right)} } \right] \; .
\end{align}

\begin{corollary}
Let $n_d = 1$.  Then $\mathcal{S}_\infty = 1/2$, and
$\mathcal{L}_\infty (\cdot)$ reduces to
\begin{align}
 {\mathcal{L}_{\infty}  } (n_s, n_r, 1)  = \log _2 \left(
{\frac{{n_s n_r }}{\beta }} \right) - \frac{1}{{\ln 2}}\left[ {\psi
\left( {n_s } \right) + \psi \left( n_r \right) - \sum\limits_{m =
0}^{n_r - 1} {g_m \left( {\frac{{n_r }}{\beta }} \right)} } \right]
\; .
\end{align}
In this case, as $n_s$ grows large we have
\begin{align}
\lim_{n_s \to \infty} {\mathcal{L}_{\infty}  } (n_s, n_r, 1) = \log
_2 \left( {\frac{n_r}{\beta }} \right) - \frac{1}{{\ln 2}}\left[
{\psi \left( n_r \right) - \sum\limits_{m = 0}^{n_r - 1} {g_m \left(
{\frac{{n_r }}{\beta }} \right)} } \right] \; .
\end{align}
\end{corollary}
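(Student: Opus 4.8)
The plan is to obtain the corollary as a direct specialization of \emph{Theorem \ref{exact_sp3}} to $n_d = 1$. First I would record the dimensional reductions: $n_d = 1$ forces $q = \min(n_d, n_r) = 1$, $p = \max(n_d, n_r) = n_r$, and, since $n_s \ge 1$, $s = \min(n_s, q) = 1$. Thus $q = s$ (precisely the case $\min(n_s, n_r, n_d) = n_d$), so (\ref{eq:slope}) gives $\mathcal{S}_\infty = s/2 = 1/2$ at once, and the power offset is the one in (\ref{eq:poweroffset_sp0}); that is the expression I would evaluate.

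Next I would collapse every summation in (\ref{eq:poweroffset_sp0}) using $q = s = 1$. The leading sum is simply $\sum_{k=1}^{s}\psi(n_s - s + k) = \psi(n_s)$. In the quadruple sum, $i$ runs over $0, \ldots, q - 1$, i.e.\ $i = 0$ only; this forces $j = 0$, then $l = 0$, and finally $k$ runs over $0, \ldots, 2q - l - 2 = 0$, i.e.\ $k = 0$. Hence only the term $(i,j,l,k) = (0,0,0,0)$ remains. For this term: $\binom{2q - l - 2}{k} = \binom{0}{0} = 1$; inserting $i = j = l = 0$, $(\kappa_1, \kappa_2) = (n_r, 1)$ into the definition of $\mathcal{A}$ gives $\mathcal{A}(0,0,0,n_r,1) = 1/\Gamma(n_r) = 1/(n_r - 1)!$; the power $(\beta/n_r)^{2q - l - 2 - k} = 1$; $\Gamma(p + q - k - 1) = \Gamma(n_r) = (n_r - 1)!$; $\psi(p + q - k - 1) = \psi(n_r)$; and the inner sum is $\sum_{m=0}^{p + q - k - 2} g_m(n_r/\beta) = \sum_{m=0}^{n_r - 1} g_m(n_r/\beta)$. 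The two factorials cancel, and writing $\log_2(n_s n_r/\beta)$ for the leading logarithm reproduces exactly the stated expression for $\mathcal{L}_\infty(n_s, n_r, 1)$. As an independent check one may instead specialize the general offset (\ref{eq:offset}): here $\mathcal{K} = 1/(n_r - 1)!$, the sum over $k$ collapses to the single $1 \times 1$ determinant $\det(\bar{\mathbf{W}}_1) = \varsigma_2(\beta/n_r) = (n_r - 1)!\,(\psi(n_r) - \sum_{m=0}^{n_r - 1} g_m(n_r/\beta))$, and the factorials again cancel.

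Finally, for the large-$n_s$ regime I would use $\psi(n_s) = \ln n_s + o(1)$ \cite[Eq.\ 6.3.18]{Abramowitz74}, so that $\frac{1}{\ln 2}\psi(n_s) = \log_2 n_s + o(1)$; splitting $\log_2(n_s n_r/\beta) = \log_2 n_s + \log_2(n_r/\beta)$, the two $\log_2 n_s$ terms cancel and the $n_s$-independent remainder is the claimed limit. There is no genuine obstacle here — the argument is a pure bookkeeping specialization of \emph{Theorem \ref{exact_sp3}}; the only points deserving care are verifying that the index ranges really do shrink to a single term and that $\mathcal{A}(0,0,0,n_r,1)$ simplifies to $1/(n_r - 1)!$.
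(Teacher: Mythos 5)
Your proposal is correct and follows exactly the route the paper intends: the corollary is presented as a particularization of \emph{Theorem \ref{exact_sp3}}, and your bookkeeping (the collapse of the index ranges to the single term $(i,j,l,k)=(0,0,0,0)$, the evaluation $\mathcal{A}(0,0,0,n_r,1)=1/(n_r-1)!$ cancelling $\Gamma(n_r)$, and the use of $\psi(n_s)=\ln n_s + o(1)$ for the limit) checks out. The cross-check against (\ref{eq:offset}) via $\mathcal{K}\det(\bar{\mathbf{W}}_1)=\mathcal{K}\,\varsigma_2(\beta/n_r)$ is a nice confirmation but not a departure from the paper's argument.
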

Based on these results, we can easily examine the effect of the
relative power gain factor $\beta$ on the ergodic capacity.  In
particular, noting that $g_l \left( x \right)$ in (\ref{eq:gDefn})
is a monotonically decreasing function of $x$ in the
interval\footnote{This conclusion is easily established by noting
that ${\rm d}/{\rm d}x \left( g_l \left( x \right) \right) = e^x
\left[ {E_{l + 1} \left( x \right) - E_l \left( x \right)} \right]
$, and using \cite[Eq. 5.1.17]{Abramowitz74}.} $\left[ {0,\infty }
\right)$, we see that increasing $\beta$, whilst having no effect on
the high SNR capacity slope $\mathcal{S}_\infty$, results in
decreasing the high SNR power offset $\mathcal{L}_\infty(\cdot)$,
and therefore increasing the ergodic capacity in the high SNR
regime.


\begin{corollary}\label{poweroffset_sp1}
Let $n_s = n_r = 1$. Adding $k$ destination antennas, while not
altering $S_\infty$, would reduce the high SNR power offset as
\begin{align}\label{eq:poweroffset_sp1}
\delta (n_d, k) &\defeq \mathcal{L}_\infty  \left( {1,1,n_d + k}
\right) -
\mathcal{L}_\infty \left( {1,1,n_d} \right) \nonumber \\
 &=
- \frac{1}{{\ln 2}} { \sum\limits_{l = n_d}^{n_d + k - 1} \left(
\frac{1}{\ell} + {g_l \left( {\frac{1}{\beta }} \right)} \right) }
\; .
\end{align}
\end{corollary}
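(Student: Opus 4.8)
The plan is to read $\delta(n_d,k)$ off the closed-form high SNR power offset already derived, by specialising it to the extreme configuration $n_s=n_r=1$. In this case $q=\min(n_d,n_r)=1$ and $s=\min(n_s,q)=1$, so we fall in the $q=s$ branch of \emph{Theorem \ref{exact_sp3}}; concretely, setting $n_s=1$ in the expression for $\mathcal{L}_\infty(n_s,1,n_d)$ obtained above for $n_r=1$ gives
\begin{align}
\mathcal{L}_\infty(1,1,n_d)=\log_2(1/\beta)-\frac{1}{\ln 2}\left[\psi(1)+\psi(n_d)-\sum_{m=0}^{n_d-1}g_m(1/\beta)\right].
\end{align}
First I would write this down with $n_d$ replaced by $n_d+k$, and subtract.

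In the difference $\delta(n_d,k)=\mathcal{L}_\infty(1,1,n_d+k)-\mathcal{L}_\infty(1,1,n_d)$ the $n_d$-independent pieces $\log_2(1/\beta)$ and $\psi(1)$ cancel, leaving only a digamma difference and a difference of two partial sums of the $g_m$. The partial-sum difference collapses to $\sum_{m=n_d}^{n_d+k-1}g_m(1/\beta)$, since the two sums agree on all indices $m\le n_d-1$. For the digamma part I would iterate the recurrence $\psi(x+1)=\psi(x)+1/x$ \cite{Abramowitz74} a total of $k$ times to get $\psi(n_d+k)-\psi(n_d)=\sum_{\ell=n_d}^{n_d+k-1}1/\ell$. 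Substituting both evaluations and collecting the two sums over the common range $\ell=n_d,\dots,n_d+k-1$ yields the claimed closed form for $\delta(n_d,k)$.

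I do not expect any real obstacle here: the proof is a one-line substitution into a closed form already established in \emph{Theorem \ref{exact_sp3}}, followed by the elementary digamma recurrence, and the only thing to be careful about is the index bookkeeping in the two partial-sum subtractions and in peeling off $k$ steps of the recursion. The stated ``reduction'' is then immediate, since $1/\ell$ and $g_\ell(1/\beta)$ are positive, so every summand in $\delta(n_d,k)$ is positive and $\delta(n_d,k)<0$; moreover $g_\ell(\cdot)>0$ on $[0,\infty)$ and both $1/\ell$ and $g_\ell(1/\beta)$ are decreasing in $\ell$, so the incremental benefit of each further destination antenna is strictly smaller than that of the preceding one, consistent with the discussion following the corollary.
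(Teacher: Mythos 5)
Your method is exactly the one the paper (implicitly) uses---specialize the $n_r=1$ corollary to $n_s=1$, subtract, and apply the digamma recurrence---but the final step you wave through does not land where you claim it does, and this is precisely the index/sign bookkeeping you flagged as the only delicate point. In $\mathcal{L}_\infty(1,1,n_d)=\log_2(1/\beta)-\frac{1}{\ln 2}\bigl[\psi(1)+\psi(n_d)-\sum_{m=0}^{n_d-1}g_m(1/\beta)\bigr]$ the $g$-sum sits \emph{inside} the bracket with a minus sign, and that bracket is itself multiplied by $-1/\ln 2$. Subtracting therefore gives
\begin{align}
\delta(n_d,k)=-\frac{1}{\ln 2}\Bigl[\bigl(\psi(n_d+k)-\psi(n_d)\bigr)-\sum_{\ell=n_d}^{n_d+k-1}g_\ell(1/\beta)\Bigr]
=-\frac{1}{\ln 2}\sum_{\ell=n_d}^{n_d+k-1}\Bigl(\frac{1}{\ell}-g_\ell\bigl(1/\beta\bigr)\Bigr),
\end{align}
i.e.\ a \emph{minus} sign in front of $g_\ell$, not the plus sign in the printed statement. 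The paper's own numerical example settles which version is right: at $\beta=1$, $g_1(1)=e\,E_2(1)\approx 0.4037$, so the minus version gives $\delta(1,1)\approx -0.860$ in $3$-dB units, i.e.\ $-2.58$ dB, matching the quoted $\mathcal{L}_\infty(1,1,2)=\mathcal{L}_\infty(1,1,1)-2.58$ dB, whereas the plus version would give $-6.08$ dB. So the corollary as printed has a sign typo, and your derivation, carried out correctly, proves the corrected formula rather than ``the claimed closed form'' as you assert.

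Two consequences for the rest of your argument. First, negativity of $\delta$ no longer follows from $g_\ell>0$ alone; you need $g_\ell(x)=e^xE_{\ell+1}(x)<g_\ell(0)=1/\ell$ for $x>0$, which does hold because $g_\ell$ is monotonically decreasing on $[0,\infty)$ (as the paper notes in a footnote) and $E_{\ell+1}(0)=1/\ell$ for $\ell\ge 1$. Second, your closing remark that each summand is decreasing in $\ell$ because $1/\ell$ and $g_\ell(1/\beta)$ are each decreasing does not transfer to the difference $1/\ell-g_\ell(1/\beta)$; monotonicity of a difference of two decreasing sequences requires a separate argument. Neither point is fatal, but both need to be restated for the corrected formula.
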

Note that, to obtain this result, we have invoked the definition of
the digamma function \cite{Gradshteyn00}.
Since $ g_l \left( x \right) > 0$ for $x \in \left[ {0,\infty }
\right)$, it is clear that the high SNR power offset
$\mathcal{L}_\infty (\cdot)$ in (\ref{eq:poweroffset_sp1}) is a
decreasing function of $k$, thereby confirming the intuitive notion that adding more antennas to the
destination terminal has the effect of improving the ergodic
capacity.

\begin{example}
With respect to $\beta = 1$,
\begin{align}
& \mathcal{L}_\infty  \left( {1,1,2} \right) = \mathcal{L}_\infty
\left( {1,1,1} \right) - 2.58 \; {\rm dB} \\
& \mathcal{L}_\infty  \left( {1,1,3} \right) = \mathcal{L}_\infty
\left( {1,1,1} \right) - 3.46 \; {\rm dB} \\
& \mathcal{L}_\infty  \left( {1,1,\infty} \right) =
\mathcal{L}_\infty \left( {1,1,1} \right) - 5.08 \; {\rm dB}
\end{align}
where $\mathcal{L}_\infty \left( {1,1,1} \right) =$ { 7.57} dB.
\end{example}


%

\begin{table} \label{table_poweroffset1}
\centering \caption{High SNR offset as function of $n_d$, where $n_s
= 2$, $n_r = 3$ and $\beta = 2$}
\begin{tabular}{|c|c|c|c|c|c|c|}
\hline
 $n_d$ & 4 & 6 & 8 & 10 & 12 & 14 \\    
\hline
$\mathcal{L}_\infty$ (dB) & 2.593 & 1.573 & 1.147 & 0.88 & 0.73 & 0.622 \\ 
\hline
\end{tabular}
\end{table}

\begin{table} \label{table_poweroffset2}
\centering \caption{High SNR offset as function of $n_r$, where $n_s
= 2$, $n_d = 4$ and $\beta = 2$}
\begin{tabular}{|c|c|c|c|c|c|c|}
\hline
 $n_r$ & 3 & 5 & 7 & 9 & 11 & 13\\
\hline
$\mathcal{L}_\infty$ (dB) & 2.593 & 1.251 & 0.847 & 0.636  & 0.493 & 0.429\\
\hline
\end{tabular}
\end{table}

\begin{figure}
\centering
\includegraphics[scale=0.8]{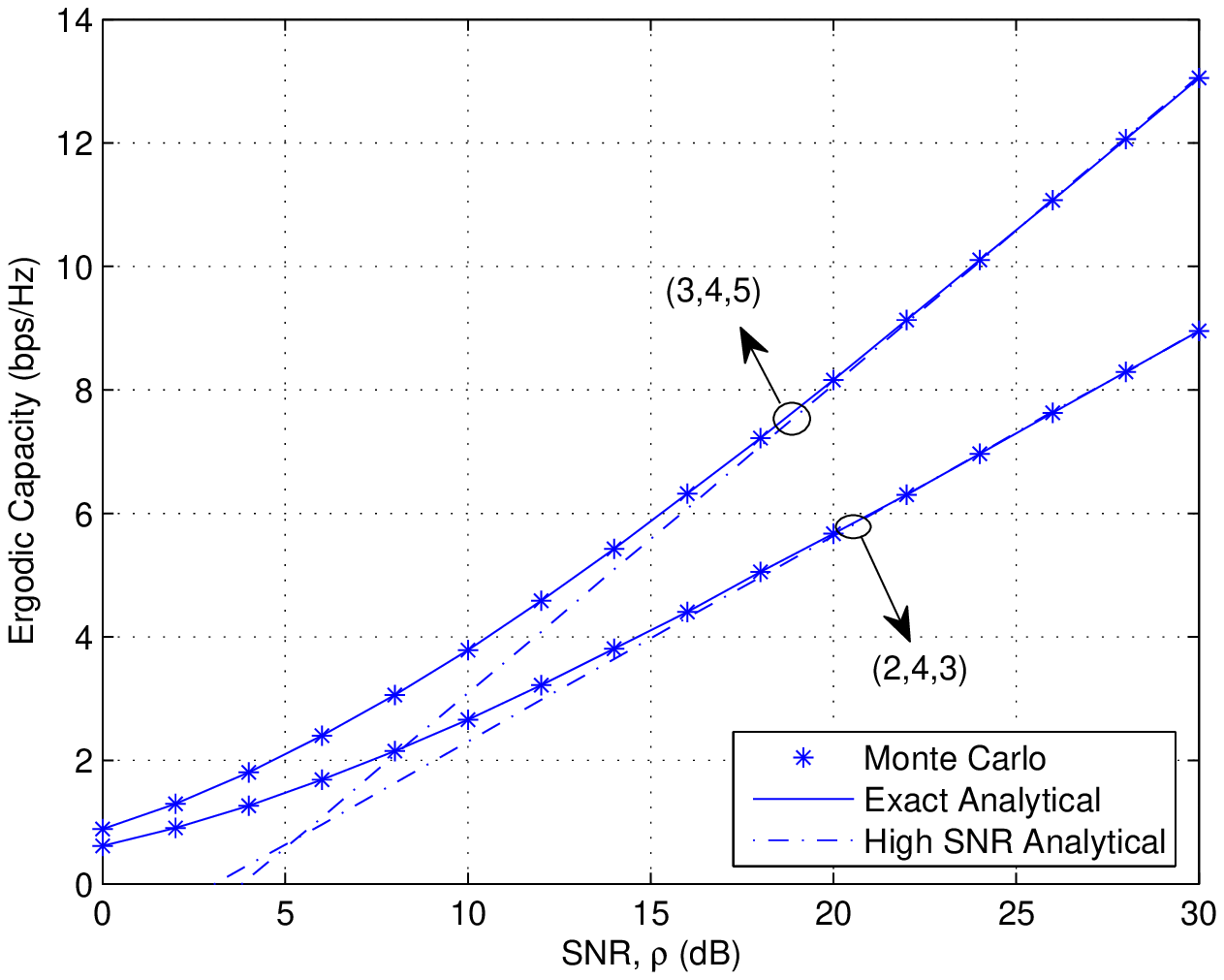}
\captionstyle{mystyle2}\caption{Comparison of exact
analytical, high SNR analytical, and Monte Carlo simulation results for ergodic capacity of
AF MIMO dual-hop systems  with different antenna configurations.  Results are shown for
$\alpha / \rho = 2$.}\small
\label{fig:fig11}
\end{figure}

\begin{figure}
\centering
\includegraphics[scale=0.8]{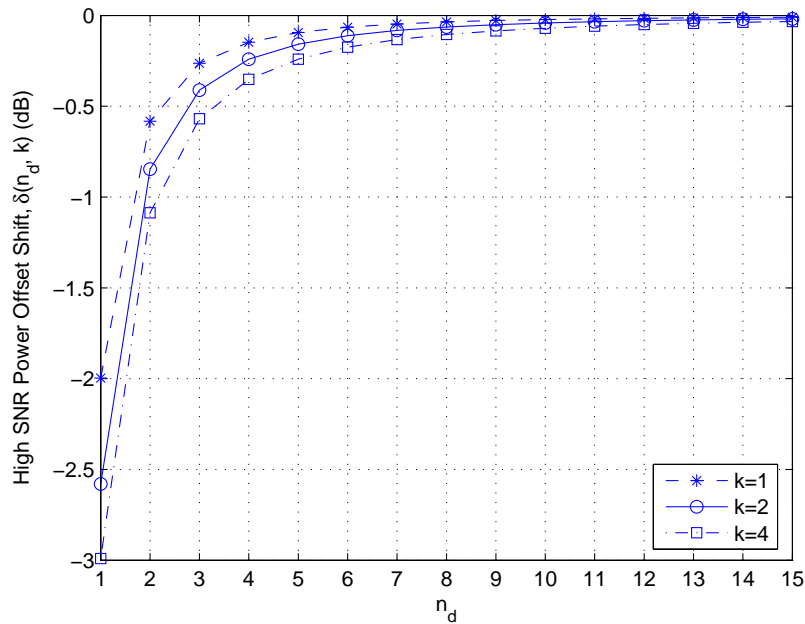}
\captionstyle{mystyle2}\caption{High SNR power offset shift, in decibels, obtaining by adding either (a) one antenna to the destination, (b) two antennas to the destination, or (c) four antennas to the destination.  Results are shown for $n_s = n_r = 1$ and $\alpha / \rho = 2$.}
\label{fig:fig13}
\end{figure}

Fig. \ref{fig:fig13} illustrates the relationship in
\textit{Corollary \ref{poweroffset_sp1}}, where the high SNR power
offset shift $\delta(n_d, k)$ is plotted against $n_d$, for $k=1$,
$k=2$, and $k = 4$. As expected, for a fixed value of $k$,
$\delta(n_d, k)$ is an increasing function of $n_d$, approaching a
limit of $0$ dB as $n_d \to \infty$. Table I and Table II present
the high SNR power offset as a function of $n_d$ and $n_r$
respectively, for $n_s = 2$. We see that when $n_d$ (resp.\ $n_r$)
is small, then a small increase in $n_d$ (resp.\ $n_r$) yields a
significant improvement in terms of the high SNR power offset.
However, in agreement with Fig. \ref{fig:fig13}, adding more and
more antennas yields diminishing returns.


\subsubsection{Large Source Power, Fixed Relay Power}
\label{sec:lowcapaanalysis2} Here we take $\rho \to \infty$ and keep
$\alpha$ fixed. Then, noting that $\left. {\rho a} \right|_{\rho
\to \infty } \to \alpha/n_r$, the ergodic capacity reduces to
\begin{align}\label{eq:ergodicrholarge}
\lim_{\rho \to \infty}  {C\left( \rho  \right)}   =
\frac{s}{2}E\left\{ {\log _2 \left( {1 + \frac{\alpha }{{n_s n_r
}}\tilde \lambda } \right)} \right\}
\end{align}
where ${\tilde \lambda }$ denotes an unordered eigenvalue of
${{\bf{\tilde H}}_1^\dag  {\bf{\tilde L\tilde H}}_1 }$. Using
\textit{Corollary \ref{final_unorderedproduct}}, we can evaluate
this constant as
\begin{align} \label{eq:HighSNRExpr}
\lim_{\rho \to \infty}  {C\left( \rho  \right)}   =
\frac{\mathcal{K}}{{\ln 2}}\sum\limits_{l = 1}^q\sum\limits_{k = q -
s + 1}^q { { {{{\bar G}}_{l,k} } } } {\mathcal {F}}_{l,k}
\end{align}
where
\begin{align}
{\mathcal {F}}_{l,k} = \int_0^\infty  {\ln \left( {1 + \frac{\alpha
}{{n_s n_r }}y} \right)} y^{\left( {n_s  + 2k + p + l - 2q - 3}
\right)/2} K_{p + l - n_s  - 1} \left( {2\sqrt y } \right)dy.
\end{align}
To evaluate the remaining integral in (\ref{eq:ergodicrholarge}), we
first express the logarithm in terms of the Meijer G-function as
\cite[Eq. 8.4.6.5]{Prudnikov90}
\begin{align}\label{eq:logMeijerG}
{\log _2 \left( {1 + \frac{\alpha }{{n_s n_r }}\tilde \lambda } \right)}
= \frac{1}{{\ln 2}}G_{2,2}^{1,2} \left( {\frac{\alpha }{{n_s n_r
}}\tilde \lambda \left| {\begin{array}{*{20}c}
   {1,} & 1  \\
   {1,} & 0  \\
\end{array}} \right.} \right)  \;
\end{align}
and then apply the integral relationships \cite[Eq.\
7.821.3]{Gradshteyn00} and \cite[Eq.\ 9.31.1]{Gradshteyn00}. This
leads to the following closed-form expression for the ergodic
capacity of AF MIMO dual-hop systems as the source power $\rho$
grows large for fixed relay power $\alpha$,
\begin{align} \label{eq:CHighSNRFixedAlpha}
& \lim_{\rho \to \infty} {C\left( \rho  \right)} =
\frac{\mathcal{K}}{{2\ln 2}}\sum\limits_{l = 1}^q\sum\limits_{k = q
- s + 1}^q { {\frac{{{{{\bar G}}_{l,k} } }}{{\Gamma \left( {n_s - q
+ k} \right)}}} } \nonumber \\
& \hspace*{3cm} \times G_{2,4}^{4,1} \left( {\frac{{n_s n_r
}}{\alpha }\left| {\begin{array}{*{20}c}
   {0,} & {1,} & {} & {}  \\
   {k + p + l - q - 1,} & {n_s  + k - q,} & {0,} & 0  \\
\end{array}} \right.} \right) \; .
\end{align}
This result shows that if we fix $\alpha$ and take $\rho$ large,
then the ergodic capacity of AF MIMO dual-hop systems remains
bounded (as a function of $\alpha$). This confirms the intuitive
notion that the capacity is restricted by the weakest link in the
relay network; in this case, the relay-destination link.

\section{Tight Bounds on the Ergodic Capacity}\label{sec:applicationbounds}

In order to obtain further simplified closed-form results, in this
section we derive new upper and lower bounds on the ergodic
capacity.

\subsection{Upper Bound}

The following theorem presents a new tight upper bound on the ergodic capacity of AF
MIMO dual-hop systems.
\begin{theorem}\label{upperbound}
The ergodic capacity of AF MIMO dual-hop systems is upper bounded by
\begin{align}\label{eq:upperbound}
C\left( \rho  \right) \le C_U (\rho)  =  \frac{1}{2}\log _2 \left(
{\mathcal{K}\det ( {\bf{\bar \Xi }} )} \right)
\end{align}
where ${\bf{\bar \Xi }}$ is defined in (\ref{eq:XiBar}).
\end{theorem}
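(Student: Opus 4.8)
The plan is to derive the bound by a single application of Jensen's inequality to the capacity representation in (\ref{eq:ergcapa}),
\begin{align*}
C\left( \rho  \right) = \frac{1}{2}E\left\{ {\log _2 \det \left( {{\mathbf{I}}_{n_s }  + \frac{{\rho a}}{{n_s }}{\mathbf{\tilde H}}_1^\dag  {\mathbf{L\tilde H}}_1 } \right)} \right\},
\end{align*}
and then to evaluate the resulting expectation in closed form by invoking \emph{Theorem \ref{expdetNew}}.

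First I would set $Z \defeq \det ( {\mathbf{I}}_{n_s } + (\rho a/n_s){\mathbf{\tilde H}}_1^\dag {\mathbf{L\tilde H}}_1 )$ and observe that, since the diagonal entries of $\mathbf{L}$ in (\ref{eq:Ldefinition}) are nonnegative, ${\mathbf{\tilde H}}_1^\dag {\mathbf{L\tilde H}}_1$ is positive semidefinite, so $Z \ge 1$ almost surely. Because $\log_2(\cdot)$ is concave on $(0,\infty)$, Jensen's inequality gives $E\{\log_2 Z\} \le \log_2 E\{Z\}$, and hence
\begin{align*}
C\left( \rho  \right) \le \frac{1}{2}\log_2 E\left\{ \det \left( {{\mathbf{I}}_{n_s }  + \frac{{\rho a}}{{n_s }}{\mathbf{\tilde H}}_1^\dag  {\mathbf{L\tilde H}}_1 } \right) \right\}.
\end{align*}
The remaining step is to substitute the unconditional expected determinant from \emph{Theorem \ref{expdetNew}}, namely $E\{Z\} = \mathcal{K}\det ( {\bf{\bar \Xi }} )$ with ${\bf{\bar \Xi }}$ the $q\times q$ matrix defined in (\ref{eq:XiBar}). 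This directly yields $C(\rho) \le \frac{1}{2}\log_2 ( \mathcal{K}\det ( {\bf{\bar \Xi }} ) ) = C_U(\rho)$, which is the claim.

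There is no genuine obstacle here once \emph{Theorem \ref{expdetNew}} is available; the only points requiring minor care are the almost-sure lower bound $Z \ge 1$ (so that $\log_2$ is evaluated on the region where it is concave) and the finiteness of $E\{Z\}$ (so that Jensen's inequality is legitimate), the latter being guaranteed by the explicit finite closed form of \emph{Theorem \ref{expdetNew}}. The substantive effort has already been expended in establishing the expected characteristic polynomial in \emph{Lemma \ref{expdet}} and \emph{Theorem \ref{expdetNew}}; the tightness of the bound, which is illustrated numerically, is a separate matter and is not needed for the statement itself.
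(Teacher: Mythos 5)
Your proposal is correct and follows exactly the paper's own argument: a single application of Jensen's inequality to the concave $\log_2\det(\cdot)$ functional, followed by substitution of the closed-form expected determinant from \emph{Theorem \ref{expdetNew}}. The extra remarks on the almost-sure bound $Z \ge 1$ and the finiteness of $E\{Z\}$ are sound minor refinements that the paper leaves implicit.
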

\begin{proof}
Application of Jensen's inequality gives\footnote{Note that this inequality has also been applied
in the ergodic capacity analysis of single-user single-hop MIMO
systems (see eg.\ \cite{Zhang05,Mckay05,Jin07}).}
\begin{align}
C\left( \rho  \right) \leqslant \frac{1} {2}\log _2 E\left\{ {\det
\left( {{\mathbf{I}}_{n_s }  + \frac{{\rho a}}{{n_s
}}{\mathbf{\tilde H}}_1^\dag  {\mathbf{L\tilde H}}_1 } \right)}
\right\} \; .
\end{align}
The result now follows by using \textit{Theorem \ref{expdetNew}}.
\end{proof}

\begin{figure}
\centering
\includegraphics[scale=0.8]{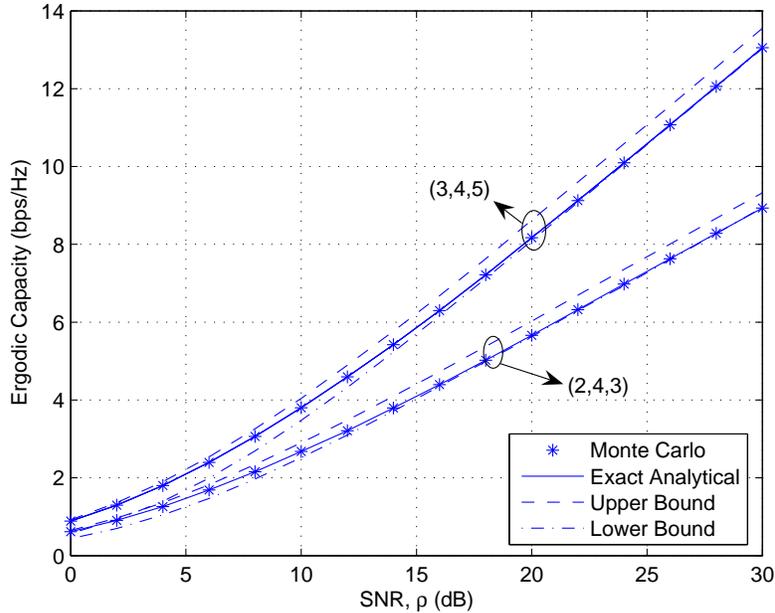}
\captionstyle{mystyle2}\caption{Comparison of bounds, exact
analytical, high SNR analytical, and Monte Carlo simulation results
for ergodic capacity of AF MIMO dual-hop systems  with different
antenna configurations. Results are shown for $\alpha / \rho = 2$.} \label{fig:fig10}
\end{figure}

Fig. \ref{fig:fig10} compares the closed-form upper bound
(\ref{eq:upperbound}) with the exact analytical ergodic capacity
based on (\ref{eq:blah}) and (\ref{eq:intExpression}), for two
different
AF MIMO dual-hop system configurations.
The results are shown as a function of SNR $\rho$, with $\alpha = 2\rho$.
We see that the closed-form upper bound is very tight for all SNRs,
for both system configurations considered.  Moreover, we see that in
the low SNR regime (e.g.\ $\rho \approx 5$ dB), the upper bound and
exact capacity curves coincide.

The ensuing corollaries present some example scenarios for which the
upper bound (\ref{eq:upperbound}) reduces to simplified forms.

%



\begin{corollary}\label{upperbound_sp3}
For the case $n_s  \to \infty $, $C_U (\rho)$ becomes
\begin{align} \label{eq:upperbound_sp3}
\lim_{n_s  \to \infty} C_U (\rho)  = \frac{1}{2}\log _2 \left(
{\mathcal{K}{ \det ( {\bf{\bar \Xi }}_1 )}} \right)
\end{align}
where ${{\bf{\bar \Xi }}}_1$ is a $q \times q$ matrix with entries
\begin{align}
\left\{ {{\bf{\bar \Xi }}}_1 \right\}_{m,n}  =   a^{1 - \tau } \vartheta_{\tau-1}( a )  + \rho a^{1 - \tau }  \vartheta_{\tau}( a ) \; .
\end{align}
\end{corollary}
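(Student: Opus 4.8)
The plan is to read off the exact closed form of the upper bound from \emph{Theorem \ref{upperbound}}, namely $C_U(\rho)=\frac{1}{2}\log_2\!\left(\mathcal{K}\det({\bf{\bar \Xi }})\right)$ with ${\bf{\bar \Xi }}$ the $q\times q$ matrix of (\ref{eq:XiBar}), and simply to pass to the limit $n_s\to\infty$ inside this expression. The key structural observation is that $q=\min(n_d,n_r)$, $p=\max(n_d,n_r)$, and hence the normalizing constant $\mathcal{K}$, are all independent of $n_s$; therefore the \emph{size} of ${\bf{\bar \Xi }}$ and the prefactor $\mathcal{K}$ do not change as $n_s$ grows, and only the entries $\{{\bf{\bar \Xi }}\}_{m,n}$ need to be tracked.

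First I would handle the branch structure in (\ref{eq:XiBar}). The first branch is used only when the column index satisfies $n\le q-n_s$; since $1\le n\le q$, this can occur only if $q-n_s\ge 1$, i.e.\ $n_s\le q-1$. Consequently, for every $n_s\ge q$ all entries of ${\bf{\bar \Xi }}$ are given by the second branch, $\{{\bf{\bar \Xi }}\}_{m,n}=a^{1-\tau}\bigl(\vartheta_{\tau-1}(a)+\tfrac{\rho}{n_s}(n_s-q+n)\,\vartheta_{\tau}(a)\bigr)$ with $\tau=p-q+m+n$. Next I would take the entrywise limit: as $n_s\to\infty$ we have $\tfrac{\rho}{n_s}(n_s-q+n)=\rho\bigl(1-(q-n)/n_s\bigr)\to\rho$, while $a$, $\tau$, and the quantities $\vartheta_{\tau-1}(a)=\Gamma(\tau-1)U(\tau-1,p+q,1/a)$ and $\vartheta_{\tau}(a)$ defined in (\ref{eq:VarThetaDefn}) are finite and free of $n_s$ (note $\tau-1=p-q+m+n-1\ge 1$ since $m,n\ge1$). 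Hence $\{{\bf{\bar \Xi }}\}_{m,n}\to a^{1-\tau}\vartheta_{\tau-1}(a)+\rho\,a^{1-\tau}\vartheta_{\tau}(a)=\{{\bf{\bar \Xi }}_1\}_{m,n}$.

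Finally, since the determinant of a fixed-size matrix is a polynomial, hence continuous, function of its finitely many entries, the entrywise convergence ${\bf{\bar \Xi }}\to{\bf{\bar \Xi }}_1$ gives $\det({\bf{\bar \Xi }})\to\det({\bf{\bar \Xi }}_1)$, and continuity of $\log_2(\cdot)$ then yields $\lim_{n_s\to\infty}C_U(\rho)=\frac{1}{2}\log_2\!\left(\mathcal{K}\det({\bf{\bar \Xi }}_1)\right)$, which is (\ref{eq:upperbound_sp3}). There is no substantial obstacle here: the argument is an elementary limit passed through a continuous function, and the only point deserving a line of care is the vanishing of the first branch of (\ref{eq:XiBar}) for large $n_s$, which is exactly what makes the limiting matrix ${\bf{\bar \Xi }}_1$ homogeneous across its columns.
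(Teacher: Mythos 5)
Your proposal is correct and is precisely the ``straightforward'' argument the paper omits: since $q$, $p$, $\tau$, $a$, and $\mathcal{K}$ are all free of $n_s$, the only $n_s$-dependence in $C_U(\rho)$ sits in the factor $\tfrac{\rho}{n_s}(n_s-q+n)\to\rho$ (and in the first branch of (\ref{eq:XiBar}), which is vacuous once $n_s\ge q$), so the entrywise limit followed by continuity of $\det(\cdot)$ and $\log_2(\cdot)$ gives (\ref{eq:upperbound_sp3}). Nothing further is needed.
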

\begin{proof}
The proof is straightforward and is omitted.
\end{proof}
This result shows that in AF MIMO dual-hop systems, when the numbers
of antennas at both the relay and destination remain fixed, the
ergodic capacity remains bounded as the number of source antennas
grows large. This is in agreement with the results in Section
\ref{sec:Analogy}.

Note that for the scenarios $n_r \to \infty$ and $n_d \to \infty$,
simplified closed-form results can also be obtained by taking the
corresponding limits in (\ref{eq:upperbound_sp3}) or, alternatively,
by using the equivalent single-hop MIMO capacity relations in
(\ref{eq:exact_sp1}) and (\ref{eq:exact_sp4}), and applying known
upper bounds for single-hop MIMO channels in \cite{Oyman03}.  We
omit these expressions here for the sake of brevity.


\begin{corollary}\label{upperbound_sp4}
Let $n_r = 1$.  Then, $C_U (\rho)$ reduces to
\begin{align}\label{eq:upboundsp4}
C_U^{n_r  = 1}(\rho) = \frac{1}{2}\log _2 \left( {1 + \rho n_d e^{\frac{{1 + \rho }}{\alpha }} E_{n_d  + 1} \left( {\frac{{1 + \rho }}{\alpha }} \right)} \right).
\end{align}
When $n_d \to \infty$, $C_U^{n_r  = 1}(\rho)$ becomes
\begin{align}\label{eq:Cusp4}
\lim_{n_d \to \infty} C_U^{n_r  = 1} (\rho) = \frac{1}{2}\log _2
\left( {1 + \rho } \right).
\end{align}
When $\alpha \to \infty$, $C_U^{n_r  = 1}(\rho)$ becomes
\begin{align}\label{eq:Cusp41}
\lim_{\alpha  \to \infty} C_U^{n_r  = 1}(\rho)  =
\frac{1}{2}\log _2 \left( {1 + \rho } \right) \; .
\end{align}
\end{corollary}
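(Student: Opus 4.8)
The plan is to specialize \textit{Theorem \ref{upperbound}} to $n_r=1$, in which case the $q\times q$ matrix ${\bf{\bar \Xi }}$ degenerates to a scalar. First I would record the dimension reductions: with $n_r=1$ we have $q=\min(n_d,n_r)=1$, $p=\max(n_d,n_r)=n_d$, and $s=\min(n_s,q)=1$, so $\det({\bf{\bar \Xi }})=\{{\bf{\bar \Xi }}\}_{1,1}$. Evaluating this entry from (\ref{eq:XiBar}) with $m=n=1$ gives $\tau=p-q+m+n=n_d+1$, and since $n=1>q-n_s=1-n_s$ the second branch applies with $n_s-q+n=n_s$; hence $\{{\bf{\bar \Xi }}\}_{1,1}=a^{-n_d}\bigl(\vartheta_{n_d}(a)+\rho\,\vartheta_{n_d+1}(a)\bigr)$, with $\vartheta_{\tau}(\cdot)$ as in (\ref{eq:VarThetaDefn}) and $p+q=n_d+1$.

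Next I would reduce the confluent hypergeometric functions $U(n_d,n_d+1,1/a)$ and $U(n_d+1,n_d+1,1/a)$ entering $\vartheta_{n_d}(a)$ and $\vartheta_{n_d+1}(a)$ to elementary form. The identity $U(c,c+1,z)=z^{-c}$ gives $\vartheta_{n_d}(a)=\Gamma(n_d)a^{n_d}$ at once. For the second I would apply the Kummer transformation $U(c,c,z)=z^{1-c}U(1,2-c,z)$ followed by the integral representation of $U$, which identifies $U(1,1-n_d,z)$ with $e^{z}E_{n_d+1}(z)$ (the exponential integral of order $n_d+1$); this yields $U(n_d+1,n_d+1,1/a)=a^{n_d}e^{1/a}E_{n_d+1}(1/a)$ and hence $\vartheta_{n_d+1}(a)=\Gamma(n_d+1)a^{n_d}e^{1/a}E_{n_d+1}(1/a)$. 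Combining these with the value $\mathcal{K}=1/\Gamma(n_d)$ (from its definition with $q=1$, $p=n_d$), the ratio $\Gamma(n_d+1)/\Gamma(n_d)=n_d$, and $1/a=(1+\rho)/\alpha$ (from (\ref{eq:aDefn}) with $n_r=1$), everything collapses to $\mathcal{K}\det({\bf{\bar \Xi }})=1+\rho\,n_d\,e^{(1+\rho)/\alpha}E_{n_d+1}\bigl((1+\rho)/\alpha\bigr)$; substituting into (\ref{eq:upperbound}) gives (\ref{eq:upboundsp4}).

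The two limits then follow directly from (\ref{eq:upboundsp4}). For $n_d\to\infty$ with $x=(1+\rho)/\alpha$ held fixed, I would use the recurrence $n_d E_{n_d+1}(x)=e^{-x}-x\,E_{n_d}(x)$ together with $E_{n_d}(x)\to0$ (since $E_{n_d}(x)\le\int_1^\infty t^{-n_d}\,dt=1/(n_d-1)$) to conclude $n_d e^{x}E_{n_d+1}(x)\to1$, so the bound tends to $\tfrac12\log_2(1+\rho)$, which is (\ref{eq:Cusp4}). For $\alpha\to\infty$ we instead have $x\to0$, and since $e^{x}E_{n_d+1}(x)$ is continuous at $x=0$ with $E_{n_d+1}(0)=1/n_d$, the term $\rho\,n_d\,e^{x}E_{n_d+1}(x)\to\rho$, giving (\ref{eq:Cusp41}).

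I expect the only delicate step to be the reduction of $U(n_d+1,n_d+1,1/a)$ to the exponential integral: one must pick the correct Kummer/contiguous relation and track the argument shifts so that order $n_d+1$ (rather than $n_d$) emerges, and then verify the $n_d\to\infty$ behaviour of $E_n(\cdot)$ cleanly. Everything after the scalar collapse $q=s=1$ is routine bookkeeping.
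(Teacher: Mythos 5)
Your proposal is correct and follows essentially the same route as the paper: specialize Theorem \ref{upperbound} to the scalar case $q=s=1$, $p=n_d$, reduce $U(n_d,n_d+1,\cdot)$ and $U(n_d+1,n_d+1,\cdot)$ via $U(c,c+1,z)=z^{-c}$ and $U(c,c,z)=e^{z}z^{1-c}E_c(z)$ (which you rederive from the Kummer transformation rather than cite), and then pass to the limits. The only cosmetic difference is in the $n_d\to\infty$ step, where the paper sandwiches $e^{x}E_{n_d+1}(x)$ between $1/(x+n_d+1)$ and $1/(x+n_d)$ while you use the recurrence $n_dE_{n_d+1}(x)=e^{-x}-xE_{n_d}(x)$; both yield the same conclusion.
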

\begin{proof}
See Appendix \ref{sec:Proof_Upperboundsp4}.
\end{proof}
This shows the interesting result that, if a single relay antenna is
employed, then when either the number of destination antennas $n_d$
or the relay gain $\alpha$ grows large, the ergodic capacity is upper
bounded by the capacity of an AWGN SISO channel.

\begin{corollary}\label{upperbound_sp1}
In the high SNR regime, (i.e.\ as $\rho \to \infty$) for fixed relay
gain $\alpha$, $C_U (\rho)$ becomes
\begin{align}\label{eq:upperboundsp1}
\lim_{\rho  \to \infty}  C_U(\rho) = \frac{1}{2}\log _2 \left(
{\mathcal{K}\det ( {{\bf{\tilde \Xi }}} )} \right)
\end{align}
where ${{\bf{\tilde \Xi }}}$ is a $q \times q$ matrix with entries
\begin{align}
\left\{ {{\bf{\tilde \Xi }}} \right\}_{m,n}  = \left\{
{\begin{array}{*{20}c}
   {\Gamma \left( {\tau  - 1} \right)}, & {n \le q - n_s, }  \\
   {\Gamma \left( {\tau  - 1} \right)\left( {1 + \frac{\alpha }{{n_s n_r }}\left( {n_s  - q + n} \right)\left( {\tau  - 1} \right)} \right)}, & {n > q - n_s. }  \\
\end{array}} \right. \;
\end{align}
\end{corollary}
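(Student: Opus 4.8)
The plan is to obtain the result directly from the closed-form upper bound of \emph{Theorem \ref{upperbound}}, namely $C_U(\rho) = \frac{1}{2}\log_2\bigl(\mathcal{K}\det({\bf{\bar \Xi }})\bigr)$ with ${\bf{\bar \Xi }}$ as in (\ref{eq:XiBar}), by passing to the limit $\rho\to\infty$. Since $\mathcal{K}$ is independent of $\rho$ and both $\det(\cdot)$ and $\log_2(\cdot)$ are continuous, it suffices to compute the entrywise limit of ${\bf{\bar \Xi }}$. The scalings that drive everything follow from (\ref{eq:aDefn}): with $a = \alpha/\bigl(n_r(1+\rho)\bigr)$, letting $\rho\to\infty$ for fixed $\alpha$ gives $a\to 0$ while $\rho a\to\alpha/n_r$ (the latter already observed in Section \ref{sec:lowcapaanalysis2}). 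The key structural point is that $\rho a$ stays $O(1)$, so both contributions to the $n>q-n_s$ entries of ${\bf{\bar \Xi }}$ persist in the limit.

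The only analytic ingredient required is the large-argument behaviour of the confluent hypergeometric function of the second kind, $U(\tau,p+q,z) = z^{-\tau}\bigl(1+O(1/z)\bigr)$ as $z\to\infty$ \cite[Eq. 13.1.8]{Abramowitz74}. Taking $z = 1/a$ in the definition (\ref{eq:VarThetaDefn}) gives
\begin{align}
\vartheta_\tau(a) &= \Gamma(\tau)\,U(\tau,p+q,1/a) = \Gamma(\tau)\,a^{\tau}\bigl(1+O(a)\bigr), \nonumber \\
\vartheta_{\tau-1}(a) &= \Gamma(\tau-1)\,a^{\tau-1}\bigl(1+O(a)\bigr),
\end{align}
with $\tau = p-q+m+n$. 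Substituting into (\ref{eq:XiBar}), for $n\le q-n_s$ one has
\begin{align}
a^{1-\tau}\vartheta_{\tau-1}(a) = \Gamma(\tau-1)\bigl(1+O(a)\bigr)\;\longrightarrow\;\Gamma(\tau-1),
\end{align}
whereas for $n>q-n_s$,
\begin{align}
& a^{1-\tau}\left(\vartheta_{\tau-1}(a) + \frac{\rho}{n_s}(n_s-q+n)\,\vartheta_\tau(a)\right) \nonumber \\
&\quad = \Gamma(\tau-1)\bigl(1+O(a)\bigr) + \frac{\rho a}{n_s}(n_s-q+n)\,\Gamma(\tau)\bigl(1+O(a)\bigr) \nonumber \\
&\quad \;\longrightarrow\; \Gamma(\tau-1)\left(1 + \frac{\alpha}{n_s n_r}(n_s-q+n)(\tau-1)\right),
\end{align}
where the last step uses $\rho a\to\alpha/n_r$ together with $\Gamma(\tau)=(\tau-1)\Gamma(\tau-1)$. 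These two limits coincide exactly with the entries of ${\bf{\tilde \Xi }}$ stated in the corollary, hence ${\bf{\bar \Xi }}\to{\bf{\tilde \Xi }}$ entrywise, and continuity of the determinant and logarithm yields $\lim_{\rho\to\infty}C_U(\rho) = \frac{1}{2}\log_2\bigl(\mathcal{K}\det({\bf{\tilde \Xi }})\bigr)$, which is (\ref{eq:upperboundsp1}).

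I expect no genuine obstacle here: the argument is a careful limit computation rather than a new idea. The single point needing care is the coupled scaling just described --- one must \emph{not} simply send $a\to 0$ and drop the $a$-dependent factors in the $n>q-n_s$ entries, since $\rho a$ remains finite; retaining the leading term of the $U$-asymptotics is precisely what generates the $\alpha(\tau-1)/(n_s n_r)$ term in ${\bf{\tilde \Xi }}$. (An alternative route avoiding $U$-asymptotics is to take $a\to 0$ directly in \emph{Theorem \ref{expdetNew}} via an integral or series form of $\vartheta_\tau$, but passing through $U(\tau,p+q,z)\sim z^{-\tau}$ is the most economical.)
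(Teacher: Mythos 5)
Your proposal is correct and follows essentially the same route as the paper's proof, which likewise starts from Theorem \ref{upperbound} and applies the large-argument expansion $U(c,b,z)\sim z^{-c}$ to the entries of ${\bf{\bar \Xi }}$ as $a\to 0$. You simply spell out the entrywise limit computation (including the point that $\rho a\to\alpha/n_r$ keeps the second term alive) that the paper leaves implicit.
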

\begin{proof}
See Appendix \ref{sec:Proof_Upperboundsp1}.
\end{proof}
This expression is clearly much simpler than the exact ergodic
capacity expression given for this regime in
(\ref{eq:CHighSNRFixedAlpha}).


\subsection{Lower Bound}

The following theorem presents a new tight lower bound on the ergodic capacity of AF
MIMO dual-hop systems.

\begin{theorem}\label{lowerbound}
The ergodic capacity of AF MIMO dual-hop systems is lower bounded by
\begin{align}\label{eq:lowerbound}
C\left( \rho  \right) &\ge C_L (\rho) = \frac{s}{2}\log _2 \left( {1
+ \frac{{\rho a}}{{n_s }}\exp \left( \frac{1}{s}\left[
{\sum\limits_{k = 1}^s {\psi \left( {n_s  - s + k} \right)} +
\mathcal{K}\sum\limits_{k = q - s + 1}^q {\det \left( {{\bf{ W}}_k }
\right)} } \right]
 \right)} \right)
\end{align}
where $\mathbf{W}_k$ is defined as in (\ref{eq:WkDefn}).
\end{theorem}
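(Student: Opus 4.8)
The plan is to apply Jensen's inequality in the ``opposite'' direction from the upper bound, exploiting the concavity of $\log_2(1 + e^x)$ together with the concavity of $\log$. Starting from the exact ergodic capacity in the form \eqref{eq:ergcapa}, I would first move to the equivalent expression involving the matrix $\mathbf{\Phi}$ of \emph{Lemma \ref{lnexpdet}}: since $\det(\mathbf{I}_{n_s} + (\rho a/n_s)\mathbf{\tilde H}_1^\dag \mathbf{L\tilde H}_1) = \prod_{i=1}^{s}(1 + (\rho a/n_s)\mu_i)$, where $\mu_1, \ldots, \mu_s$ are the nonzero eigenvalues of $\mathbf{\Phi}$, and noting that $\log_2\left(1 + (\rho a/n_s) e^{x}\right)$ is a convex function of $x$, I can write $C(\rho) = \tfrac12 E\{\sum_{i} \log_2(1 + (\rho a/n_s)\mu_i)\}$ and then lower bound each summand. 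The cleanest route is the standard one (cf.\ \cite{Zhang05,Grant02}): by concavity of $\log_2(1+ c\,e^x)$ in $x$ we have, for each realization,
\begin{align}
\sum_{i=1}^{s} \log_2\!\left(1 + \frac{\rho a}{n_s}\mu_i\right) \ge s \log_2\!\left(1 + \frac{\rho a}{n_s}\exp\!\left(\frac{1}{s}\sum_{i=1}^{s}\ln\mu_i\right)\right) = s\log_2\!\left(1 + \frac{\rho a}{n_s}\left(\det\mathbf{\Phi}\right)^{1/s}\right),
\end{align}
which requires only the scalar inequality $\frac1s\sum_i \log_2(1+c\,x_i)\ge \log_2(1+c\,(\prod_i x_i)^{1/s})$, itself a consequence of Jensen applied to the concave map $x\mapsto \log_2(1+c\,e^x)$ evaluated at $x_i=\ln\mu_i$.

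Next I would take expectations and apply Jensen's inequality once more, this time to the concave function $z \mapsto s\log_2(1 + (\rho a/n_s) z^{1/s})$ of $z = \det\mathbf{\Phi} \ge 0$, or equivalently to $\ln\det\mathbf{\Phi}$: since $x \mapsto s\log_2(1+(\rho a/n_s)e^{x/s})$ is concave in $x$,
\begin{align}
E\left\{ s\log_2\!\left(1 + \frac{\rho a}{n_s}\left(\det\mathbf{\Phi}\right)^{1/s}\right)\right\} \ge s\log_2\!\left(1 + \frac{\rho a}{n_s}\exp\!\left(\frac{1}{s}E\left\{\ln\det\mathbf{\Phi}\right\}\right)\right).
\end{align}
Combining the two displays and inserting the factor $\tfrac12$ gives $C(\rho)\ge \tfrac{s}{2}\log_2(1 + \tfrac{\rho a}{n_s}\exp(\tfrac1s E\{\ln\det\mathbf{\Phi}\}))$. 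The proof then closes by substituting the closed-form value of $E\{\ln\det\mathbf{\Phi}\}$ from \emph{Theorem \ref{lnexpdetNew}}, namely \eqref{eq:lnexpdetNew1}, which yields exactly \eqref{eq:lowerbound} with $\mathbf{W}_k$ as in \eqref{eq:WkDefn}. Note one must also restore $\rho a$ in terms of the original parameters via \eqref{eq:aDefn} so that the statement matches; this is immediate since $\rho a/n_s$ is left symbolic in \eqref{eq:lowerbound}.

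The one subtlety — and the main point requiring care rather than the main \emph{obstacle} — is the passage from $\mathbf{\tilde H}_1^\dag \mathbf{L\tilde H}_1$ to $\mathbf{\Phi}$ and back: when $q < n_s$ the matrix $\mathbf{\tilde H}_1^\dag \mathbf{L\tilde H}_1$ is $n_s\times n_s$ but rank-deficient, so its nonzero eigenvalues coincide with those of $\mathbf{L\tilde H}_1\mathbf{\tilde H}_1^\dag$, and the determinant identity \eqref{eq:DetProp} must be used to reduce the $\log\det$ to a product over exactly $s=\min(n_s,q)$ nonzero eigenvalues, matching the definition of $\mathbf{\Phi}$ in \emph{Lemma \ref{lnexpdet}}. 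Once that identification is in place, everything else is a two-fold application of Jensen's inequality plus citation of \emph{Theorem \ref{lnexpdetNew}}. I expect no genuine difficulty here; the bound's \emph{tightness} (which the surrounding text asserts and the figures confirm) is not part of the formal claim and so need not be argued.
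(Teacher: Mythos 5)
Your route is essentially the paper's: the paper simply cites the Jensen-type MIMO capacity lower bound of Oyman et al.\ \cite{Oyman03} (applied separately to the $q<n_s$ and $q\ge n_s$ cases via the identity (\ref{eq:DetProp}), exactly as you handle the rank-deficient case), and then invokes \emph{Theorem \ref{lnexpdetNew}} for $E\{\ln\det\mathbf{\Phi}\}$; you merely re-derive that cited bound from scratch. The two inequalities you write down are correct and the chain closes as claimed. However, your stated justification has the curvature backwards in both places: $x\mapsto\log_2(1+c\,e^x)$ has second derivative $c\,e^x/\bigl((1+c\,e^x)^2\ln 2\bigr)>0$ and is therefore \emph{convex}, not concave, and it is precisely this convexity that yields $\frac{1}{s}\sum_i f(x_i)\ge f\bigl(\frac{1}{s}\sum_i x_i\bigr)$ in your first display and $E\{f(X)\}\ge f(E\{X\})$ in your second. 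If the function were concave as you assert --- and the map $z\mapsto\log_2(1+c\,z^{1/s})$ of $z=\det\mathbf{\Phi}$ genuinely is concave, which is exactly why you must work in the variable $\ln\det\mathbf{\Phi}$ rather than $\det\mathbf{\Phi}$ --- both Jensen steps would reverse and you would obtain an upper bound instead. Fix the convexity labels and the proof is complete.
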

\begin{proof}
See Appendix \ref{sec:Proof_Lowerbound}.
\end{proof}
In Fig. \ref{fig:fig10}, this closed-form lower bound is compared
with the exact ergodic capacity of AF MIMO dual-hop systems. Results
are shown for different system configurations. The lower bound is
clearly seen to be tight for the entire range of SNRs. Moreover, in
the high SNR regime (e.g.\ $\rho \approx 15$ dB), we see that the
lower bound and exact capacity curves coincide.


The ensuing corollaries present some example scenarios for which the
lower bound (\ref{eq:lowerbound}) reduces to simplified forms.

\begin{corollary}\label{lowerbound_sp3}
For the case $n_s  \to \infty $, $C_L (\rho)$ reduces to
\begin{align}
\lim_{{n_s  \to \infty }} C_L \left( \rho  \right)  =
\frac{s}{2}\log _2 \left( {1 + \rho a\exp \left(
{\frac{\mathcal{K}}{s}\sum\limits_{k = 1}^q {\det \left(\mathbf{W}_k
 \right)} } \right)} \right) \; .
\end{align}
\end{corollary}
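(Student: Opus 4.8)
The plan is to start from the closed-form lower bound of \emph{Theorem \ref{lowerbound}},
\begin{align}
C_L(\rho) = \frac{s}{2}\log_2\left(1 + \frac{\rho a}{n_s}\exp\left(\frac{1}{s}\left[\sum_{k=1}^{s}\psi(n_s - s + k) + \mathcal{K}\sum_{k=q-s+1}^{q}\det(\mathbf{W}_k)\right]\right)\right),
\end{align}
and pass to the limit $n_s\to\infty$ term by term. First I would note that for all sufficiently large $n_s$ one has $n_s\ge q$, hence $s=\min(n_s,q)=q$; in particular the summation range $k=q-s+1,\ldots,q$ collapses to $k=1,\ldots,q$. Crucially, inspecting the definition (\ref{eq:WkDefn}) shows that the entries of $\mathbf{W}_k$ involve only $p$, $q$ and $a$ (through $\tau=p-q+m+n$, $\vartheta_{\tau-1}(a)$ and $\varsigma_{m+n}(a)$), so $\mathbf{W}_k$, the constant $\mathcal{K}$, and the value $s=q$ are all independent of $n_s$; the only $n_s$-dependence left in the argument of the logarithm is the digamma sum and the explicit prefactor $\rho a/n_s$.

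The key step is the asymptotics of the digamma sum. Using $\psi(x)=\ln x + o(1)$ as $x\to\infty$ \cite[Eq.~6.3.18]{Abramowitz74}, for each fixed $k\in\{1,\ldots,q\}$ we get $\psi(n_s-q+k)=\ln n_s + o(1)$, whence
\begin{align}
\frac{1}{s}\sum_{k=1}^{s}\psi(n_s - s + k) = \frac{1}{q}\sum_{k=1}^{q}\psi(n_s - q + k) = \ln n_s + o(1),
\end{align}
so that $\exp\!\left(\tfrac{1}{s}\sum_{k=1}^{s}\psi(n_s - s + k)\right)=n_s\,(1+o(1))$. Substituting this into the bound cancels the explicit $1/n_s$:
\begin{align}
\frac{\rho a}{n_s}\exp\left(\frac{1}{s}\left[\sum_{k=1}^{s}\psi(n_s - s + k) + \mathcal{K}\sum_{k=1}^{q}\det(\mathbf{W}_k)\right]\right) = \rho a\,(1+o(1))\exp\left(\frac{\mathcal{K}}{s}\sum_{k=1}^{q}\det(\mathbf{W}_k)\right),
\end{align}
which converges to $\rho a\exp\!\left(\tfrac{\mathcal{K}}{s}\sum_{k=1}^{q}\det(\mathbf{W}_k)\right)$.

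Finally, since $x\mapsto\tfrac{s}{2}\log_2(1+x)$ is continuous and $s=q$ is fixed once $n_s\ge q$, I would pass the limit inside the logarithm to obtain the claimed expression for $\lim_{n_s\to\infty}C_L(\rho)$. There is no genuine obstacle here: the only point needing a line of justification is the $n_s$-independence of $\mathbf{W}_k$, which is immediate from (\ref{eq:WkDefn}); the remainder is a routine limit computation driven by the digamma asymptotics.
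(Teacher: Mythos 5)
Your proposal is correct and follows essentially the same route as the paper: the paper's proof likewise substitutes the asymptotic $\psi(n_s-q+k)\approx\ln(n_s-q+k)\approx\ln n_s$ from \cite[Eq.~6.3.18]{Abramowitz74} into the lower bound of \emph{Theorem~\ref{lowerbound}} so that the exponential of the averaged digamma sum cancels the $1/n_s$ prefactor. Your additional remarks---that $s=q$ once $n_s\ge q$ and that $\mathcal{K}$ and $\mathbf{W}_k$ are independent of $n_s$---are left implicit in the paper but are exactly the right justifications.
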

\begin{proof}
See Appendix \ref{sec:Proof_Lowerboundsp3}.
\end{proof}
Again, we note that for the scenarios $n_r \to \infty$ and $n_d \to
\infty$, simplified closed-form results can also be obtained by
taking the corresponding limits in (\ref{eq:upperbound_sp3}) or,
alternatively, by using (\ref{eq:exact_sp1}) and
(\ref{eq:exact_sp4}), and applying known lower bounds for single-hop
MIMO channels in \cite{Oyman03}.




\begin{corollary}\label{lowerbound_sp4}
For the case $n_r = 1$, $C_L (\rho)$ reduces to
\begin{align}\label{eq:lowerboundsp4}
C_L^{n_r  = 1} \left( \rho  \right) = \frac{1}{2}\log _2 \left( {1 +
\frac{{\rho \alpha }}{{n_s \left( {1 + \rho } \right)}}\exp \left(
{\psi \left( {n_s } \right) + \psi \left( n_d \right) - e^{\left( {1
+ \rho } \right)/\alpha } \sum\limits_{l = 0}^{n_d - 1} {E_{l + 1}
\left( {\frac{{1 + \rho }}{\alpha }} \right)} } \right)} \right).
\end{align}
When $n_s \to \infty$, $C_L^{n_r  = 1}(\rho)$ becomes
\begin{align}\label{eq:lsp4ns}
\mathop {\lim }\limits_{n_s  \to \infty } C_L^{n_r  = 1} \left( \rho
\right) = \frac{1}{2}\log _2 \left( {1 + \frac{{\rho \alpha }}{{
 {1 + \rho } }}\exp \left( {\psi \left( {n_d } \right)
- e^{\left( {1 + \rho } \right)/\alpha } \sum\limits_{l = 0}^{n_d  -
1} {E_{l + 1} \left( {\frac{{1 + \rho }}{\alpha }} \right)} }
\right)} \right) \; .
\end{align}
When $n_d \to \infty$,  $C_L^{n_r  = 1}(\rho)$ becomes
\begin{align}\label{eq:lsp4nd}
\mathop {\lim }\limits_{n_d  \to \infty } C_L^{n_r  = 1} \left( \rho
\right) = \frac{1}{2}\log _2 \left( {1 + \frac{{\rho \alpha }}{{n_s \left( {1 + \rho } \right)}}\exp \left( {\psi \left( {n_s } \right) + \psi \left( {\frac{{1 + \rho }}{\alpha }} \right)} \right)} \right) \; .
\end{align}
When $\alpha \to \infty$, $C_L (\rho)$ becomes
\begin{align}\label{eq:Clsp41}
\lim_{\alpha  \to \infty} C_L^{n_r  = 1} \left( \rho  \right) =
\frac{1}{2}\log _2 \left( {1 + \frac{\rho }{{n_s }}\exp \left( {\psi
\left( {n_s } \right)} \right)} \right) \; .
\end{align}
\end{corollary}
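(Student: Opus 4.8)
The plan is to specialize the general lower bound of \emph{Theorem \ref{lowerbound}} to the single-relay-antenna case $n_r=1$ and then pass to the three asymptotic regimes. When $n_r=1$ we have $q=\min(n_d,1)=1$, $p=\max(n_d,1)=n_d$, and $s=\min(n_s,1)=1$, so every double sum appearing in (\ref{eq:lowerbound}) collapses to a single term: $\sum_{k=1}^{s}\psi(n_s-s+k)=\psi(n_s)$ and $\mathcal{K}\sum_{k=q-s+1}^{q}\det(\mathbf{W}_k)=\mathcal{K}\det(\mathbf{W}_1)$, where $\mathbf{W}_1$ is now a $1\times1$ matrix. First I would evaluate this scalar: its only entry has $m=n=1=k$, hence equals $\varsigma_{m+n}(a)=\varsigma_2(a)$ in the notation of \emph{Theorem \ref{lnexpdetNew}}. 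In the defining sum for $\varsigma_t(a)$ with $q=1$, $t=2$ we have $2q-t=0$, so only the $i=0$ term survives, giving $\varsigma_2(a)=\Gamma(n_d)\bigl(\psi(n_d)-\sum_{l=0}^{n_d-1}g_l(1/a)\bigr)$. Since $\mathcal{K}=(\Gamma(1)\Gamma(n_d))^{-1}=1/\Gamma(n_d)$ when $q=1$, the Gamma factor cancels and $\mathcal{K}\det(\mathbf{W}_1)=\psi(n_d)-\sum_{l=0}^{n_d-1}g_l(1/a)$. Substituting $g_l(x)=e^xE_{l+1}(x)$ from (\ref{eq:gDefn}), together with $a=\alpha/(1+\rho)$ (so $1/a=(1+\rho)/\alpha$) and $\rho a/n_s=\rho\alpha/(n_s(1+\rho))$, immediately yields (\ref{eq:lowerboundsp4}).

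For the limits I would argue as follows. For $n_s\to\infty$ I use $\psi(n_s)=\ln n_s+o(1)$ \cite[Eq.\ 6.3.18]{Abramowitz74}, so that $\tfrac{1}{n_s}\exp(\psi(n_s))\to1$, removing the $n_s$-dependence of the prefactor and leaving (\ref{eq:lsp4ns}). For $\alpha\to\infty$ we have $1/a=(1+\rho)/\alpha\to0$; since then $\rho a/n_s\to\infty$ while $\exp(\psi(n_s)+\psi(n_d)-e^{1/a}\sum_{l}E_{l+1}(1/a))\to0$, this is an $\infty\cdot0$ form, which I would resolve by the small-argument expansions $g_0(x)=-\gamma-\ln x+o(1)$ and $g_l(x)\to E_{l+1}(0)=1/l$ for $l\ge1$. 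These give $\sum_{l=0}^{n_d-1}g_l(1/a)=\psi(n_d)-\ln(1/a)+o(1)$, so inside the exponent the $\psi(n_d)$ cancels and the surviving $\ln(1/a)$ cancels the diverging prefactor $\rho a/n_s$, leaving (\ref{eq:Clsp41}). For $n_d\to\infty$ the task reduces to evaluating $\lim_{n_d\to\infty}\bigl(\psi(n_d)-e^{1/a}\sum_{l=0}^{n_d-1}E_{l+1}(1/a)\bigr)$; I would attack this using the recurrence $E_{l+1}(x)=\tfrac1l(e^{-x}-xE_l(x))$ — equivalently the telescoping identity $\tfrac{d}{dx}\bigl[e^x\sum_{l=0}^{N-1}E_{l+1}(x)\bigr]=e^xE_N(x)-1/x$ — to obtain a manageable form for the partial sum, and then invoke the large-order decay $E_N(x)=O(e^{-x}/N)$ to take the limit, producing the claimed residual term in (\ref{eq:lsp4nd}).

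I expect the $n_d\to\infty$ step to be the main obstacle. The $n_r=1$ specialization is essentially bookkeeping (collapsing the sums, cancelling the $\Gamma(n_d)$), and the $n_s\to\infty$ and $\alpha\to\infty$ cases follow once the right expansions of $g_l$ are in hand; but pinning down the exact value of the difference $\psi(n_d)-e^{1/a}\sum_{l=0}^{n_d-1}E_{l+1}(1/a)$ in the limit requires carefully combining the recurrence (or an integral representation) for the exponential integral with the asymptotics of the digamma function, and being careful about the order in which $n_d\to\infty$ and the dependence on the argument $1/a$ are handled.
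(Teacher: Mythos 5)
Your specialization to $n_r=1$ is correct and is exactly what the paper does implicitly: with $q=s=1$, $p=n_d$, the entry of the $1\times1$ matrix $\mathbf{W}_1$ is $\varsigma_2(a)=\Gamma(n_d)\bigl(\psi(n_d)-\sum_{l=0}^{n_d-1}g_l(1/a)\bigr)$, the $\Gamma(n_d)$ cancels against $\mathcal{K}$, and $1/a=(1+\rho)/\alpha$ gives (\ref{eq:lowerboundsp4}). The $n_s\to\infty$ limit via $\psi(n_s)=\ln n_s+o(1)$ is the paper's argument verbatim. Your $\alpha\to\infty$ argument is also sound and equivalent to the paper's: the paper reaches $\sum_{l=0}^{n_d-1}g_l(x)\to\psi(n_d)-\ln x$ by first applying the recurrence \cite[Eq.\ 5.1.14]{Abramowitz74} and then the expansion \cite[Eq.\ 5.1.11]{Abramowitz74} of $E_1$, whereas you take the small-argument limits of $g_0$ and $g_l$, $l\ge1$, directly; both yield the same cancellation of $\psi(n_d)$ and of the divergent prefactor.

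The $n_d\to\infty$ case is a genuine gap, and not merely because you leave it as a sketch: the exact route you propose cannot produce (\ref{eq:lsp4nd}) as written. If you carry out the telescoping/recurrence computation exactly (e.g.\ via $\sum_{n=1}^{N}E_n(x)=\int_1^\infty e^{-xt}\tfrac{1-t^{-N}}{t-1}\,dt$ and a Frullani-type evaluation), you find $\lim_{N\to\infty}\bigl(\psi(N)-e^{x}\sum_{l=0}^{N-1}E_{l+1}(x)\bigr)=\ln x$, not $\psi(x)$; the two agree only when $x=(1+\rho)/\alpha$ is large. The paper arrives at the stated $\psi\bigl((1+\rho)/\alpha\bigr)$ by a different, explicitly approximate step: it replaces $e^{x}E_{l+1}(x)$ by $(x+l)^{-1}$ using the bound \cite[Eq.\ 5.1.19]{Abramowitz74}, sums via $\sum_{l=0}^{n_d-1}(x+l)^{-1}=\psi(n_d+x)-\psi(x)$ (\cite[Eq.\ 8.365.3]{Gradshteyn00}), and then uses $\psi(n_d)-\psi(n_d+x)\to0$. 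In other words, (\ref{eq:lsp4nd}) is itself only an approximation (the paper's appendix uses ``$\approx$'' here), so an exact limit argument of the kind you plan will necessarily land on a different residual term; to match the corollary you would have to adopt the paper's one-sided approximation of $e^xE_{l+1}(x)$ rather than the exact asymptotics of the partial sums.
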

\begin{proof}
See Appendix \ref{sec:Proof_Lowerboundsp4}.
\end{proof}
As also observed from the upper bound in \textit{Corollary
\ref{upperbound_sp4}}, this result shows that for a system with a
single relay antenna, when the relay gain $\alpha$ grows large, the
ergodic capacity of an AF MIMO dual-hop channel is lower bounded by
the capacity of an AWGN SISO channel (with scaled average SNR).


\begin{figure}
\centering
\includegraphics[scale=0.8]{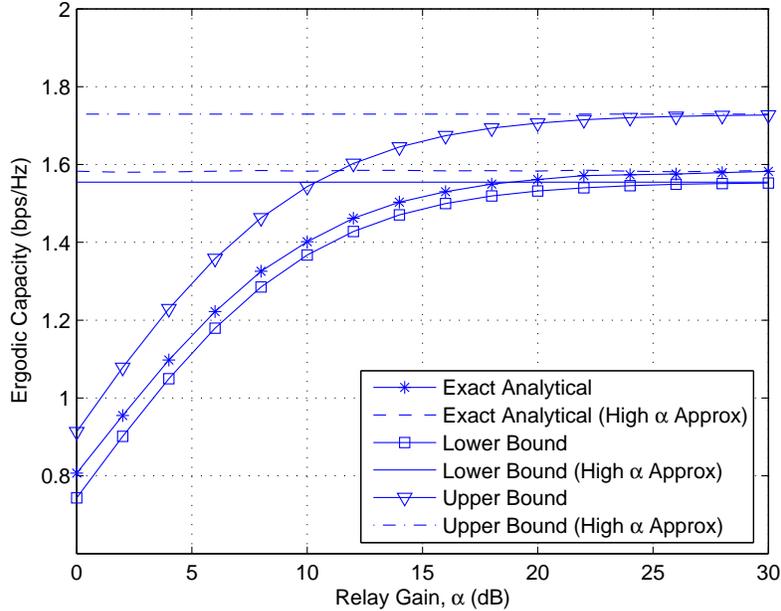}
\captionstyle{mystyle2}\caption{Comparison of capacity bounds, high $\alpha$
approximation, and exact analytical results for different relay gains. Results are shown for $n_r = 1$, $n_s
= 2$, $n_d = 4$ and $\rho = 10 {\rm dB}$.}
\label{fig:fig7}
\end{figure}

Fig. \ref{fig:fig7} plots the closed-form upper bound
(\ref{eq:upboundsp4}), closed-form lower bound
(\ref{eq:lowerboundsp4}), and the exact analytical ergodic capacity
based on (\ref{eq:blah}) and (\ref{eq:intExpression}), for an AF MIMO
dual-hop system with $n_r = 1$.  The results are presented as a
function of the relay gain $\alpha$. We see that both the upper and
lower bounds are quite tight for the entire range of $\alpha$
considered. The asymptotic approximations for the upper and lower
bounds, based on (\ref{eq:Cusp41}) and
(\ref{eq:Clsp41}) respectively, are also shown for further
comparison, and are seen to converge for moderate values of $\alpha$
(e.g.\ within $\alpha \approx 20$ dB).


\begin{figure}
\centering
\includegraphics[scale=0.8]{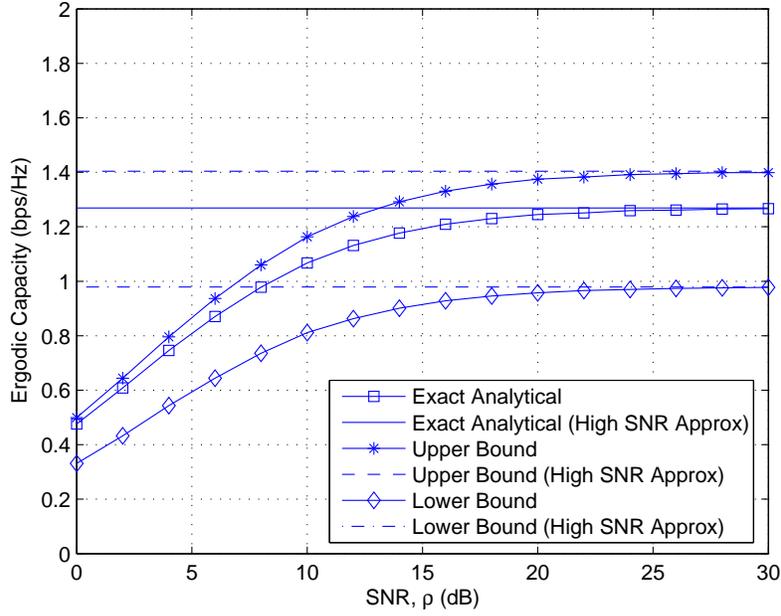}
\captionstyle{mystyle2}\caption{Comparison of capacity bounds, high SNR
approximations, and exact analytical results. Results are shown for a system configuration $(3, 4, 2)$ and
$\alpha = 2$.}
\label{fig:fig1}
\end{figure}

\begin{corollary}\label{lowerbound_sp1}
In the high SNR regime, (i.e.\ as $\rho \to \infty$) for fixed relay
gain $\alpha$, $C_L (\rho)$ becomes
\begin{align}\label{eq:lowerboundsp1}
\lim_{\rho  \to \infty } C_L (\rho) =  \frac{s}{2}\log _2 \left( {1
+ \frac{\alpha }{{n_r n_s }}\exp \left( {\frac{\mathcal{K}}{s}
 \sum\limits_{k = q - s + 1}^q {\det \left( {{\bf{\tilde W}}_k } \right)} } \right)} \right),
\end{align}
where ${\bf{\tilde W}}_k$ is a $q \times q$ matrix with entries
\begin{align}
\left\{ {{\bf{\tilde W}}_k } \right\}_{m,n}  = \left\{
{\begin{array}{*{20}c}
   {\Gamma \left( {\tau  - 1} \right)}, & {n \ne k}  \\
   {\Gamma \left( {\tau  - 1} \right)\left[ {\psi \left( {n_s  - q + n} \right) + \psi \left( {\tau  - 1} \right)} \right]}, & {n = k}  \\
\end{array}} \right. \; .
\end{align}
\end{corollary}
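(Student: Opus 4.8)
The plan is to pass to the limit $\rho \to \infty$ directly in the closed-form lower bound of \textit{Theorem \ref{lowerbound}}. By \eqref{eq:aDefn}, with $\alpha$ held fixed we have $a = \alpha/(n_r(1+\rho)) \to 0$ and $\rho a \to \alpha/n_r$ as $\rho \to \infty$, so that $\rho a/n_s \to \alpha/(n_s n_r)$. Since the bracketed quantity inside the exponential in \eqref{eq:lowerbound} is a finite sum of products of elementary and special functions of $a$, it suffices to compute its pointwise limit as $a \to 0$; there are no interchange-of-limit subtleties.

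First I would determine the entrywise limits of $\mathbf{W}_k$ in \eqref{eq:WkDefn}. For $n \ne k$, writing $\vartheta_{\tau-1}(a) = \Gamma(\tau-1)U(\tau-1,p+q,1/a)$ and using the large-argument asymptotic $U(c,d,z) \sim z^{-c}$ as $z \to \infty$ \cite{Abramowitz74}, one gets $a^{1-\tau}\vartheta_{\tau-1}(a) \to \Gamma(\tau-1)$. For $n = k$, in the sum defining $\varsigma_{m+k}(a)$ the prefactor $a^{2q-t-i}$ (with $t = m+k$) forces every term with $i < 2q-t$ to vanish, leaving only $i = 2q-t$; moreover $g_l(1/a) = e^{1/a}E_{l+1}(1/a) \le a \to 0$ since $E_{l+1}(x) \le e^{-x}/x$. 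Using $p+q-(2q-t)-1 = \tau-1$, this yields $\varsigma_{m+k}(a) \to \Gamma(\tau-1)\psi(\tau-1)$. By continuity of the determinant, $\det(\mathbf{W}_k) \to \det(\mathbf{W}_k^{\circ})$, where $\mathbf{W}_k^{\circ}$ has $(m,n)$ entry $\Gamma(\tau-1)$ for $n \ne k$ and $\Gamma(\tau-1)\psi(\tau-1)$ for $n = k$.

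The remaining step is to fold the isolated sum $\sum_{k=1}^s \psi(n_s-s+k)$ in \eqref{eq:lowerbound} into these determinants so as to reproduce $\tilde{\mathbf{W}}_k$. I would observe that $\mathbf{W}_k^{\circ}$ and $\tilde{\mathbf{W}}_k$ agree in every column except column $k$, where the entry of $\tilde{\mathbf{W}}_k$ exceeds that of $\mathbf{W}_k^{\circ}$ by $\Gamma(\tau-1)\psi(n_s-q+k)$; since $\psi(n_s-q+k)$ is independent of the row index $m$, multilinearity of the determinant in column $k$ gives $\det(\tilde{\mathbf{W}}_k) = \det(\mathbf{W}_k^{\circ}) + \psi(n_s-q+k)\det(\bar{\mathbf{G}})$, where $\bar{\mathbf{G}}$ is the Hankel--Gamma matrix of \eqref{eq:GBar}. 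Invoking the classical evaluation $\det(\bar{\mathbf{G}}) = \prod_{i=1}^q \Gamma(q-i+1)\Gamma(p-i+1) = \mathcal{K}^{-1}$ together with the re-indexing $\sum_{k=q-s+1}^q \psi(n_s-q+k) = \sum_{k=1}^s \psi(n_s-s+k)$, I would obtain
\begin{align}
\mathcal{K}\sum_{k=q-s+1}^q \det(\tilde{\mathbf{W}}_k) = \sum_{k=1}^s \psi(n_s-s+k) + \mathcal{K}\sum_{k=q-s+1}^q \det(\mathbf{W}_k^{\circ}),
\end{align}
which is precisely the $a\to 0$ limit of the bracketed term of \eqref{eq:lowerbound}. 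Substituting this together with $\rho a/n_s \to \alpha/(n_s n_r)$ back into $C_L(\rho)$ then yields \eqref{eq:lowerboundsp1}.

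The main obstacle I anticipate is the asymptotic bookkeeping in the second step---correctly isolating the single surviving term of $\varsigma_{m+k}(a)$ and controlling $g_l(1/a)$ as $a\to 0$---together with the column-operation identity that absorbs the stray digamma sum into the matrix $\tilde{\mathbf{W}}_k$; the underlying determinant evaluation $\mathcal{K}\det(\bar{\mathbf{G}}) = 1$ is standard and can be checked by elementary row/column reduction or, equivalently, recognised as a Selberg-type integral via the Andr\'eief identity.
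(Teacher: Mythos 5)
Your proposal is correct and follows essentially the same route as the paper: pass to the limit $a \to 0$ in Theorem~\ref{lowerbound}, using $U(c,b,z) \sim z^{-c}$ for the $n \ne k$ entries and the vanishing of $g_l(1/a)$ to reduce $\varsigma_{m+n}(a)$ to its single surviving term $\Gamma(\tau-1)\psi(\tau-1)$. In fact your write-up is more complete than the paper's one-line proof, which states only the limit of $\varsigma_{m+n}(a)$ and leaves implicit both the limit of the off-column entries and the column-multilinearity step (with $\mathcal{K}\det(\bar{\mathbf{G}})=1$) that absorbs the stray sum $\sum_{k=1}^s \psi(n_s-s+k)$ into the matrices $\tilde{\mathbf{W}}_k$; your verification of these points is accurate.
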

\begin{proof}
See Appendix \ref{sec:Proof_Lowerboundsp1}.
\end{proof}
As for the high SNR upper bound presented in
(\ref{eq:upperboundsp1}), this closed-form lower bound expression is
simpler than the exact ergodic capacity expression given for this
regime in (\ref{eq:CHighSNRFixedAlpha}).

Fig. \ref{fig:fig1} depicts the closed-form high SNR approximations
for the exact ergodic capacity, as well as the respective upper and
lower bounds, based on (\ref{eq:logMeijerG}),
(\ref{eq:upperboundsp1}), and (\ref{eq:lowerboundsp1}) respectively.
For comparison, curves are also presented for the upper bound
(\ref{eq:upperbound}), lower bound (\ref{eq:lowerbound}), and the
exact analytical ergodic capacity based in (\ref{eq:blah}) and
(\ref{eq:intExpression}).  Results are shown for an AF MIMO dual-hop
system with configuration $(3, 4, 2)$.  Clearly, the analytical high
SNR approximations are seen to be very accurate for even moderate
SNR levels (e.g.\ $\rho \approx 20$ dB).

\section{Conclusions}\label{sec:conclusion}

This paper has presented an analytical characterization of the ergodic capacity of AF MIMO dual-hop relay
channels under the common assumption that CSI is available at
the destination terminal, but not at the relay or the source terminal.  We derived a new exact expression for the ergodic capacity, as well as simplified and insightful closed-form expressions for the high SNR regime.  Simplified closed-form upper and lower bounds were also presented, which were shown to be tight for all SNRs.  The analytical results were made possible by first employing random
matrix theory techniques to derive new expressions for the p.d.f. of
an unordered eigenvalue, as well as random determinant results for the equivalent AF MIMO dual-hop relay channel, described by a certain product of finite-dimensional complex random matrices. The analytical results were validated through comparison with numerical simulations.

\section{Acknowledgement}

The authors would like to thank Dr. Veniamin I.\ Morgenshtern for providing the source code used to generate the asymptotic eigenvalue distributions in Fig. \ref{fig:fig12}.


\appendices

\section{Proofs of New Random Matrix Theory Results}

\subsection{Proof of Lemma \ref{unorderedpdf}}\label{sec:Proof_unorderedcond}

To prove this lemma, it is convenient to give a separate treatment for
the two cases, $q < n_s$ and $ q \ge n_s $.

\subsubsection{The $q < n_s$ Case}\label{sec:proofTh1part1}

For this case, an expression for the p.d.f.\ $f_{\lambda |
\mathbf{L}} ( \cdot )$ has been given previously as \cite{Alfano04}
\begin{align} \label{eq:unorderedPDF1}
f_{\lambda | \mathbf{L}} ( \lambda ) = \frac{\sum\limits_{l = 1}^q
{\sum\limits_{k = 1}^q { \lambda ^{n_s  - q + k - 1} e^{ - \lambda
/\beta _l } {\tilde D}_{l,k}} } }{ q\det \left( {\bf{L }}
\right)^{n_s - q + 1} \prod\nolimits_{i = 1}^{q} {\Gamma \left( {n_s
- i + 1} \right)} \prod_{i<j}^q ( \beta_j - \beta_i )}
\end{align}
where ${\tilde D}_{l,k}$ is the $(l,k)$th cofactor of a $q \times q$
matrix with entries
\begin{align}
\bigl\{ {\bf{\tilde D}} \bigr\}_{i,j}  = \Gamma \left( {n_s - q + j}
\right)\beta _i^{n_s  - q + j} \; .
\end{align}
After some basic manipulations, we can express this cofactor as
\begin{align}\label{eq:dexp}
{\tilde D}_{l,k} = \frac{{\prod\nolimits_{j = 1}^{q} {\Gamma \left(
{n_s - j + 1} \right)} }}{{\Gamma \left( {n_s  - q + k}
\right)}}\frac{{\det \left( {\bf{L }} \right)^{n_s - q + 1}
}}{{\beta _l^{n_s - q + 1} }} D_{l,k} \; .
\end{align}
Substituting (\ref{eq:dexp}) into (\ref{eq:unorderedPDF1}) yields
the desired result.

\subsubsection{The $q \ge n_s $ Case}\label{sec:proofTh1part2}
%

For this case, we start by employing a result from \cite[Eq.
11]{Smith03} to express the joint p.d.f. of the unordered
eigenvalues $ \gamma _1 ,  \ldots , \gamma _{n_s}$ of
${\mathbf{\tilde H}}_1^\dag {\mathbf{L\tilde H}}_1 $, conditioned on
$\mathbf{L}$, as follows
\begin{align}\label{eq:unorderedPDF}
f\left( {\left. {\gamma _1 , \ldots ,\gamma _{n_s} } \right|{\bf{L
}}} \right) = \frac{{\det \left( {{\bf{\Delta }}_1 }
\right)\prod\nolimits_{i < j}^{n_s} {\left( {\gamma _j  - \gamma _i
} \right)} }}{n_s{\prod\nolimits_{i = 1}^{n_s} { \Gamma \left( {n_s - i
+ 1} \right)} \prod\nolimits_{i < j}^{q} {\left( {\beta _j  - \beta
_i } \right)} }},
\end{align}
where ${{\bf{\Delta }}_1 }$ is the $q \times q$ matrix
\begin{align} \label{eq:Delta1Defn}
{\bf{\Delta }}_1  = \left[ {\begin{array}{*{20}c}
   1 & {\beta _1 } &  \cdots  & {\beta _1^{q - n_s - 1} } & {\beta _1^{q - n_s - 1} e^{ - \frac{{\gamma _1 }}{{\beta _1 }}} } &  \cdots  & {\beta _1^{q - n_s - 1} e^{ - \frac{{\gamma _{n_s} }}{{\beta _1 }}} }  \\
    \vdots  &  \vdots  &  \ddots  &  \vdots  &  \vdots  &  \ddots  &  \vdots   \\
   1 & {\beta _{q} } &  \cdots  & {\beta _{q}^{q - n_s - 1} } & {\beta _{q}^{q - n_s - 1} e^{ - \frac{{\gamma _1 }}{{\beta _{q} }}} } &  \cdots  & {\beta _{q}^{q - n_s - 1} e^{ - \frac{{\gamma _{n_s} }}{{\beta _{q} }}} }  \\
\end{array}} \right] \; .
\end{align}
The p.d.f. of a single unordered eigenvalue $\lambda$ is found from
(\ref{eq:unorderedPDF}) via
\begin{align}\label{eq:unorderedPDF_New}
f_{\lambda|\mathbf{L}}(\lambda) &=  \int_{0}^{\infty} \cdots
\int_{0}^{\infty}  f\left( {\left. {\gamma _1 , \ldots ,\gamma
_{n_s} } \right|{\bf{L }}} \right)  d \gamma_1 \cdots d
\gamma_{n_s-1} \biggr|_{\gamma_{n_s} = \lambda} \nonumber \\
& = \frac{1}{ {{n_s\prod\nolimits_{i = 1}^{n_s} { \Gamma \left( {n_s -
i + 1} \right)} \prod\nolimits_{i < j}^{q} {\left( {\beta _j  -
\beta _i } \right)} }} } \int_{0}^{\infty} \cdots \int_{0}^{\infty}
{ \det \left( {{\bf{\Delta }}_1 } \right) \det \left( {\gamma _i^{j
- 1} } \right) }  d \gamma_1 \cdots d \gamma_{n_s-1}
\biggr|_{\gamma_{n_s} = \lambda}
\end{align}
where we have used $ \prod\nolimits_{i < j}^{n_s} {\left( {\gamma _j
- \gamma _i } \right)}  = \det \left( {\gamma _i^{j - 1} } \right)$.
To evaluate the $n_s - 1$ integrals, we expand $\det \left(
{{\bf{\Delta }}_1 } \right)$ along its last column and $\det \left(
{\gamma _i^{j - 1} } \right)$ along its last row, and then integrate
term-by-term by virtue of \cite[Lemma 2]{Shin06}. This yields
%
\begin{align}\label{eq:unorderedPDF3}
f_{\left. \lambda  \right|{\bf{L }}} \left( \lambda  \right) =
\frac{{\sum\limits_{l = 1}^{q} {\sum\limits_{k = q- n_s + 1}^{q}
{\beta _l^{q - n_s - 1} e^{ - \lambda /\beta _l } \lambda ^{q- n_s + k - 1}
\bar{D}_ {l, k} } } }}{{n_s\prod\nolimits_{i = 1}^{n_s} { \Gamma
\left( {n_s  - i + 1} \right) } \prod\nolimits_{i < j}^{q} {\left(
{\beta _j - \beta _i } \right)} }}
\end{align}
where ${\bar{D}_ {l,k} }$ is the $(l,k)$th cofactor of a $q \times
q$ matrix $ {\bf{\Xi }} = \bigl[ \mathbf{A} \; \mathbf{C}
\bigr]$, with entries
\begin{align}
{\begin{array}{*{20}c}
   {\left\{ {\bf{A}} \right\}_{m,n}  = \beta _m^{n - 1} } & {m = 1, \ldots ,q,} & {n = 1, \ldots ,q - n_s}  \\
\end{array}}
\end{align}
and
\begin{align}
\begin{array}{*{20}c}
   {\left\{ {\bf{C}} \right\}_{m,n}  = \Gamma \left( n \right)\beta _m^{q - n_s + n - 1} } & {m = 1, \ldots ,q,} & {n = 1, \ldots ,n_s}  \\
\end{array} \; .
\end{align}
Then, it can be shown that
\begin{align}
\sum\limits_{l = 1}^{q} {\sum\limits_{k = q - n_s + 1}^{q} {\beta
_l^{q - n_s - 1} e^{ - y/\beta _l } \lambda ^{q - n_s + k - 1}
\bar{D}_ {l,k} } } = {\sum\limits_{k = q - n_s + 1}^{q} {\det \left(
{{\bf{D}}_k } \right)} },
\end{align}
where ${{\bf{D}}_k }$ is a $q \times q$ matrix with entries
\begin{align}
\left\{ {{\bf{D}}_k } \right\}_{m,n}  = \left\{
{\begin{array}{*{20}c}
   {\beta _m^{n - 1} ,} & {m = 1, \ldots ,q,} & {n = 1, \ldots ,q - n_s} & {}  \\
   {\Gamma \left( n - q + n_s - 1  \right)\beta _m^{n } ,} & {m = 1, \ldots ,q,} & {n = q - n_s + 1, \ldots ,q,} & {n \ne k}  \\
   {\beta _m^{q - n_s - 1} e^{ -  \lambda/\beta _m } \lambda ^{n - q + n_s - 1} }, & {m = 1, \ldots ,q,} & {n = k} & {}  \\
\end{array}} \right. \;
\end{align}
Hence, we can rewrite (\ref{eq:unorderedPDF3}) as follows
\begin{align}\label{eq:unorderedPDF6}
f_{\left. \lambda  \right|{\bf{L }}} \left( \lambda  \right) =
\frac{{\sum\limits_{k = q - n_s + 1}^{q} {\det \left( {{\bf{D}}_k }
\right)} }}{{n_s\prod\nolimits_{i = 1}^{n_s} { \Gamma \left( {n_s -
i + 1} \right) } \prod\nolimits_{i < j}^{q} {\left( {\beta _j -
\beta _i } \right)} }} \; .
\end{align}
After some basic manipulations, (\ref{eq:unorderedPDF6}) can be
further simplified as
\begin{align}\label{eq:unorderedPDF7}
f_{\left. \lambda  \right|{\bf{L }}} \left( \lambda  \right) =
\frac{1}{{n_s \prod\nolimits_{i < j}^{q } {\left( {\beta _j  - \beta
_i } \right)} }}\sum\limits_{k = q  - n_s  + 1}^{q } {\frac{{\lambda
^{n_s  - q  + k - 1} }}{{ \Gamma \left( {n_s  - q + k} \right) }}}
\det \left( {{\bf{\bar D}}_k } \right)
\end{align}
where ${{\bf{\bar D}}_k }$ is a $q \times q$ matrix with entries
\begin{align}
\left\{ {{\bf{\bar D}}_k } \right\}_{m,n}  = \left\{
{\begin{array}{*{20}c}
   {\beta _m^{n - 1} }, & {n \ne k,}  \\
   {e^{ - \lambda /\beta _m } \beta _m^{q - n_s + 1}}, & {n = k.}  \\
\end{array}} \right. \;
\end{align}
Finally, we apply Laplace's expansion to (\ref{eq:unorderedPDF7}) to
yield the desired result.

\subsection{Proof of Lemma \ref{jointpdf_beta}}\label{sec:Proof_Jointpdf_beta}
The joint p.d.f. of ${\bf{W}}_1 = {\rm diag}\left\{ {\alpha _1 ,
\ldots ,\alpha _q } \right\}$ is given by
\cite{James64,Khatri66,Ratnarajah03}
\begin{align}\label{eq:fjointpdf}
f_{\mathbf{W}_1} \left( \alpha_1, \cdots, \alpha_q \right) &=
\mathcal{K}e^{ - \sum\limits_{i = 1}^{q } {\alpha _i }
}\prod\limits_{i = 1}^{q } {\alpha _i^{p  - q} } \prod\limits_{i <
j}^{q } {\left( {\alpha _j - \alpha _i } \right)^2 }.
\end{align}
Recalling that
\begin{align}\label{eq:alphai}
\alpha _i  = \frac{{\beta _i }}{{1 - a\beta _i }}
\end{align}
we derive the joint p.d.f. of $ {\bf{W}}_2  = {\rm diag}\left\{
{\beta _1 , \ldots ,\beta _q } \right\}$ from (\ref{eq:fjointpdf})
by applying a vector transformation \cite{Muirhead82}
\begin{align}\label{eq:gjointpdf}
f_{{\bf{W}}_2 } \left( \beta_1, \cdots, \beta_q  \right) =
f_{\mathbf{W}_1} \left( \frac{{\beta _1 }}{{1 - a\beta _1 }},
\cdots, \frac{{\beta _q }}{{1 - a\beta _q }} \right)\left|
{{\bf{J}}\left( {\left( {\alpha _1 , \ldots ,\alpha _q } \right) \to
\left( {\beta _1 , \ldots ,\beta _q } \right)} \right)} \right|,
\end{align}
where
\begin{align}\label{eq:expjacobian}
{\bf{J}}\left( {\left( {\alpha _1 , \ldots ,\alpha _q } \right) \to
\left( {\beta _1 , \ldots ,\beta _q } \right)} \right) = \det \left[
{\begin{array}{*{20}c}
   {\frac{{\partial \alpha _1 }}{{\partial \beta _1 }}} &  \cdots  & {\frac{{\partial \alpha _1 }}{{\partial \beta _q }}}  \\
    \vdots  &  \ddots  &  \vdots   \\
   {\frac{{\partial \alpha _q }}{{\partial \beta _1 }}} &  \cdots  & {\frac{{\partial \alpha _q }}{{\partial \beta _q }}}  \\
\end{array}} \right] \; .
\end{align}
From (\ref{eq:alphai}), we have
\begin{align}\label{eq:alphaidiff}
\frac{{\partial \alpha _i }}{{\partial \beta _i }} =
\frac{1}{{\left( {1 - a\beta _i } \right)^2 }},
\end{align}
therefore the Jacobian transformation in (\ref{eq:expjacobian})
is evaluated as
\begin{align}\label{eq:expjacobian1}
{\bf{J}}\left( {\left( {\alpha _1 , \ldots ,\alpha _q } \right) \to
\left( {\beta _1 , \ldots ,\beta _q } \right)} \right) =
\prod\limits_{i = 1}^q {\frac{1}{{\left( {1 - a\beta _i } \right)^2
}}}.
\end{align}
Substituting (\ref{eq:fjointpdf}) and (\ref{eq:expjacobian1}) into
(\ref{eq:gjointpdf}) yields
\begin{align} \label{eq:jointPDF1}
f_{{\bf{W}}_2 } \left( \beta_1, \cdots, \beta_q  \right) =
\mathcal{K} \prod_{i=1}^{q} \frac{\beta_i^{p - q} e^{- \frac{
\beta_i}{1-a\beta_i}}}{(1-a\beta_i)^{p - q + 2}} \prod_{i<j}^q
\left( \frac{ \beta_j }{1-a\beta_j} - \frac{ \beta_i }{1-a\beta_i}
\right)^2.
\end{align}
Finally, simplifying using
\begin{align}
\prod_{i<j}^q \left( \frac{ \beta_j }{1-a\beta_j} - \frac{ \beta_i }{1-a\beta_i} \right)^2 &= \prod_{i<j}^q \left( \frac{ \beta_j - \beta_i }{( 1-a\beta_j ) ( 1-a\beta_i ) } \right)^2 \nonumber \\
&= \frac{ \prod_{i<j}^q (\beta_j - \beta_i )^2 } { \prod_{i=1}^q (
1- a \beta_i)^{2(q-1)} }  \;
\end{align}
yields the joint p.d.f. of $\bf{L}$.

We now derive the p.d.f.\ of an unordered eigenvalue $\beta$ of the
diagonal matrix $\bf{L}$. According to \cite[Eq. 42]{Shin03}, the
unordered eigenvalue p.d.f. of ${\bf{H}}_2 {\bf{H}}_2^\dag$ is given
by
\begin{align}\label{eq:expjacobian2}
f\left( \lambda  \right) = \frac{1}{q} \sum\limits_{i = 0}^{q - 1}
\sum\limits_{j = 0}^i \sum\limits_{l = 0}^{2j} \mathcal{A}\left(
{i,j,l,p,q } \right) \lambda ^{p - q + l} e^{ - \lambda } \; .
\end{align}
Recalling that $\beta = \lambda /\left( {1 + a\lambda } \right)$,
the result follows after applying a simple transformation.

\subsection{Proof of Theorem \ref{final_unorderedpdf}}\label{sec:Proof_finalunorderedpdf}

We start by re-expressing the conditional unordered eigenvalue
p.d.f.\ $f_{\lambda | \mathbf{L}} ( \cdot )$ in \textit{Lemma
\ref{unorderedpdf}} as follows
\begin{align}\label{eq:conunorder1}
f_{\left. \lambda  \right|{\bf{L}}} \left( \lambda  \right) =
\frac{1}{{s\prod\nolimits_{i < j}^q {\left( {\beta _j  - \beta _i }
\right)} }}\sum\limits_{k = q - s + 1}^q {\frac{{\lambda ^{n_s  - q
+ j - 1} }}{{\Gamma \left( {n_s  - q + j} \right)}}} \det \left(
{{\bf{\tilde D}}_k } \right),
\end{align}
where ${{\bf{\tilde D}}_k }$ is a $q \times q$ matrix with entries
\begin{align}
\left\{ {{\bf{\tilde D}}_k } \right\}_{m,n}  = \left\{
{\begin{array}{*{20}c}
   {\beta _m^{n - 1} }, & {n \ne k,}  \\
   {e^{ - \lambda /\beta _m } \beta _m^{q - n_s  - 1} }, & {n = k.}  \\
\end{array}} \right.
\;
\end{align}
Now, utilizing \textit{Lemma \ref{jointpdf_beta}}, we can evaluate
the unconditional p.d.f.\ as follows
\begin{align}\label{eq:flamdaexp}
f_{\lambda } ( \lambda ) &= E_{\mathbf{L}} \left[ f_{\lambda | \mathbf{L}} ( \lambda ) \right] \nonumber \\
&=  \frac{ \mathcal{K} }{s } \sum\limits_{k = q - s + 1}^q
{\frac{{\lambda ^{n_s - q + k - 1} }}{{\Gamma \left( {n_s  - q + k}
\right)}}\mathcal{\bar I}_k }
\end{align}
where
\begin{align}\label{eq:barIk}
\mathcal{\bar I}_k &= \int_{0 \leq \beta_1 < \cdots < \beta_q \leq
1/a} \det( {{\bf{\tilde D}}_k } ) { \prod\limits_{i < j}^q {\left(
{\beta _j  - \beta _i } \right)}}\prod_{l=1}^q \frac{\beta_l^{p - q} e^{-\frac{\beta_l}{1-a\beta_l}}}{(1-a\beta_l)^{p + q} } d \beta_1 \cdots d \beta_q \nonumber \\
&= \det( \mathbf{\tilde{Y}}_k ),
\end{align}
where $\mathbf{\tilde{Y}}_k$ is a $q \times q$ matrix with entries
\begin{align}\label{eq:ykij}
\{ \mathbf{\tilde{Y}}_k \}_{m,n} = \left\{
\begin{array}{lr}
\int_{0}^{1/a} \frac{x^{p - q + m + n - 2}}{(1-ax)^{p+q}} e^{-\frac{x}{1-ax}} d x,  &  \; \, n \neq k, \\
\int_{0}^{1/a} \frac{x^{p - n_s + m - 2}}{(1-ax)^{p+q}}
e^{-\frac{x}{1-ax}} e^{-\lambda/x} d x, &  \; \, n = k.
\end{array}
\right. \; \;
\end{align}
Let $ t = x/\left( {1 - ax} \right)$. Utilizing \cite[Eq.
3.383.5]{Gradshteyn00} and \cite[Eq. 3.471.9]{Gradshteyn00}, the
integrals in (\ref{eq:ykij}) can be evaluated, respectively, as\footnote{Note that, by using the Binomial expansion, (\ref{eq:ykij1}) can be alternatively expressed as
$$
\int_0^\infty
{t^{p - q + m + n - 2} \left( {1 + at} \right)^{ 2q - m
- n} e^{ - t} } dt = \sum\limits_{i = 0}^{2q - m - n} {a^i \Gamma \left( {p - q + m + n + i - 1} \right)} \; .
$$
}
\begin{align}\label{eq:ykij1}
\int_0^{1/a} {\frac{{x^{p - q + m + n - 2} }}{{\left( {1 - ax}
\right)^{p + q} }}e^{ - \frac{x}{{1 - ax}}} } dx &= \int_0^\infty
{t^{p - q + m + n - 2} \left( {1 + at} \right)^{ 2q - m
- n} e^{ - t} } dt\nonumber\\
&= {a^{q - p - m - n + 1} \Gamma \left( {p - q + m + n - 1}
\right)U\left( {p - q + m + n - 1,p + q,1/a} \right)}
\end{align}
and
\begin{align}\label{eq:ykij2}
& \int_0^{1/a} {\frac{{x^{p - n_s + m - 2} }}{{\left( {1 - ax}
\right)^{p + q} }}e^{ - \frac{x}{{1 - ax}}} } e^{ - \lambda /x} dx
\nonumber \\
& \hspace*{2cm} = e^{ - \lambda a} \int_0^\infty  {t^{p - n_s + m -
2} \left( {1 + at}
\right)^{ q + n_s - m } e^{ - t - \lambda /t} } dt\nonumber\\
& \hspace*{2cm} = e^{ - \lambda a}\sum\limits_{i = 0}^{q + n_s - m }
{ \binom{q+n_s-m}{i} a^{q + n_s - m - i} \int_0^\infty  {t^{p + q -
i - 2} e^{ - t - \lambda /t} } dt}
\nonumber\\
& \hspace*{2cm} = 2e^{ - \lambda a}\sum\limits_{i = 0}^{q + n_s - m
} { \binom{q+n_s-m}{i}
%
a^{q + n_s - m - i } } \lambda ^{\left( {p + q - i - 1} \right)/2}
K_{p + q - i - 1} \left( {2\sqrt \lambda  } \right),
\end{align}
where $ U\left( { \cdot , \cdot , \cdot } \right)$ is the confluent
hypergeometric function of the second kind \cite[Eq.
9.211.4]{Gradshteyn00}.

Combining (\ref{eq:flamdaexp})--(\ref{eq:ykij2}) and then applying
Laplace's expansion yields the desired result.

\subsection{Proof of Lemma \ref{expdet}}\label{sec:Proof_expdet}

We will prove the lemma by giving a separate treatment for the two
cases, $q < n_s$ and $ q \ge n_s$.

\subsubsection{$q < n_s$ Case} In this case, we start by writing
\begin{align} \label{eq:FirstEq}
E\left\{ {\left. {\det \left( {{\bf{I}}_{n_s }  + \frac{{\rho
a}}{{n_s }}{\bf{\tilde H}}_1^\dag  {\bf{L\tilde H}}_1 } \right)}
\right|{\bf{L}}} \right\} &= E\left\{ {\left. {\det \left(
{{\bf{I}}_{q }  + \frac{{\rho a}}{{n_s }} {\bf{L\tilde H}}_1
{\bf{\tilde H}}_1^\dag } \right)} \right|{\bf{L}}} \right\}
\nonumber \\
&= E \left\{ \prod_{i=1}^q \left( 1 + \frac{{\rho a}}{{n_s }}
\gamma_i  \right) \bigg|{\bf{L}} \right\}
\end{align}
where $\gamma _1 , \ldots ,\gamma _q $ are the ordered eigenvalues
of ${\bf{L\tilde H}}_1 {\bf{\tilde H}}_1^\dag$. Conditioned on
$\mathbf{L}$, the joint p.d.f.\ of $\gamma _1 , \ldots ,\gamma _q $
is given in \cite{Chiani03}. Using this result, we can express
(\ref{eq:FirstEq}) as follows
\begin{align}\label{eq:upperbound1}
E\left\{ {\left. {\det \left( {{\bf{I}}_{n_s }  + \frac{{\rho
a}}{{n_s }}{\bf{\tilde H}}_1^\dag  {\bf{L\tilde H}}_1 } \right)}
\right|{\bf{L}}} \right\} & \nonumber \\
& \hspace*{-4cm} = \frac{{\int_{\mathcal{D}_{\rm ord}} {\det \left(
{e^{ - \gamma _j /\beta _i } } \right)\prod\nolimits_{i = 1}^q
{\left( {1 + \frac{{\rho a }}{{n_s }}\gamma _i } \right)\beta _i^{q
- n_s  - 1} \gamma _i^{n_s  - q} \det( \gamma_i^{j-1} ) d\gamma _1
\cdots d\gamma _q } } }}{\prod\nolimits_{i = 1}^q {\Gamma \left(
{n_s  - i + 1} \right)} \prod\nolimits_{i < j}^q {\left( {\beta _j -
\beta _i } \right)} }
\end{align}
where the integrals are taken over the region $\mathcal{D}_{\rm ord}
= \{\infty \geq \gamma_1 \geq \cdots \gamma_q \geq 0\}$.
Applying \cite[Corollary 2]{Chiani03}, (\ref{eq:upperbound1}) can be
evaluated in closed-form as follows
\begin{align} \label{eq:SecEq}
E\left\{ {\left. {\det \left( {{\bf{I}}_{n_s }  + \frac{{\rho
a}}{{n_s }}{\bf{\tilde H}}_1^\dag  {\bf{L\tilde H}}_1 } \right)}
\right|{\bf{L}}} \right\} = \frac{{\prod\nolimits_{i = 1}^q {\beta
_i^{q - n_s  - 1} } \det \left( {{\bf{\Xi }}_1 } \right)}}
{\prod\nolimits_{i = 1}^q {\Gamma \left( {n_s  - i + 1} \right)}
\prod\nolimits_{i < j}^q {\left( {\beta _j  - \beta _i } \right)} },
\end{align}
where $ {\bf{\Xi }}_1 $ is a $q \times q$ matrix with entries
\begin{align}
\left\{ {{\bf{\Xi }}_1 } \right\}_{m,n} = \beta _m^{n_s - q + n}
\left( {\Gamma \left( {n_s  - q + n} \right) + \frac{{\rho a}}{{n_s
}}\beta _m \Gamma \left( {n_s  - q + n + 1} \right)} \right).
\end{align}
Extracting common factors from the determinant in (\ref{eq:SecEq})
and simplifying yields the desired result.

\subsubsection{$q \ge n_s$ Case} In this case, we use the joint eigenvalue p.d.f.\ (\ref{eq:unorderedPDF}) to obtain
\begin{align}\label{eq:upperbound2}
E\left\{ {\left. {\det \left( {{\mathbf{I}}_{n_s} + \frac{{\rho
a}}{{n_s }}{\mathbf{\tilde H}}_1^\dag  {\mathbf{L\tilde H}}_1 }
\right)} \right|{\mathbf{L}}} \right\} &= E \left\{
\prod_{i=1}^{n_s} \left( 1 +
\frac{{\rho a}}{{n_s }} \gamma_i \right) \bigg|{\bf{L}}  \right\} \nonumber \\
&= \frac{{\int_{ \mathcal{D}_{\rm ord}} {\prod\nolimits_{i = 1}^{n_s
} {\left( {1 + \frac{{\rho a }} {{n_s }}\gamma _i } \right)\det
\left( {{\mathbf{\Delta }}_1 } \right) \det( \gamma_i^{j-1} )
d\gamma _1 \cdots d\gamma _{n_s } } } }} {\prod\nolimits_{i =
1}^{n_s} {\Gamma \left( {n_s  - i + 1} \right)} \prod\nolimits_{i <
j}^q {\left( {\beta _j - \beta _i } \right)} },
\end{align}
where $\gamma _1 , \ldots ,\gamma _{n_s }$ are the ordered
eigenvalues of ${\mathbf{\tilde H}}_1^\dag  {\mathbf{L\tilde H}}_1$,
$\mathbf{\Delta}_1$ is defined in (\ref{eq:Delta1Defn}), and the
integration region is $\mathcal{D}_{\rm ord} = \{\infty \geq
\gamma_1 \geq \cdots \gamma_{n_s} \geq 0\}$.
Applying \cite[Lemma 2]{Shin06}, (\ref{eq:upperbound2}) can
evaluated in closed-form as follows
\begin{align}\label{eq:detIplus}
E\left\{ {\left. {\det \left( {{\mathbf{I}}_{n_s} + \frac{{\rho
a}}{{n_s }}{\mathbf{\tilde H}}_1^\dag  {\mathbf{L\tilde H}}_1 }
\right)} \right|{\mathbf{L}}} \right\}= \frac{{\det \left(
{{\mathbf{\Xi }}_2 } \right)}} {\prod\nolimits_{i = 1}^{n_s} {\Gamma
\left( {n_s  - i + 1} \right)} \prod\nolimits_{i < j}^q {\left(
{\beta _j  - \beta _i } \right)} },
\end{align}
where $ {\bf{\Xi }}_2  = \bigl[ {{\bf{A}}_1 } \; \;  {{\bf{C}}_1 }
\bigr]$
is a $q \times q$ matrix with entries
\begin{align}
{\begin{array}{*{20}c}
   {\left\{ {{\bf{A}}_1} \right\}_{m,n}  = \beta _m^{n - 1} },  & {n = 1, \ldots ,q - n_s}  \\
\end{array}}
\end{align}
and
\begin{align}
\left\{ {{\bf{C}}_1 } \right\}_{m,n}  &= \beta _m^{n+q-n_s-1} \left(
{\Gamma \left( n \right) + \left( {\rho a/n_s } \right)\beta _m } {
\Gamma \left( n + 1 \right)} \right), \hspace*{0.6cm} n = 1, \ldots,
n_s.
\end{align}
Extracting common factors from ${\det \left( {{\mathbf{\Xi }}_2 }
\right)}$ and simplifying yields the desired result.

\subsection{Proof of Lemma \ref{lnexpdet}}\label{sec:Proof_lnexpdet}

To prove this lemma, it is convenient give a separate treatment for
the two cases, $q < n_s$ and $ q \ge n_s$.

\subsubsection{$q < n_s$ Case}

Now we need to calculate the expectation $ E\left\{ {\ln \det
\left( {{\bf{ L\tilde H}}_1 {\bf{\tilde H}}_1^\dag } \right)}
\right\}$. The moment generating function (m.g.f.) of $\ln \det \left( {{\bf{ L\tilde H}}_1
{\bf{\tilde H}}_1^\dag  } \right)$, conditioned on $\mathbf
{ L}$, is given by
\begin{align}\label{eq:mgf0}
\mathcal{M}_1 \left( {t\left| {\bf{ L}} \right.} \right) = E\left\{
{\left. {\det \left( {{\bf{ L\tilde H}}_1 {\bf{\tilde H}}_1^\dag  }
\right)^t } \right|{\bf{ L}}} \right\} \; .
\end{align}
Utilizing the joint p.d.f. of the eigenvalues $ \gamma _1 , \ldots , \gamma _q $ of $ {{\bf{ L\tilde
H}}_1 {\bf{\tilde H}}_1^\dag  }$, presented in
\cite{Alfano04,Chiani03}, we get
\begin{align}\label{eq:mgf01}
\mathcal{M}_1 \left( {t\left| {\bf{ L}} \right.} \right) =
\frac{{\int_{ \mathcal{F}_{\rm ord}} {\det \left( {e^{ -
\gamma _j / \beta _i } } \right)\prod\nolimits_{i = 1}^q
{ \gamma _i^{n_s  - q + t}  \beta _i^{q - n_s  - 1} }
\prod\nolimits_{i < j}^q {\left( {\gamma _j  - \gamma
_i } \right)} d \gamma _1  \cdots d \gamma _q }
}}{{\prod\nolimits_{i = 1}^q {\Gamma \left( {n_s  - i + 1} \right)}
\prod\nolimits_{i < j}^q {\left( { \beta _j  - \beta _i
} \right)} }}
\end{align}
where the integrals are taken over the region $\mathcal{F}_{\rm ord}
= \{\infty \geq \gamma_1 \geq \cdots  \gamma_q \geq 0\}$. Applying \cite[Corollary 2]{Chiani03},
(\ref{eq:mgf01}) can be further simplified as follows
\begin{align}\label{eq:mgf02}
\mathcal{M}_1 \left( {t\left| {\bf{ L}} \right.} \right) =
\frac{{\det \left( {{\bf{\Xi }}_3 } \right)}}{{\prod\nolimits_{i =
1}^q {\Gamma \left( {n_s  - i + 1} \right)} \prod\nolimits_{i < j}^q
{\left( {\beta _j  - \beta _i } \right)} }}
\end{align}
where ${{\bf{\Xi }}_3 }$ is a $q \times q$ matrix with entries
\begin{align}
\left\{ {{\bf{\Xi }}_3 } \right\}_{m,n}  =  \beta _m^{q - n_s  - 1}
\int_0^\infty  {e^{ - y/ \beta _m } y^{n_s  - q + t + n - 1} dy}  =
\beta _i^{t + n - 1} \Gamma \left( {n_s  - q + t + n} \right) \; .
\end{align}
From $\mathcal{M}_1 \left( {t\left| {\bf{ L}} \right.} \right)$, we
get
\begin{align}\label{eq:lndetmgf1}
E\left\{ {\left. {\ln \det \left( {{\bf{ L\tilde H}}_1 {\bf{\tilde H}}_1^\dag  } \right) } \right|{\bf{ L}}} \right\} &= \left. {\frac{d}{{dt}}\mathcal{M}_1 \left( {t\left| {\bf{ L}} \right.} \right)} \right|_{t = 0} \nonumber\\
& = \frac{{\sum\limits_{k = 1}^q {\det \left( {{\bf{\Sigma }}_k }
\right)} }}{{\prod\nolimits_{i = 1}^q {\Gamma \left( {n_s  - i + 1}
\right)} \prod\nolimits_{i < j}^q {\left( { \beta _j  -  \beta _i }
\right)} }}
\end{align}
where ${{\mathbf{\Sigma }}_k }$ is a $q \times q$ matrix whose
entries are
\begin{align}
\left\{ {{\bf{\Sigma }}_k } \right\}_{m,n}  = \left\{
{\begin{array}{*{20}c}
   { \beta _m^{n - 1} \Gamma \left( {n_s  - q + n} \right)}, & {n \ne k,}  \\
   { \beta _m^{n - 1} \Gamma \left( {n_s  - q + n} \right)\left[ {\psi \left( {n_s  - q + n} \right) + \ln  \beta _m } \right]}, & {n = k.}  \\
\end{array}} \right.
\end{align}
where $\psi(\cdot)$ is the digamma function.
Now, ${\det \left( {{\mathbf{\Sigma }}_k } \right)}$ can be further
simplified as
\begin{align}\label{eq:detomegaj}
\det \left( {{\bf{\Sigma }}_k } \right) = \det \left( {{\bf{\tilde
\Sigma }}_k } \right) \prod\limits_{k = 1}^{q } {\Gamma \left( n_s - q + k
\right)}
\end{align}
where ${{\bf{\tilde \Sigma }}_k }$ is a $q \times q$ matrix with
entries
\begin{align}
\left\{ {{\bf{\tilde \Sigma }}_k } \right\}_{m,n}  = \left\{
{\begin{array}{*{20}c}
   {\beta _m^{n - 1} }, & {n \ne k,}  \\
   {\beta _m^{n - 1} \left[ {\psi \left( {n_s  - q + n} \right) + \ln \beta _m } \right]}, & {n = k.}  \\
\end{array}} \right. \; \; \;
\end{align}
By using the multi-linear property of determinants, along with some
basic manipulations, we can write
\begin{align}\label{eq:detomegaj1}
\det \left( {{\bf{\tilde \Sigma }}_k } \right) = \psi \left( {n_s  -
q + k} \right) \det \left( \beta_i^{j-1} \right) + \det \left(
{{\bf{Y}}_k } \right) \; .
\end{align}
Substituting (\ref{eq:detomegaj}) and (\ref{eq:detomegaj1}) into
(\ref{eq:lndetmgf1}) and simplifying yields the desired result.


\subsubsection{$q \ge n_s $ Case}

We now evaluate the m.g.f. of ${\ln\det \left( {{\mathbf{\tilde
H}}_1^\dag {\mathbf{L\tilde H}}_1 } \right)}$, conditioned on
$\mathbf {L}$, which is given by
\begin{align}\label{eq:mgf1}
\mathcal{M}_2\left( {\left. t \right|{\mathbf{L }}} \right) =
E\left\{ {\left. {\det \left( {{\bf{\tilde H}}_1^\dag  {\bf{L\tilde
H}}_1 } \right)^t } \right|{\bf{L}}} \right\}.
\end{align}
Utilizing (\ref{eq:unorderedPDF}), (\ref{eq:mgf1}) can be expressed
as
\begin{align}\label{eq:mgf2}
\mathcal{M}_2\left( {\left. t \right|{\mathbf{L }}} \right) =
\frac{1}{{\prod\nolimits_{i = 1}^{n_s} {\Gamma \left( {n_s  - i + 1} \right)}
\prod\nolimits_{i < j}^{q} {\left( {\beta _j  - \beta _i } \right)}
}}\int_{\mathcal{D}_{\rm ord}} {\prod\limits_{i = 1}^{n_s} {\gamma
_i^t } \det \left( {{\bf{\Delta }}_2 } \right) \det( \gamma_i^{j-1}
) } d\gamma _1 , \ldots ,d\gamma _{n_s},
\end{align}
where $\mathcal{D}_{\rm ord} = \{\infty \geq \gamma_1 \geq \cdots
\gamma_{n_s} \geq 0\}$. Applying \cite[Lemma 2]{Shin06} yields
\begin{align}\label{eq:mgf3}
\mathcal{M}_2\left( {\left. t \right|{\mathbf{L }}} \right) =
\frac{{\det \left( {{\bf{\Xi }}_4 } \right)}}{{\prod\nolimits_{i =
1}^{n_s } {\Gamma \left( {n_s  - i + 1} \right)
} \prod\nolimits_{i < j}^{q }
{\left( {\beta _j  - \beta _i } \right)} }},
\end{align}
where $ {\bf{\Xi }}_4  = \bigl[ {{\bf{A}}_2 } \; \;  {{\bf{C}}_2 }
\bigr]$
%
is a $q \times q$ matrix with entries
\begin{align}
{\begin{array}{*{20}c}
   {\left\{ {\bf{A}}_2 \right\}_{m,n}  = \beta _m^{n - 1} }, \; \;  & {n = 1, \ldots ,q - n_s}  \\
\end{array}}
\end{align}
and
\begin{align}
\begin{array}{*{20}c}
   {\left\{ {\bf{C}}_2 \right\}_{m,n}  = \Gamma \left( t + n \right)\beta _m^{q - n_s + t + n - 1} }, \; \;  & {n = 1, \ldots ,n_s}  \\
\end{array}
\end{align}
From the m.g.f.\ (\ref{eq:mgf3}), we can then obtain
\begin{align}\label{eq:mgfcondL}
E\left\{ {\left. {\ln \det \left( {{\mathbf{\tilde H}}_1^\dag
{\mathbf{L\tilde H}}_1 } \right)} \right|{\mathbf{L}}} \right\} &=
\left. {\frac{d} {{dt}}\mathcal{M}_2\left( {\left. t
\right|{\mathbf{L}}} \right)} \right|_{t = 0}\nonumber\\
& = \frac{{\sum\limits_{k = q - n_s + 1 }^q {\det \left(
{{\mathbf{\Omega }}_k } \right)} }} {{\prod\nolimits_{i = 1}^{n_s }
{\Gamma \left( {n_s  - i + 1} \right)} \prod\nolimits_{i < j}^{q}
{\left( {\beta _j  - \beta _i } \right)} }}
\end{align}
where ${{\mathbf{\Omega }}_k }$ is a $q \times q$ matrix with
entries
\begin{align}
\left\{ {{\bf{\Omega }}_k } \right\}_{m,n}  = \left\{
{\begin{array}{*{20}c}
   {\beta _m^{n - 1} }, & {n \ne k,} & {n = 1, \ldots ,q - n_s, }  \\
   {\Gamma \left( n_s - q + n  \right)\beta _m^{n - 1} }, & {n \ne k,} & {n = q - n_s  + 1, \ldots ,q_s, }  \\
   {\beta _m^{n - 1} \Gamma \left( n_s - q + n \right)\left[ {\psi \left( n_s - q + n \right) + \ln \beta _m } \right]}, & {n = k.}  \\
\end{array}} \right. \;
\end{align}
By using the multi-linear property of determinants, along with some
basic manipulations, we can obtain the desired result.

\subsubsection{$q = s$ Case}
In this case, starting with (\ref{eq:lnexpdet}), we can write the
determinant summation over $k$ as follows
\begin{align}
\sum\limits_{k = 1}^q {\det \left( {{\bf{Y}}_k } \right)}  &=
\sum\limits_{k = 1}^q {\sum\limits_{ \left\{ \alpha \right\} } {{\rm
sgn}(\alpha) \left[ {\prod\limits_{i = 1}^q {\beta _{\alpha \left( i
\right)}^{i - 1} } } \right]} } \ln \beta _{\alpha \left( k \right)}
\end{align}
where the second summation is over all permutations $\alpha = \{
\alpha \left( 1 \right), \ldots ,\alpha \left( q \right) \}$ of the
set $\{ 1, \ldots, q \}$, with ${\rm sgn}(\alpha)$ denoting the sign
of the permutation. We can further write
\begin{align} \label{eq:lnexpdet_sp2}
\sum\limits_{k = 1}^q {\det \left( {{\bf{Y}}_k } \right)}  &=
\sum\limits_{ \left\{ \alpha \right\}  } { {\rm sgn}(\alpha)
  \left[
{\prod\limits_{i = 1}^q {\beta _{\alpha \left( i \right)}^{i - 1} }
} \right]} \sum\limits_{k = 1}^q {\ln
\beta _{\alpha \left( k \right)} }\nonumber\\
& = \ln \det \left( { {\rm diag} \left\{ {\beta _i } \right\}_{i =
1}^q } \right){{\prod\nolimits_{i < j}^q {\left( {\beta _j - \beta
_i } \right)} }} \nonumber \\
& = \ln \det \left( \mathbf{L} \right){{\prod\nolimits_{i < j}^q
{\left( {\beta _j - \beta _i } \right)} }} \; .
\end{align}
Substituting (\ref{eq:lnexpdet_sp2}) into (\ref{eq:lnexpdet}) yields
the final result.

\subsection{Proof of Theorem \ref{lnexpdetNew}}\label{sec:Proof_lnexpdetNew}


We start with \textit{Lemma \ref{lnexpdet}} and remove the
conditioning on $\mathbf{L}$ by using \textit{Lemma
\ref{jointpdf_beta}} as follows
\begin{align}\label{eq:expunconL}
& E\left\{ {\ln \det \left( {\bf{\Phi }} \right)} \right\} =
%
\sum\limits_{k = 1}^s {\psi \left( {n_s -s + k } \right)}
\nonumber\\
&{~~~~}+ {\mathcal{K}} \int_{0 < \beta _1 < \cdots < \beta _q  \le
1/a} { \det \left(\beta_i^{j-1} \right)  \prod\limits_{i = 1}^q
{g\left( {\beta _i } \right)} \sum\limits_{k = q - n_s + 1}^q {\det
\left( {{\mathbf{Y }}_k } \right)} d\beta _1 \cdots d\beta _q },
\end{align}
where
\begin{align}
g\left( u \right) = \frac{{u^{p - q} e^{ - u/\left( {1 - au}
\right)} }} {{\left( {1 - au} \right)^{p + q} }} \; \; .
\end{align}
Using \cite[Lemma 2]{Shin06}, these integrals can be simplified to
give
\begin{align} \label{eq:UncondLnDet}
E\left\{ {\ln \det \left( {\bf{\Phi }} \right)} \right\}
=
\sum\limits_{k = 1}^s {\psi \left( {n_s -s + k } \right)}  +
{\mathcal{K}} \sum\limits_{k = q - n_s + 1}^q {\det \left( {{
{{\bf{\tilde W}}}}_k } \right)},
\end{align}
where $ {{{{\bf{\tilde W}}}}_k }$ is a $q \times q$ matrix with
entries
\begin{align}\label{eq:tildeW}
\left\{ {{\bf{\tilde W}}_k } \right\}_{m,n}  = \left\{
{\begin{array}{*{20}c}
   {\int_0^{1/a} {\frac{{u^{p - q + m + n  - 2} }}{{\left( {1 - au} \right)^{p + q} }}} e^{ - \frac{u}{{1 - au}}} du}, & {n \ne k,}  \\
   {\int_0^{1/a} {\frac{{u^{p - q + m + n  - 2} }}{{\left( {1 - au} \right)^{p + q} }}}  e^{ - \frac{u}{{1 - au}}} \ln u du}, & {n = k.}  \\
\end{array}} \right. \; \;
\end{align}
For the case $n \neq k$, a closed-form expression is given in
(\ref{eq:ykij1}).  For the case $n = k$, we utilize \cite[Eq.
4.358.5]{Gradshteyn00} and \cite[Eq. 47]{Shin03}, to obtain
\begin{align}\label{eq:integral2}
& \int_0^{1/a} {\frac{{u^{p - q + m + n - 2} }} {{\left( {1 - au}
\right)^{p + q} }}e^{ - \frac{u} {{1 - au}}} \ln u} du \nonumber \\
& \hspace*{2cm} = \int_0^\infty  {t^{p - q + m + n - 2} \left( {1 +
at} \right)^{2q - m - n} e^{ - t} \left[ {\ln t - \ln \left( {1 +
at} \right)} \right]dt}\nonumber\\
& \hspace*{2cm} =  \sum\limits_{i = 0}^{2q - m - n} {a^{2q - m - n -
i} \binom{2q - m - n}{i} \int_0^\infty  {t^{p + q - i - 2} e^{ - t}
\left[ {\ln t - \ln \left( {1 + at} \right)} \right]dt} } \nonumber
\\
& \hspace*{2cm} = \sum\limits_{i = 0}^{2q - m - n} a^{2q - m - n -
i} \binom{2q - m - n}{i} \Gamma \left( {p + q - i - 1} \right)
\nonumber \\
& \hspace*{5cm} \times \left[ {\psi \left( {p + q - i - 1} \right) -
e^{1/a} \sum\limits_{l = 0}^{p + q - i - 2} {E_{l + 1} \left(
{\frac{1} {a}} \right)} } \right] \; \; .
\end{align}
Substituting (\ref{eq:ykij1}) and (\ref{eq:integral2}) into
(\ref{eq:tildeW}) and (\ref{eq:UncondLnDet}) yields (\ref{eq:lndetsp1}).

When $q = s$, we start with (\ref{eq:lndetsp1}) and remove the
conditioning on ${\bf {L}}$ as follows
\begin{align}
E\left\{ {\ln \det \left( {\bf{\Phi }} \right)} \right\} =
\sum\limits_{k = 1}^q {\psi \left( {n_s  - q + k} \right)} +
q\int_0^\infty  f\left( {\bar \beta} \right) \ln {\bar \beta} d{\bar
\beta} \;
\end{align}
where $f\left( {\bar \beta}  \right)$ denotes the unordered
eigenvalue p.d.f.\ of ${\bf {L}}$ (i.e.\ p.d.f.\ of a
randomly-selected ${\bar \beta} \in \{\beta _1, \cdots , \beta _q
\}$). Substituting this p.d.f.\ from (\ref{eq:Unordered_Beta}) and
integrating using (\ref{eq:integral2}), we obtain the desired
result.

\section{Ergodic Capacity Proofs}

\subsection{Proof of Eq.\ (\ref{eq:exact_sp1})}\label{sec:Proof_exactsp1}
When $n_r \to \infty$, the ergodic capacity expression
(\ref{eq:ergcapa}) can be expressed as follows
\begin{align}\label{eq:caparay}
\lim_{n_r \to \infty }  {C\left( \rho  \right)} =
\frac{1}{2}E\left\{ {\log _2 \det \left( {{\bf{I}}_{n_s }  +
\frac{{\rho \alpha }}{{n_s \left( {1 + \rho } \right)}}{\bf{\tilde
H}}_1^\dag  {\bf{\tilde L}}_1 {\bf{\tilde H}}_1 } \right)} \right\}
\end{align}
where $ {\bf{\tilde L}}_1  = {\rm diag}\left\{ {\lambda _i^2 /\left(
{n_r \left( {1 + a\lambda _i^2 } \right)} \right)} \right\}$. Noting
that $q = n_d$, by the Law of Large Numbers we have
\begin{align}
\lim_{n_r \to \infty } {\frac{{{\bf{H}}_2 {\bf{H}}_2^\dag }}{{n_r
}}}   = {\bf{I}}_{n_d}
\end{align}
which implies that
\begin{align}\label{eq:centrallimit}
  \lim_{n_r \to \infty }  { {\frac{{\lambda _i^2 }}{{n_r }}} = 1} \;, & \; \hspace*{0.5cm}  {i = 1, \ldots,
  n_d} \; .
\end{align}
Recalling (\ref{eq:aDefn}), application of (\ref{eq:centrallimit})
in (\ref{eq:caparay}) yields
\begin{align} \label{eq:largeNRCap}
\lim_{n_r \to \infty } {C\left( \rho  \right)} = \frac{1}{2}E\left\{
{\log _2 \det \left( {{\bf{I}}_{n_s }  + \frac{{\rho \alpha
}}{{n_s(1 + \rho + \alpha)}}{\bf{H}}^\dag {\bf{H}} } \right)}
\right\},
\end{align}
where ${\bf{H}}$ is an $n_d \times n_s$ i.i.d.\ Rayleigh fading MIMO
channel matrix. Applying the identity (\ref{eq:DetProp}) to
(\ref{eq:largeNRCap}) yields the desired result.

\subsection{Proof of Eq.\ (\ref{eq:exact_sp2})}\label{sec:Proof_exactsp2}
Using (\ref{eq:DetProp}), the ergodic capacity expression
(\ref{eq:ergcapa}) can be alternatively written as
\begin{align}\label{eq:exactsp2capa}
C\left( \rho  \right) = \frac{1}{2}E\left\{ {\log _2 \det \left(
{{\bf{I}}_q  + \frac{{\rho a}}{{n_s }}{\bf{\tilde H}}_1 {\bf{\tilde
H}}_1^\dag  {\bf{L}}} \right)} \right\} \; .
\end{align}
By the Law of Large Numbers we have
\begin{align}
\lim_{n_s \to \infty} \frac{{{\bf{\tilde H}}_1 {\bf{\tilde
H}}_1^\dag }}{{n_s }} \to {\bf{I}}_q \;
\end{align}
and hence (\ref{eq:exactsp2capa}) reduces to
\begin{align}\label{eq:exactsp2capa1}
\lim_{n_s \to \infty} {C\left( \rho  \right)} = \frac{1}{2}E\left\{
{\log _2 \det \left( {{\bf{I}}_q + \rho a{\bf{L}}} \right)} \right\}
\; .
\end{align}
Substituting (\ref{eq:Ldefinition}) into (\ref{eq:exactsp2capa1}),
after some simple manipulations we easily obtain
\begin{align} \label{eq:LargeNRCap2}
\lim_{n_s \to \infty} {C\left( \rho  \right)} = \frac{1}{2}E\left\{
{\log _2 \det \left( {{\bf{I}}_q  + \left( {\rho + 1}
\right)a{\bf{H}}_2^\dag  {\bf{H}}_2 } \right)} \right\} -
\frac{1}{2}E\left\{ {\log _2 \det \left( {{\bf{I}}_q  +
a{\bf{H}}_2^\dag  {\bf{H}}_2 } \right)} \right\} \; .
\end{align}
Substituting (\ref{eq:aDefn}) into (\ref{eq:LargeNRCap2}) and
applying the identity (\ref{eq:DetProp}) yields the desired result.


\subsection{Proof of Theorem \ref{exact_sp3}}\label{sec:Proof_exactsp3}

We will consider the following cases separately; namely, $q < n_s$ and $
q \ge n_s$.

\subsubsection{$q < n_s$ Case}
We start by applying the identity (\ref{eq:DetProp}) to obtain the ergodic capacity, in the high SNR regime, as
follows
\begin{align}\label{eq:exactsp3_1} \left. {C\left( \rho
\right)} \right|_{\alpha ,\rho  \to \infty ,\alpha /\rho  = \beta }
= \frac{1}{2}\left[ {q\log _2 \rho  - q\log _2 \left( {\frac{\beta
}{{n_s n_r }}} \right) + E\left\{ {\log _2 \det \left( {{\bf{\bar
LH}}_1 {\bf{\tilde H}}_1^\dag  } \right)} \right\}} \right] \; .
\end{align}
The high SNR slope can be calculated as
\begin{align}
\begin{array}{*{20}c}
   {S_\infty   = \frac{q}{2}} & {\rm bit/s/Hz\left( {3dB} \right)}  \\
\end{array}.
\end{align}
Applying (\ref{eq:highsnroffset}), the high SNR power offset is given by
\begin{align}\label{eq:highsnroffset1}
\mathcal{L}_\infty   = \frac{q}{2}\log _2 \left( {\frac{\beta }{{n_s
n_r }}} \right) - \frac{1}{2}E\left\{ {\log _2 \det \left(
{{\bf{\bar L \tilde H}}_1 {\bf{\tilde H}}_1^\dag  } \right)}
\right\} \; .
\end{align}
Invoking \textit{Theorem \ref{lnexpdetNew}} and simplifying yields the high SNR power offset
for case $q < n_s$.

The proof of (\ref{eq:poweroffset_sp0}) follows along similar lines
to that used above, but in this case invoking \textit{Theorem
\ref{lnexpdetNew}} in place of \textit{Theorem \ref{exact_sp3}}.


\subsubsection{$q \ge n_s$ Case} In the high SNR regime, the ergodic capacity can be approximated as
\begin{align}\label{eq:exactsp3_2}
\left. {C\left( \rho  \right)} \right|_{\alpha ,\rho  \to \infty
,\alpha /\rho  = \beta }  = \frac{1}{2}\left[ {n_s \log _2 \left(
\rho  \right) - n_s \log _2 \left( {\frac{\beta }{{n_s n_r }}}
\right) + E\left\{ {\log _2 \det \left( {{\bf{\tilde H}}_1^\dag
{\bf{\bar L\tilde H}}_1 } \right)} \right\}} \right] \; .
\end{align}
In this case, the high SNR slope is
\begin{align}
   {S_\infty   = \frac{n_s}{2}}  \hspace*{0.5cm} {\rm bits/s/Hz\left( {3dB} \right)}
\end{align}
and the high SNR power offset can be obtained as
\begin{align}\label{eq:highsnroffset2}
\mathcal{L}_\infty   = \frac{n_s}{2}\log _2 \left( {\frac{\beta
}{{n_s n_r }}} \right) - \frac{1}{2}E\left\{ {\log _2 \det \left(
{{\bf{\tilde H}}_1 {\bf{\bar L}} {\bf{\tilde H}}_1^\dag  } \right)}
\right\} \; .
\end{align}
The result follows by applying \textit{Theorem \ref{lnexpdetNew}}.

\subsection{Proof of Corollary \ref{upperbound_sp4}}\label{sec:Proof_Upperboundsp4}
Substituting $n_r = 1$ into (\ref{eq:upperbound}) yields
\begin{align}
C_U^{n_r  = 1} (\rho)  = \frac{1}{2}\log _2 \left( {a^{ - n_d }
\left[ {U\left( {n_d ,n_d  + 1,\frac{{1 + \rho }}{\alpha }} \right)
+ \rho n_d U\left( {n_d  + 1,n_d  + 1,\frac{{1 + \rho }}{\alpha }}
\right)} \right]} \right) \; .
\end{align}
Using the following properties of the confluent
hypergeometric function of the second kind \cite{Gradshteyn00}:
\begin{align}
U\left( {a,a,z} \right) = e^z z^{1 - a} E_a \left( z \right)
\end{align}
and
\begin{align}
U\left( {a,a + 1,z} \right) = z^{ - a},
\end{align}
we get the final expression for $C_U^{n_r  = 1}(\rho)$ in
(\ref{eq:upboundsp4}). Note that $C_U^{n_r  = 1}(\rho) $ can be
lower and upper bounded as
\begin{align}
C_{U,1}^{n_r  = 1}(\rho)  < C_U^{n_r  = 1}(\rho)  \le C_{U,2}^{n_r =
1}(\rho) ,
\end{align}
with
\begin{align} \label{eq:U1}
C_{U,1}^{n_r  = 1}(\rho)  = \frac{1}{2}\log _2 \left( {1 + \rho n_d
\frac{1}{{\frac{{1 + \rho }}{\alpha } + n_d  + 1}}} \right)
\end{align}
and
\begin{align} \label{eq:U2}
C_{U,2}^{n_r  = 1}(\rho)  = \frac{1}{2}\log _2 \left( {1 + \rho n_d
\frac{1}{{\frac{{1 + \rho }}{\alpha } + n_d }}} \right),
\end{align}
where we have used the inequality \cite[Eq. 5.1.19]{Abramowitz74}.
Taking $n_d \to \infty$, we see that both (\ref{eq:U1}) and
(\ref{eq:U2}) converge to the same limit in (\ref{eq:Cusp4}). Taking
$\alpha \to \infty$ and ultilizing \cite[Eq. 5.1.23]{Abramowitz74}, we obtain (\ref{eq:Cusp41}).

\subsection{Proof of Corollary \ref{upperbound_sp1}}\label{sec:Proof_Upperboundsp1}

Note that when $ \rho  \to \infty $, then $ a \to 0 $.  Therefore,
we apply the following asymptotic first-order expansion for the
confluent hypergeometric function \cite{Abramowitz74}
\begin{align}
\begin{array}{*{20}c} \label{eq:HypGeoExpand}
   {U\left( {c,b,z} \right) = z^{ - c}  + o\left( 1 \right),} & {z \to \infty }
\end{array} \;
\end{align}
to yield the desired result.

\subsection{Proof of Theorem \ref{lowerbound}}\label{sec:Proof_Lowerbound}

We will use the lower bound derived in \cite[Theorem 1]{Oyman03} and
consider the following cases separately; namely, $q < n_s$ and $ q
\ge n_s$.

\subsubsection{$q < n_s$ Case}

Applying the (\ref{eq:DetProp}) and \cite[Theorem 1]{Oyman03} to (\ref{eq:ergcapa}), we lower bound the
ergodic capacity, conditioned on $\bf{L}$, as follows
\begin{align}\label{eq:lowerbound1}
C\left( \rho  \right) \ge q\log _2 \left( {1 + \frac{{\rho \alpha
}}{{n_s n_r }}\exp \left( {\frac{1}{q}E\left\{ {\ln \det \left(
{{\bf{L\tilde H}}_1 {\bf{\tilde H}}_1^\dag  } \right)} \right\}}
\right)} \right) \; .
\end{align}
Now, using \textit{Theorem \ref{lnexpdetNew}} yields the desired
result.

\subsubsection{$q \ge n_s$ Case} In this case, the lower bound can
be written as
\begin{align}\label{eq:lowerbound2}
C\left( \rho  \right) \ge n_s\log _2 \left( {1 + \frac{{\rho \alpha
}}{{n_s n_r }}\exp \left( {\frac{1}{n_s}E\left\{ {\ln \det \left(
{{\bf{\tilde H}}_1 \bf{L}{\bf{\tilde H}}_1^\dag  } \right)}
\right\}} \right)} \right) \; .
\end{align}
Again, we use \textit{Theorem \ref{lnexpdetNew}} to obtain the
desired result. 

\subsection{Proof of Corollary \ref{lowerbound_sp3}}\label{sec:Proof_Lowerboundsp3}

When $n_s \to \infty$, $\psi \left( {n_s  - q + k} \right)$ can be
approximated as \cite[Eq. 6.3.18]{Abramowitz74}
\begin{align}\label{eq:psiapproximate}
\left. {\psi \left( {n_s  - q + k} \right)} \right|_{n_s  \to \infty }  &\approx \ln \left( {n_s  - q + k} \right)\nonumber\\
&\approx \ln n_s \; \; .
\end{align}
Substituting (\ref{eq:psiapproximate}) into (\ref{eq:lowerbound})
yields the desired result.

\subsection{Proof of Corollary \ref{lowerbound_sp4}}\label{sec:Proof_Lowerboundsp4}
Taking $n_s \to \infty$ and using \cite[Eq. 6.3.18]{Abramowitz74}, we get (\ref{eq:lsp4ns}).

For the case $n_d \to \infty$, we first apply \cite[Eq. 5.1.19]{Abramowitz74} and \cite[Eq. 8.365.3]{Gradshteyn00} to obtain the following approximation
\begin{align}\label{eq:lbsp4app1}
\exp \left( {\frac{{1 + \rho }}{\alpha }} \right)\sum\limits_{l =
1}^{n_d - 1} {E_{l + 1} \left( {\frac{{1 + \rho }}{\alpha }} \right)} \approx \psi \left( {n_d  + \frac{{1 + \rho }}{\alpha }} \right) - \psi \left( {\frac{{1 + \rho }}{\alpha }} \right) \; .
\end{align}
Furthermore, substituting (\ref{eq:lbsp4app1}) into (\ref{eq:lowerboundsp4}) and using \cite[Eq. 8.365.5]{Gradshteyn00} and \cite[Eq. 6.3.18]{Abramowitz74} yields (\ref{eq:lsp4nd}).

Now consider the case $\alpha \to \infty$. Utilizing the recurrence relation for the exponential integral
\cite[Eq. 5.1.14]{Abramowitz74}, the summation in (\ref{eq:lowerboundsp4}) can be alternatively written as
\begin{align}\label{eq:recurrsum}
&\exp \left( {\frac{{1 + \rho }}{\alpha }} \right)\sum\limits_{l =
1}^{n_d - 1} {E_{l + 1} \left( {\frac{{1 + \rho }}{\alpha }} \right)}
\nonumber \\
& \hspace*{2cm} = \exp \left( {\frac{{1 + \rho }}{\alpha }} \right)E_1 \left( {\frac{{1 + \rho }}{\alpha }} \right) + \sum\limits_{l = 1}^{n_d - 1} {\frac{1}{l}\left[ {1 - \frac{{1 + \rho }}{\alpha }\exp \left( {\frac{{1 + \rho }}{\alpha }} \right)E_l \left( {\frac{{1 + \rho }}{\alpha }} \right)} \right]} \nonumber \\
& \hspace*{2cm} = \exp \left( {\frac{{1 + \rho }}{\alpha }}
\right)\left[ {E_1 \left( {\frac{{1 + \rho }}{\alpha }} \right) -
\sum\limits_{l = 1}^{n_d - 1} {\left( {\frac{{1 + \rho }}{{\alpha l}}}
\right)E_l \left( {\frac{{1 + \rho }}{\alpha }} \right)} } \right]+
\psi \left( {n_d } \right) + \gamma 
\end{align}
where $\gamma  = 0.577215 \ldots $ is the Euler's constant. Note
that, in deriving (\ref{eq:recurrsum}), we have applied the
definition of the digamma function \cite[Eq. 8.365.4]{Gradshteyn00}.
Using the series expansion given in \cite[Eq. 5.1.11]{Abramowitz74},
when $\alpha \to \infty$, we get
\begin{align}\label{eq:recurrsum3}
\left. {E_1 \left( {\frac{{1 + \rho }}{\alpha }} \right)}
\right|_{\alpha  \to \infty }  \to  - \gamma  - \ln \left( {\frac{{1 +
\rho }}{\alpha }} \right) \;
\end{align}
and therefore
\begin{align}\label{eq:recurrsum4}
\left. {\sum\limits_{l = 1}^{n_d - 1} {\left( {\frac{{1 + \rho
}}{{\alpha l}}} \right)E_l \left( {\frac{{1 + \rho }}{\alpha }}
\right)} } \right|_{\alpha  \to \infty }  \to 0 \; .
\end{align}
Applying (\ref{eq:recurrsum})--(\ref{eq:recurrsum4}) in
(\ref{eq:lowerboundsp4}) yields the desired result.

\subsection{Proof of Corollary \ref{lowerbound_sp1}}\label{sec:Proof_Lowerboundsp1}
Using the following approximation \cite{Abramowitz74}
\begin{align}
\begin{array}{*{20}c}
   {E_v \left( z \right) \approx \frac{1}{z}e^{ - z} \left( {1 + o\left( {\frac{1}{z}} \right)} \right)} & {\left| z \right| \to \infty }
   \\
\end{array} \; ,
\end{align}
$\varsigma_{m+n}(a)$
can be approximated as
\begin{align}
\left. \varsigma_{m+n}(a)
\right|_{\rho  \to \infty }
\approx \Gamma \left( {\tau  - 1} \right)\psi \left( {\tau  - 1}
\right) \; ,
\end{align}
which leads to the final result.




\end{document}